\g@addto@macro{\maketitle}{\@thanks}
\newtheorem{thm}{Theorem}[section]
\newtheorem{cor}[thm]{Corollary}
\newtheorem{lem}[thm]{Lemma}
\newtheorem{Def}[thm]{Definition}
\newtheorem{obs}[thm]{Observation}
\newtheorem{claim}[thm]{Claim}
\newtheorem{remark}[thm]{Remark}
\newcommand{\E}{\mathbb{E}}%
\newcommand{\Ber}{\mathrm{Ber}}
\newcommand{\eps}{\epsilon}%
\newcommand{\algotype}{sound\xspace}
\renewcommand{\algorithmiccomment}[1]{\bgroup\hfill$\rhd$~#1\egroup}
\newcounter{note}[section]
\newcommand{\calA}{\mathcal{A}}
\newcommand{\calM}{\mathcal{M}}
\newcommand{\ratio}{0.531}
\newcommand{\realratio}{0.536}
\newenvironment{wrapper}[1]
{
	\begin{center}
		\begin{minipage}{\linewidth}
			\begin{mdframed}[hidealllines=true, backgroundcolor=gray!20, leftmargin=0cm,innerleftmargin=0.4cm,innerrightmargin=0.4cm,innertopmargin=0.4cm,innerbottommargin=0.4cm,roundcorner=10pt]
				#1}
			{\end{mdframed}
		\end{minipage}
	\end{center}
} 
\renewcommand{\paragraph}[1]{\medskip\noindent\textbf{#1}}
\author[1]{Niv Buchbinder\thanks{Supported in part by by Israel Science Foundation grant 2233/19 and United States - Israel Binational Science Foundation grant 2018352.}}
\author[2]{Joseph (Seffi) Naor\thanks{Supported in part by by Israel Science Foundation grant 2233/19 and United States - Israel Binational Science Foundation grant 2018352.}}
\author[3]{David Wajc\thanks{Work done in part while the author was at Carnegie Mellon University and Stanford University.}}
\affil[1]{Tel Aviv University, niv.buchbinder@gmail.com}
\affil[2]{Technion, naor@cs.technion.ac.il}
\affil[3]{Google Research, david.wajc@gmail.com}
\date{}
\begin{document}
	
	\pagenumbering{gobble}
	
	\title{Lossless Online Rounding for Online Bipartite Matching \\(Despite its Impossibility)}

	\maketitle
	
	\begin{abstract}
	For numerous online bipartite matching problems, such as edge-weighted matching and matching under two-sided vertex arrivals, the state-of-the-art fractional algorithms outperform their randomized integral counterparts. This gap is surprising, given that the bipartite fractional matching polytope is integral, and so lossless rounding is possible. This gap was explained by Devanur et al.~(SODA'13), who showed that \emph{online} lossless rounding is impossible.

	\smallskip 
	
	Despite the above, we initiate the study of lossless online rounding for online bipartite matching problems. Our key observation is that while lossless online rounding is impossible \emph{in general}, randomized algorithms induce fractional algorithms of the same competitive ratio which by definition are losslessly roundable online. This motivates the addition of constraints that decrease the ``online integrality gap'', thus allowing for lossless online rounding. We characterize a set of non-convex constraints which allow for such lossless online rounding, and better competitive ratios than yielded by deterministic algorithms.

	\smallskip
	
	As applications of our lossless online rounding approach, we obtain two results of independent interest: (i) a doubly-exponential improvement, and a sharp threshold for the amount of randomness (or advice) needed to outperform deterministic online (vertex-weighted) bipartite matching algorithms, and (ii) an optimal semi-OCS, matching a recent result of Gao et al.~(FOCS'21) answering a question of Fahrbach et al.~(FOCS'20). 
	\end{abstract}	
	\newpage 
	\tableofcontents
	\newpage
	\pagenumbering{arabic}
	\section{Introduction}

Real-time decision-making is ubiquitous in real-world domains, from ride hailing, to online dating, to Internet advertising.
The core products of these sectors are solutions to variants of the classic online bipartite matching problem of \citet{karp1990optimal}.
Here, nodes of one side of a bipartite graph are given, and nodes on the opposite side arrive one by one, and must be matched (or not) immediately and irrevocably upon arrival.
For example, in Internet advertising, offline and online nodes correspond to advertisers and opportunities to display an ad, respectively \cite{mehta2007adwords}.
In general, the pervasiveness of web- and mobile-based user-facing apps provides an ever-increasing supply of online problems, and a growing demand for general methods for tackling such problems.

\smallskip

Fittingly, a concentrated and highly-successful effort has been dedicated to devising general techniques for the design and analysis of online algorithms.
Mirroring its central role in offline optimization \cite{lovasz2009matching}, the matching problem has been influential in this process, too, inspiring new general online optimization techniques, including the randomized primal-dual method \cite{devanur2013randomized}, and online correlated selection (OCS) \cite{fahrbach2020edge} (more on this below).

\smallskip

One \emph{old} design pattern for online optimization is online rounding:
designing an online algorithm for a fractional relaxation of the problem, and then randomly rounding the fractional solution online.
This approach has played a pivotal role in the resolution of many fundamental online minimization problems (see \Cref{sec:related}), but has had limited success for online bipartite matching problems.
This limited success is surprising, given that competitive fractional online algorithms are known for many such problems  \cite{buchbinder2007online,feldman2009online2,kalyanasundaram2000optimal,wang2015two}, and moreover fractional bipartite matchings can be rounded to integrality with no loss in the objective. Why, then, do we not have equally competitive \emph{integral} online algorithms for such problems as edge-weighted matching \cite{fahrbach2020edge,gao2021improved,blanc2021multiway}, matching under two-sided arrivals \cite{gamlath2019online} and AdWords without the small bids assumption \cite{huang2020adwords}?

\smallskip

A partial answer is that the myriad lossless rounding algorithms for fractional bipartite matching
(e.g., \cite{ageev2004pipage,gandhi2006dependent,goel2013perfect},\cite[6.5.11]{grotschel2012geometric}) 
are all \emph{offline} in nature, and seem difficult to implement in online settings.
A more complete answer was given by \citet{devanur2013randomized}, who noted that lossless online rounding is not only more difficult than its offline counterpart---it is \emph{impossible} (see \Cref{sec:impossible}).
Therefore, all prior rounding-based online matching algorithms \cite{cohen2018randomized,gamlath2019online,saberi2021greedy,papadimitriou2021online,gao2021improved} use \emph{lossy} rounding, attaining a competitive ratio strictly worse than the fractional algorithms on which they are based.

\smallskip

As we show, all hope is not lost, and the truth is more nuanced.

\subsection{An Overlooked Research Question}
We initiate the study of lossless online rounding for maximization problems, focusing on the classic online bipartite matching problem and its vertex-weighted generalization \cite{aggarwal2011online}. (We do so despite the preceding discussion suggesting the futility of this endeavor.)

\smallskip

Our starting point is the well-known observation that any randomized online algorithm induces a fractional algorithm with the same competitive ratio, by setting the fractional values of decision variables to be their expected value under the randomized algorithm.
This observation is frequently used in lower bounds (i.e., impossibility results), since it implies that lower bounds on fractional algorithms' competitive ratios naturally transfer to randomized algorithms.
Breaking with this tradition, we use this observation to obtain \emph{upper bounds} (i.e., algorithms).

\smallskip

Driving our work is the following corollary of the above observation:
any randomized online algorithm $\cal{A}$ for a problem $\Pi$  induces a fractional online algorithm with the same competitive ratio and which is also losslessly roundable online---just run algorithm $\cal{A}$!
This holds even if $\Pi$ does not allow for lossless online rounding in general.

\smallskip

We conclude that the aforementioned impossibility result of \cite{devanur2013randomized} only shows that lossless online rounding is impossible when relying only on the natural fractional constraints imposed for the offline problem. This holds even though these constraints induce a polytope with no integrality gap, which is losslessly roundable offline.\footnote{Recall that the \emph{integrality gap} of a polytope $\mathcal{P}\subseteq \mathbb{R}^m$, defined as $\max_{w\in \mathbb{R}^m} \frac{\max\{w\cdot x \mid x\in \mathcal{P}\}}{\max\{w\cdot x \mid x\in \mathcal{P}\cap \mathbb{Z}^m\}}$, is the highest multiplicative gap between the objective of fractional and integral points in the polytope $\mathcal{P}$ over all linear objectives $\vec{w}$. A polytope with integrality gap of one is often said to have no (non-trivial) integrality gap}.
Similarly to the use of additional constraints to decrease the integrality gap in offline settings (e.g., the influential work of \citet{edmonds1965maximum}),
we propose adding additional constraints to the problem that reduce the ``online integrality gap".
With this perspective in mind, the design of randomized algorithms can now be reduced to the following two-step program: (1) design a fractional online algorithm which \emph{is} losslessly roundable online, and then (2) round it.
The search for the best possible competitive ratio thus motivates the following question.

\begin{quote}
	\centering \emph{Which fractional online bipartite matching algorithms are losslessly roundable online?}
\end{quote}

\subsection{Our Contributions}\label{sec:contributions}

We present a set of (non-convex) constraints which allows for lossless online rounding for bipartite matching problems (and high competitive ratios). From this, we obtain new online matching algorithms, a resolution to an open problem concerning randomness and advice complexity of online matching, and a systematic method of designing algorithms for online correlated selection.

\paragraph{A family of roundable algorithms.}
Describing the fractional algorithms we consider requires some notation, which we now provide.
The input to the online matching problem is a bipartite graph. Each offline (i.e., left-hand-side) node $i$ has weight $w_i>0$ (in the unweighted version $w_i=1$ for all $i$).
Initially, only the $n$ offline nodes are known.
At time $t=1,2,\dots$, online node $t$ arrives.
A fractional online matching algorithm must assign each edge $(i,t)\in E$ upon arrival, immediately and irrevocably, a value $x_{i,t}\in [0,1]$, such that each node $v$ has \emph{fractional degree} at most one, i.e., $\sum_{e\ni v} x_e \leq 1$.
So, for example, $x^{(t)}_i := \sum_{t'<t} x_{i,t'}$, the fractional degree of offline node $i$ before time $t$, satisfies $x^{(t)}_i\leq 1$.
The goal is to maximize the weighted value of the matching, $\sum_{i,t} w_i\cdot x_{i,t}$.

\smallskip

Following a number of recent breakthrough works in the online matching literature \cite{fahrbach2020edge,huang2020adwords}, we focus on \emph{two-choice} algorithms: these are randomized algorithms which randomize the matching choice of an online algorithm between two or fewer offline neighbors.
Similarly, \emph{fractional} two-choice algorithms are algorithms that for all time steps $t$, set $x_{i,t}>0$ for at most two offline nodes $i$.
Randomized two-choice algorithms were recently used to break the barrier of $\nicefrac{1}{2}$ for the competitive ratio of other online bipartite matching problems \cite{fahrbach2020edge,huang2020adwords}.
On the other hand, prior work shows that two-choice fractional algorithms are not generally losslessly roundable online \cite{devanur2013randomized} (see also \Cref{sec:impossible}).

\smallskip

A key ingredient in our work is the introduction of the following set of (non-convex) constraints for fractional two-choice algorithms, which we motivate and provide intuition for in \Cref{sec:techniques}.

\begin{wrapper}
	\begin{Def}\label{def:maximal}
		A two-choice fractional online matching algorithm $\mathcal{A}$ is \emph{\algotype} if for every online node $t$ with $P_t := \{i \mid x_{i,t}>0\}$, the fractional matching $\vec{x}$ of $\mathcal{A}$ satisfies
		\begin{equation}\label{rounding-weaker-condition}
		\sum_{i\in P_t} x_{i,t} \leq 1-\prod_{i\in P_t} x^{(t)}_i.
		\end{equation}	
		If \Cref{rounding-weaker-condition} is met at equality for each $t$, we say that $\calA$ is also \emph{maximal}.
	\end{Def}
\end{wrapper}

\noindent\textbf{Lossless online rounding.}
Our main technical contribution is a proof that the above condition is sufficient for rounding, i.e., \algotype two-choice  algorithms can be rounded losslessly online, even on an \emph{edge-by-edge basis}.
\begin{wrapper}
	\begin{thm}[Lossless Rounding]\label{thm:rounding-intro}
		Let $\mathcal{A}$ be a \algotype two-choice online fractional algorithm, and its output be $\vec{x}$. Then, there exists a randomized online algorithm whose output matching $\mathcal{M}$ matches each edge $(i,t)$ with probability $$\Pr[(i,t)\in \mathcal{M}]= x_{i,t}.$$
		If $\calA$ is also maximal, then this randomized online algorithm is implementable in poly-time.
	\end{thm}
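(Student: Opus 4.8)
The plan is to build the matching $\calM$ online, processing the online nodes $t=1,2,\dots$ in order, and to maintain a carefully chosen invariant on the joint distribution of the set of \emph{matched} offline nodes. Write $Z_i\in\{0,1\}$ for the indicator that offline node $i$ is matched by the (random, partial) matching constructed so far. Since each offline node is matched to at most one online node, the events $\{(i,t')\in\calM\}$ over $t'$ are disjoint, so the target guarantee $\Pr[(i,t)\in\calM]=x_{i,t}$ is equivalent to maintaining, just before step $t$, the marginals $\Pr[Z_i=1]=x^{(t)}_i$ for every $i$. The second, crucial part of the invariant is a \emph{negative correlation} property: I want the matched‑indicators to satisfy $\Pr\big[\bigwedge_{i\in S}Z_i=1\big]\le\prod_{i\in S}(\text{current fractional degree of }i)$ for every subset $S$ of offline nodes, at all times (negative orthant dependence; pairwise negative correlation is what actually gets used, but the stronger form is what is inductively maintainable). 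A singleton $P_t=\{i\}$ is trivial: match $t$ to $i$ with the correct conditional probability so that $\Pr[(i,t)\in\calM]=x_{i,t}$, which is possible since $x_{i,t}\le 1-x^{(t)}_i=\Pr[Z_i=0]$, and this only flips $Z_i$ from $0$ to $1$ with the right probability, preserving the invariant. So the interesting case is $P_t=\{i,j\}$.

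The role of soundness is to guarantee, at each step, that the required marginals for the two edges of $t$ can be realized. When $t$ arrives with $P_t=\{i,j\}$, the negative‑correlation part of the invariant gives $\Pr[Z_i=Z_j=1]\le x^{(t)}_i x^{(t)}_j$, hence
\[
\Pr[Z_i=0\ \text{or}\ Z_j=0]\ \ge\ 1-x^{(t)}_i x^{(t)}_j\ \ge\ x_{i,t}+x_{j,t},
\]
where the last inequality is exactly the soundness condition \eqref{rounding-weaker-condition}. Together with the per‑node bounds $x_{i,t}\le 1-x^{(t)}_i=\Pr[Z_i=0]$ and $x_{j,t}\le\Pr[Z_j=0]$, a transportation argument shows there is a randomized rule which, conditioned on the current configuration $(Z_i,Z_j)\in\{0,1\}^2$, matches $t$ to $i$, to $j$, or to neither — matching a node only when it is currently free — so that the overall probabilities come out to exactly $x_{i,t}$, $x_{j,t}$, and $1-x_{i,t}-x_{j,t}$. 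Concretely, one must feasibly split the mass $x_{i,t}$ between the configurations $(0,0)$ and $(0,1)$ and the mass $x_{j,t}$ between $(0,0)$ and $(1,0)$ without over‑using $(0,0)$; the displayed inequalities are precisely what makes this linear system feasible, so such a rule always exists.

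The heart of the proof — and the step I expect to be the main obstacle — is to choose this per‑step rule so that it also \emph{preserves} the negative‑correlation invariant for all future steps. After step $t$ the indicators of $i$ and $j$ have changed and typically become (negatively) correlated even if they were not before; the difficulty is to ensure that this newly introduced correlation, as $i$ (resp.\ $j$) is later re‑paired with other offline nodes carrying their own correlations, never accumulates into a \emph{positive} correlation for some pair. I would handle this by committing to a canonical, ``extreme‑point'' choice of the above transportation rule (intuitively: use the doubly‑free configuration $(0,0)$ as sparingly as possible, and within it prefer whichever of $i,j$ is ``more constrained''), and then verifying by induction on the online steps that the orthant inequalities survive each update for every $S$ — the update being monotone in each touched coordinate and using fresh independent randomness, which is what lets the inductive hypothesis go through. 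An alternative, possibly cleaner, route is to realize the whole process as a deterministic function of one fresh uniform random variable per online step and argue negative association directly.

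Finally, for the poly‑time claim under maximality: when \eqref{rounding-weaker-condition} holds with equality the feasibility argument above has no slack, which pins down the relevant conditional probabilities and forces the joint law of the matched set to retain a compact, essentially product‑form description; the per‑step rule is then a closed‑form computation from polynomially many maintained quantities (the current fractional degrees plus a bounded amount of pairwise bookkeeping), giving a genuinely polynomial‑time online algorithm. For general \algotype (non‑maximal) algorithms the joint distribution need not admit a polynomial‑size description — it need only \emph{exist}, which is all \Cref{thm:rounding-intro} claims in that case — so the same construction yields the rounding without the efficiency guarantee.
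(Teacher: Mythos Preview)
Your high-level plan is right: maintain the marginals plus a negative-dependence invariant that is strong enough to be inductive yet implies the pairwise inequality you need for feasibility. The gap is in the choice of invariant and in the per-step rule. Negative orthant dependence of the matched indicators, $\Pr[\bigwedge_{i\in S}Z_i{=}1]\le\prod_i x_i^{(t)}$, is \emph{not} what the paper maintains, and it is unclear it is inductive here. The paper instead maintains the strictly stronger log-submodularity of the \emph{free} events, equivalently the FKG-type monotonicity $\Pr[F_{i,t}\mid F_{K,t}]\le\Pr[F_{i,t}\mid F_{J,t}]$ for all $J\subseteq K$. This conditional form is exactly what is needed in the inductive step: when $P_t=\{1,2\}$ and you must control the correlation of $1$ with an outside set $K$ after step $t$, you end up comparing $\Pr[F_{2,t}\mid F_{K\cup\{1\},t}]$ with $\Pr[F_{2,t}\mid F_{J\cup\{1\},t}]$, which orthant inequalities alone do not give you. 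So your ``monotone update plus fresh randomness'' sentence does not close the loop.

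Equally important, the per-step rule is not just any feasible transportation. Writing $a_i$ for the probability of matching $i$ when both are free and $b_i$ when only $i$ is free, the paper needs both $b_i\ge a_i$ \emph{and} $b_i\le a_i/(1-a_{3-i})$; the second constraint is what makes the case ``$2\in K\setminus J$'' go through, and it is not captured by ``use the doubly-free state as sparingly as possible.'' Concretely, the paper's rule starts from $a_i=b_i=\Delta x_i/(1-x_i)$ and, only if $a_1+a_2>1$, decreases the $a_i$'s and increases the $b_i$'s until $a_1+a_2=1$ or some $b_i=1$; soundness is what guarantees this terminates feasibly. The preservation of the FKG invariant is then a five-case analysis according to how $\{1,2\}$ meets $K\cup\{i\}$, using precisely those two constraints on $(a,b)$. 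For the maximal case, the structure you gesture at is made explicit as the dichotomy $\Pr[F_{I,t}]\in\{0,\prod_{i\in I}(1-x_i^{(t)})\}$; tracking which \emph{pairs} are in the ``$0$'' state suffices (every strictly-negative set has a strictly-negative pair witness), and that bookkeeping is what yields the polynomial-time implementation.
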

\end{wrapper}

\noindent\textbf{Competitive roundable fractional algorithms.}
Complementing \Cref{thm:rounding-intro}, we present competitive \algotype two-choice fractional matching algorithms.
Specifically, we present $0.536$-competitive unweighted and $0.524$-competitive vertex-weighted algorithms, breaking the natural barrier of $\nicefrac{1}{2}$.

\subsubsection{Applications}\label{sec:applications}

Illustrating the potential of our techniques, we obtain the following results of independent interest.

\paragraph{Sharp Randomness Threshold.}
In their seminal work, \citet{karp1990optimal} proved that deterministic algorithm are at best $\nicefrac{1}{2}$ competitive, while randomization allows to break this bound.
This begs the question: \emph{how random is random}? In other words, how much randomness is needed to outperform deterministic algorithms?
The \textsc{ranking} algorithm \cite{karp1990optimal} requires $\log(n!) = O(n \log n)$ random bits, slightly improved to $O(n)$ in \cite{durr2016power}.
In contrast, \citet{pena2019extensions} show that $(1-o(1))\log\log n$ bits of randomness (or even advice, see \Cref{sec:related}) are needed. (All logarithms are in base $2$.)

\smallskip

We close this doubly-exponential gap for the online (vertex-weighted) bipartite matching problem, showing that the lower bound is tight.
To do so, we show how to implement our randomized rounding scheme using small-bias probability spaces \cite{naor1993small} while only losing a $1+o(1)$ factor in the competitive ratio. Applying this refinement to our fractional algorithms, we obtain the following \emph{sharp threshold} on the amount of randomness (and advice) needed to outperform deterministic algorithms for this problem.

\begin{wrapper}
	\begin{thm}\label{thm:threshold}
		$(1 \pm o(1))\log \log n$ random (or advice) bits are both necessary and sufficient to achieve a competitive ratio of $(\nicefrac{1}{2}+\Omega(1))$ for online vertex-weighted bipartite matching.
	\end{thm}
\end{wrapper}

Due to our (near-)lossless rounding using bounded randomness, the above $\nicefrac{1}{2}+\Omega(1)$ ratios are precisely the $0.536$ and $0.524$ for the unweighted and vertex-weighted problem which our fractional algorithms achieve. These bounds are comparable to, and in most cases larger than, similar recent results breaking the natural $\nicefrac{1}{2}$ barrier in the online matching literature \cite{gamlath2019online,fahrbach2020edge,huang2020adwords,huang2020fully}.

\paragraph{Optimal semi-OCS via online rounding.}
In their groundbreaking work, \citet{fahrbach2020edge} introduced a powerful new algorithmic primitive: online correlated selection (OCS).
Here, pairs of elements arrive, and an algorithm must choose one item in each pair immediately and irrevocably upon arrival.
Picking randomly (and independently) guarantees that each element belonging to $k$ pairs is unmatched with probability precisely $2^{-k}$.
A \emph{$\gamma$-semi-OCS} is an algorithm that, using correlated choices, decreases this probability to $2^{-k}(1-\gamma)^{k-1}$.
A strengthening of semi-OCS, referred to as OCS, was pivotal to their breakthrough result, breaking the competitive ratio of $\nicefrac{1}{2}$ for online edge-weighted matching, as well as for a similar breakthrough for the related AdWords problem \cite{huang2020adwords}.
This hints at wider applicability of this algorithmic tool, motivating its extension and refinement.
Fahrbach et al.~explicitly asked for the highest $\gamma$ for which a $\gamma$-semi-OCS exists.

\smallskip

In independent and concurrent work, \citet{gao2021improved} present an improved, $\frac{1}{2}$-semi-OCS, which they prove is optimal.
Our framework applied to a natural \algotype fractional algorithm achieves the same bound, while hinting at a more generic method of devising such novel algorithmic primitives as OCS: use online rounding!
In particular, in \Cref{sec:OCS}, we prove the following.

\begin{wrapper}
	\begin{thm}\label{lem:OCS}
		An optimal semi-OCS can be obtained using lossless online rounding.
	\end{thm}
\end{wrapper}

To illustrate our online lossless rounding-based approach's flexibility, in \Cref{sec:OCS} we also show how it yields a bichromatic variant of semi-OCS, of possible independent interest.

\subsection{Overview of our Techniques}\label{sec:techniques}
\textbf{Motivation for and Intuition behind Condition \eqref{rounding-weaker-condition}.} Suppose we have managed to losslessly round a fractional matching $\vec{x}$ until time $t$, in the sense that $\Pr[(i,t)\in \calM] = x_{i,t}$.
Consequently, each offline node is matched before time $t$ with probability $x_i^{(t)}$.
Suppose for now that these probabilities are independent.
Then, since $t$ can only be matched to a neighbor in $P_t := \{i\mid x_{i,t}>0\}$ if at least one node in $P_t$ is free at this point, we find that for lossless rounding to be possible, we have
\[
\sum_{i \in P_t} x_{i,t} = \Pr[t \textrm{ matched}] \leq 1-\Pr[\textrm{all } P_t \textrm{ matched before time }t] = 1-\prod_{i\in P_t} x_{i}^{(t)}.
\]
Therefore, we have that $\sum_{i \in P_t} x_{i,t} \leq 1-\prod_{i\in P_t}x_i^{(t)}$ is a \emph{necessary condition} for lossless rounding in this scenario, where all offline nodes are matched independently.
We say that our fractional matchings are
\emph{\algotype} precisely since this inequality is a necessary condition for lossless rounding.
For two-choice algorithms (i.e., with $|P_t|\leq 2$), it is not hard to match edges of $t$ with marginal probabilities $x_{i,t}$, assuming $\vec{x}$ is \algotype and the matched statuses of offline nodes are independent. Rounding becomes a challenge when correlations between offline nodes are introduced.

\medskip
\noindent\textbf{Dealing with correlations.}
In general, the matched status of offline nodes may depend on each other in intricate ways, based on the rounding so far.
Positive correlations between nodes in $P_t$ may cause the probability of all $P_t$ being matched before time $t$ to be strictly greater than $\prod_{i\in P_t} x_{i}^{(t)}$. This would rule out lossless rounding of maximal \algotype algorithms, as it would imply
the following.
\[
\Pr[t \textrm{ matched}] = \sum_{i\in P_t} x_{i,t} = 1-\prod_{i\in P_t} x_{i}^{(t)} > 1-\Pr[\textrm{all $P_t$ matched before time }t].
\]
However, \emph{negative dependence} between the matched status of offline nodes in $P_t$ would still be consistent with
matching edges of $t$ with marginal probabilities $x_{i,t}$ output by a \algotype algorithm. Indeed, such dependence (and lossless rounding up to time $t$)
would guarantee the necessary condition for lossless online rounding, namely $\Pr[t \textrm{ matched}]\leq 1-\Pr[\textrm{ all $P_t$ matched before time $t$}]$:
\[
\Pr[t \textrm{ matched}] = \sum_{i\in P_t} x_{i,t} \leq 1-\prod_{i\in P_t} x_{i}^{(t)} \leq  1-\Pr[\textrm{all $P_t$ matched before time }t].
\]
Accordingly, the crux of our rounding algorithm for \algotype two-choice algorithms is in designing a way of matching edges of $t$ that simultaneously: (1) guarantees the marginal probabilities, and (2) preserves pairwise negative correlations between the matched status of different nodes.
The second requirement is challenging to prove directly, and we end up proving significantly stronger forms of negative correlation between offline nodes to obtain our guarantees.

\medskip
\noindent\textbf{Our negative correlation properties.}
Let $F_{i,t}$ be an indicator for offline node $i$ being free (unmatched by the rounding algorithm) by time $t$, and let $F_{I,t}:=\bigwedge_{i\in I} F_{i,t}$ be an indicator for all of offline node-set $I$ being free at time $t$. In addition to satisfying the target marginal probabilities, our algorithm also satisfies the following strong negative correlation property
\begin{align*}
\Pr[F_{i,t} \mid F_{K,t}]\leq \Pr[F_{i,t} \mid F_{J,t}] \qquad \forall t,\, \forall i \in [n],\, \forall J\subseteq K\subseteq [n]\setminus \{i\} \textrm{ s.t.} \Pr[F_{K,t}]> 0.
\end{align*}
In words, the probability of node $i$ to be free at time $t$ decreases when we condition on a larger set of other offline nodes being free at that time. This strong negative correlation property, while natural, is not shared by prior online bipartite matching algorithms, such as \textsc{ranking} of \citet{karp1990optimal}, which is easily seen to induce \emph{positively} correlated variables $\{F_{i,t}\}_i$.

If the fractional matching $x$ is also maximal, our rounding algorithm satisfies the stronger invariant that at any time $t$, offline nodes in every set $I$ are either matched independently, or at least one node in $I$ must be matched, i.e.,
$\Pr[F_{I,t}] \in \left\{0,\, \prod_{i\in I} \Pr[F_{i,t}]\right\}.$

Both above negative correlation properties are strong enough to imply exponential tail bounds, similarly to known offline lossless rounding schemes \cite{gandhi2006dependent}, which might find uses beyond this work.

\medskip
\noindent\textbf{Low-randomness implementation.}
For particularly well-structured fractional algorithms, which we refer to as $k$-level algorithms (see \Cref{sec:roundingklevel}), 
each node's matched status depends on a constant number of random variables of our rounding algorithm.
This property, combined with the theory of small-bias distributions (see \Cref{sec:prelims}), allows us to implement our rounding scheme when applied to these algorithms using little randomness, while only incurring a $(1-o(1))$ multiplicative loss in the competitive ratio. This underlies \Cref{thm:threshold}, and hints at a wider applicability of small-bias probability spaces for the study of randomness (and advice) complexity of online algorithms.

\medskip
\noindent\textbf{Fractional Algorithms.}
Using \Cref{thm:rounding-intro} requires fractional algorithms whose outputs satisfy both the fractional matching constraints, as well as Condition \eqref{rounding-weaker-condition}, which is non-convex in the decision variables $x_{i,t}$, complicating the design and analysis space.
Nonetheless, we present numerous fractional algorithms approximately optimizing over these constraints and the fractional matching constraints, analyzed via (intricate instantiations of) the online primal-dual method \cite{buchbinder2009design}.

\subsection{Extension to Multiple-Choice Algorithms}
Two-choice algorithms cannot yield competitive ratios above $\nicefrac{5}{9}$ (see \citet{huang2019understandingb}).
A natural question, then, is how to generalize our approach to multiple-choice fractional algorithms.
Unfortunately, we were unable to extend our approach to this level of generality. However, the partial progress we made may be informative for follow-up work, so we discuss it here briefly.

\smallskip

When attempting to generalize our approach and obtain optimal $(1-\nicefrac{1}{e})$-competitive algorithms, there are two major hurdles to overcome: the first is characterizing additional constraints which allow for lossless rounding, and the second is to show that these constraints are compatible with optimal competitive ratios. In \Cref{sec:multi-appendix}, we consider a natural generalization of Condition \eqref{rounding-weaker-condition} to multiple-choice algorithms, and show that while this condition, which a priori should harm the competitive ratio significantly, still allows for optimal $(1-\nicefrac{1}{e})$ competitive ratios.
The intricate analysis of this fractional algorithm may inform the analysis of follow-up roundable online fractional algorithms.
Unfortunately, as we show in the same appendix, this condition does not allow for lossless online rounding, and so new ideas are needed.
We leave the characterization of additional constraints which allow for both optimal competitive ratios and lossless online rounding as an open question.

\subsection{Related work}\label{sec:related}

\noindent\textbf{Online Matching.} The online matching literature is a rich one; numerous extensions and generalizations of the problem of \citet{karp1990optimal} have been studied.
Recent developments include algorithms breaking the barrier of $\nicefrac{1}{2}$ for matching in general graphs \cite{gamlath2019online,huang2020fully,huang2020fully2}, edge-weighted matching \cite{fahrbach2020edge,blanc2021multiway,gao2021improved} and the AdWords problem \cite{huang2020adwords}.
Instead of providing an exhaustive overview, we point to these works and citations therein, and to the excellent survey of \citet{mehta2013online}.

\paragraph{Online Rounding.}
Online rounding has played a pivotal role in the resolution of myriad long-standing open problems in the online algorithms literature,  such as the $k$-server problem \cite{bansal2011polylogarithmic,bubeck2018k,buchbinder2019k,lee2018fusible}, weighted paging \cite{bansal2012primal,adamaszek2012log}, generalized caching \cite{bansal2012randomized,adamaszek2012log},  Metrical task systems \cite{bansal2010metrical,BCLLM19,CL19}, online set cover and its generalizations \cite{alon2009online,naor2011online,alon2006general}, online edge coloring \cite{cohen2019tight,saberi2021greedy}, etc.
For the above (minimization) problems, the rounding step loses an additional constant  or even logarithmic factor in the competitive ratio. This loss is unavoidable in some cases, since even the offline relaxation has an integrality gap. In other cases, losing an additional constant is not crucial since the original competitive ratio is already logarithmic.
We remark that a small number of online minimization problems (e.g., ski-renal, TCP-acknowledgement, or even Metrical task systems on HSTs) fractional solutions can in fact be rounded online losslessly~\cite{KKR01,bansal2010metrical, wang2015two}.

For \emph{maximization} problems, and in particular online matching problems,
where the optimal competitive ratios lies in $[1/2,1-1/e]$, no optimal results were obtained via rounding, due to the seeming inevitable loss in online rounding. We show that this loss is not necessary, after all.

\paragraph{Online Correlated Selection.}
Following the recent breakthrough of \cite{fahrbach2020edge}, online correlated selection has emerged as a powerful algorithmic primitive, with a flurry of work refining and extending this tool (see \cite{gao2021improved,blanc2021multiway,shin2021making,delong2022online}).
Numerous approaches for designing such algorithms have been proposed, and the design space of such algorithms is still to be understood. In this work we suggest a systematic method for designing such algorithms, by relying on online rounding.

\paragraph{Randomness and Advice Complexity.}
A core goal of online algorithms research is determining the power of randomization in online settings, or at least determining the \emph{amount} of randomness needed to outperform deterministic algorithms.
A related question is studied in the {\em advice model}, introduced by \citet{emek2011online}. In this model a deterministic online algorithm is equipped with an advice string, and the algorithm's guarantees for any input are with respect to the best advice string for the input. Such non-deterministic advice bits are at least as powerful as random bits, and are often strictly more powerful \cite{renault2015online,renault2015online2,mikkelsen2016randomization,bockenhauer2014online,pena2019extensions,boyar2017online}.
For online bipartite matching, the best deterministic competitive ratio is the trivial $\nicefrac{1}{2}$ \cite{karp1990optimal}.
The best upper bounds on the number of advice bits sufficient to do better are $O(\log n)$, using exponential time, and $O(n)$, using polynomial time \cite{bockenhauer2014online,pena2019extensions,durr2016power}.
Similarly, $\tilde{O}(n)$ bits of randomness suffice to beat this bound of $\nicefrac{1}{2}$ \cite{karp1990optimal,durr2016power}.
On the other hand, \citet{pena2019extensions} showed that any algorithm that is $(\nicefrac{1}{2}+\epsilon)$-competitive for any constant $\epsilon>0$ must use at least $(1-o(1))\log \log n$ advice bits, and hence at least that many random bits.
We close these (doubly-)exponential gaps, proving that the lower bound is tight.
	\section{Preliminaries}\label{sec:prelims}

In the online bipartite matching problem, the underlying input is a bipartite graph $G=(L,R,E)$. Initially, only the $n$ offline nodes (nodes), $L=[n]$, are known, as well as the weight $w_i>0$ of each offline node $i\in L$.
At times $t=1,2,\dots$, online node $t\in R$ arrives, together with its edges.

A fractional online matching algorithm, at each time $t$,
must decide, immediately and irrevocably, what values $x_{i,t}$ to assign edges $(i,t)\in E$, while guaranteeing that the \emph{fractional degree} of each vertex $v\in L\cup R$ satisfies $\sum_{e\ni v} x_e \leq 1$.
We denote by $x^{(t)}_i := \sum_{t'<t} x_{i,t'}$ the fractional degree of vertex $i$ before time $t$, and by $x_t := \sum_i x_{i,t}$ the fractional degree of vertex $t$. Thus, a fractional online algorithm guarantees that $x_t\leq 1$ and $x^{(t)}_i \leq 1$ for any time $t$ and offline vertex $i$. Thus, such an algorithm maintains a feasible solution to the fractional (vertex-weighted) matching problem,
whose linear programming relaxation is given (together with its dual) in \Cref{fig:matching-lp}.

	\begin{figure}[h]
	\begin{small}
		\begin{center}
			\begin{tabular}{rl|rl}
				\multicolumn{2}{c|}{Primal} & \multicolumn{2}{c}{Dual} \\ \hline
				maximize & $\sum_{(i,t)\in E} w_i\cdot x_{it}$ &
				minimize & $\sum_{i\in L}{y_i} +  \sum_{t\in R}{y_t}$ \\
				subject to: & & subject to: & \\
				
				$\forall i\in L$: &  $\sum_{t}x_{i,t} \leq 1$ & $\forall (i,t)\in E$: & $y_i + y_t \geq w_i$ \\
				
				$\forall t \in R$: & $\sum_{i} x_{i,t} \leq 1$ & $\forall i\in L$: & $y_i\geq 0$ \\
				
				$\forall (i,t)\in E$: & $x_{i,t} \geq 0$ & $\forall t\in R$: & $y_t \geq0$
			\end{tabular}
		\end{center}
	\end{small}
	\vspace{-0.5cm}
	\caption{The fractional vertex-weighted bipartite matching LP and its dual}\label{fig:matching-lp}
\end{figure}

\vspace{-0.2cm}
The analysis of our fractional algorithms uses these LPs
and the well-established online primal-dual method \cite{buchbinder2009design}. The challenge in the analysis is dealing with the non-convex Constraint \eqref{rounding-weaker-condition}.

\paragraph{Bounded Independence.}
A useful notion we will make use of is $(\delta,k)$-dependence, introduced by \citet{naor1993small}, generalizing $k$-wise independence.

\begin{Def}[\cite{naor1993small}]
	Binary random variables $Y_1,Y_2,\dots,Y_m$ are \emph{$(\delta,k)$-dependent}
	if for any subset of $k$ or fewer indices, $I\subseteq [m]$, $|I|\leq k$,
	$$\sum_{\vec{v}\in \{0,1\}^{|I|}} \left|\Pr\left[\bigwedge_{i\in I} (Y_i = v_i)\right]-2^{-|I|}\right| \leq \delta.$$
\end{Def}	

A $(0,k)$-dependent distribution is \emph{$k$-wise independent}, satisfying that any subset of $k$ or fewer variables is independent.
More generally, a joint distribution $\vec{Y}$ is $(\delta,k)$-dependent if for any subset $I\subseteq [m]$ of $k$ or fewer variables, the total variation distance between the distribution on the variables indexed by $I$ and the uniform distribution on $|I|$ i.i.d $\Ber(\nicefrac{1}{2})$ variables is at most $\delta$. Consequently, such a distribution satisfies the following.
\begin{lem}\label{reduction-to-independent}
	Let $\mathcal{U}$ be the uniform distribution and let $\mathcal{D}$ be a $(\delta,k)$-dependent distribution over $m$ binary variables $Y_1,Y_2,\dots,Y_m$.
	Then, for any event $A$ which is determined by $k$ or fewer random variables in $Y_1,Y_2,\dots,Y_m$,
$$\Pr_{\vec{Y}\sim \mathcal{U}}[A] - \delta \leq \Pr_{\vec{Y}\sim \mathcal{D}}[A] \leq \Pr_{\vec{Y}\sim \mathcal{U}}[A] + \delta.$$
\end{lem}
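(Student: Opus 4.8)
\textbf{Proof plan for \Cref{reduction-to-independent}.}
The plan is to reduce the event $A$ to a statement about the single block of variables it depends on, and then to unfold the $(\delta,k)$-dependence definition. First I would fix the set $I \subseteq [m]$ of at most $k$ indices such that $A$ is determined by $\{Y_i : i \in I\}$; such a set exists by hypothesis and, padding with arbitrary extra indices if needed, we may take $|I| \le k$ exactly (enlarging $I$ only weakens nothing, since the $(\delta,k)$-dependence bound holds for every subset of size at most $k$). Since $A$ is $I$-measurable, there is a set $S \subseteq \{0,1\}^{|I|}$ of "accepting" assignments so that, for any distribution $\mathcal{E}$ on $\vec Y$,
\[
\Pr_{\vec Y \sim \mathcal{E}}[A] = \sum_{\vec v \in S} \Pr_{\vec Y \sim \mathcal{E}}\!\left[\bigwedge_{i\in I}(Y_i = v_i)\right].
\]

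Next I would compare the two sides of this identity for $\mathcal{E} = \mathcal{D}$ and $\mathcal{E} = \mathcal{U}$, using that under the uniform distribution each atom $\bigwedge_{i\in I}(Y_i=v_i)$ has probability exactly $2^{-|I|}$. Subtracting,
\[
\left|\Pr_{\vec Y\sim \mathcal{D}}[A] - \Pr_{\vec Y\sim \mathcal{U}}[A]\right|
= \left|\sum_{\vec v \in S}\left(\Pr_{\vec Y\sim\mathcal{D}}\!\left[\bigwedge_{i\in I}(Y_i=v_i)\right] - 2^{-|I|}\right)\right|
\le \sum_{\vec v \in \{0,1\}^{|I|}}\left|\Pr_{\vec Y\sim\mathcal{D}}\!\left[\bigwedge_{i\in I}(Y_i=v_i)\right] - 2^{-|I|}\right|,
\]
where the inequality drops the restriction to $S$ (adding only nonnegative terms) and applies the triangle inequality. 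By the definition of $(\delta,k)$-dependence applied to $I$ (legitimate since $|I|\le k$), the last sum is at most $\delta$. Hence $\left|\Pr_{\mathcal{D}}[A] - \Pr_{\mathcal{U}}[A]\right| \le \delta$, which is exactly the two-sided bound in the statement.

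I do not expect any serious obstacle here: the lemma is essentially a restatement of the $(\delta,k)$-dependence definition once one observes that total-variation distance on the $I$-marginal upper-bounds the discrepancy of any $I$-measurable event. The only points requiring a little care are (i) justifying that $A$ being "determined by $k$ or fewer variables" lets us pick a concrete index set $I$ with $|I|\le k$ and an accepting set $S$ — this is just the definition of measurability with respect to a finite $\sigma$-algebra generated by finitely many binary variables — and (ii) noting that the $(\delta,k)$-dependence guarantee is stated for all subsets of size $\le k$, so we are free to enlarge $I$ to have size exactly $\min(k, m)$ if convenient, though this is not even necessary. No additional machinery beyond the triangle inequality is needed.
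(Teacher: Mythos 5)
Your proposal is correct and follows essentially the same argument as the paper: fix the index set $I$ with $|I|\le k$ determining $A$, write $A$ as a union of atoms $\bigwedge_{i\in I}(Y_i=v_i)$ over an accepting set $S$, and bound $\left|\Pr_{\mathcal{D}}[A]-\Pr_{\mathcal{U}}[A]\right|$ by the triangle inequality, extending the sum from $S$ to all of $\{0,1\}^{|I|}$ and invoking the definition of $(\delta,k)$-dependence. No gaps; nothing further is needed.
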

\begin{proof}
	Let $I\subseteq [m]$ be a set of $k$ or fewer indices such that $\{Y_i \mid i\in I\}$ determine $A$, and let $S\subseteq 2^{|I|}$ be such that $A = \bigcup_{\vec{v}\in S} \left[\vec{Y}=\vec{v}\right]$.
	Then, by triangle inequality and definition of $(\delta,k)$-dependence,
	\begin{align*}
	\left|\Pr_{\vec{Y}\sim \mathcal{D}}[A] - \Pr_{\vec{Y}\sim \mathcal{U}}[A]\right| & \leq \sum_{\vec{v}\in S} \left|\Pr\left[\bigwedge_{i\in I} (Y_i = v_i)\right]-2^{-|I|}\right|\leq \sum_{\vec{v}\in \{0,1\}^{|I|}} \left|\Pr\left[\bigwedge_{i\in I} (Y_i = v_i)\right]-2^{-|I|}\right|\leq \delta.
	~~~~\qedhere\end{align*}
\end{proof}

A useful property of $(\delta,k)$-dependent distributions is that such distributions can be specified---and constructed in polynomial time---using a small random seed \cite{naor1993small,Ta-Shma17}.
For completeness, a proof of this lemma, following the construction of \citet{naor1993small}, is given in Appendix \ref{sec:small-bias}.

\begin{restatable}{lem}{smallbiaslem}\label{eps-k-constructions}
	For any $\delta>0$, a $(\delta,k)$-dependent joint distribution on $n$ binary variables can be constructed using
	$\log\log n + O(k + \log(\frac{1}{\delta}))$ random bits.
	Moreover, after polytime preprocessing, each random variable in this distribution can be sampled in $O(k\cdot \log n)$ time.
\end{restatable}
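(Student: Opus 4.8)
I would follow the reduction of almost $k$-wise independence to small bias introduced by \citet{naor1993small}, instantiated with the near-optimal small-bias generator of \cite{Ta-Shma17} so as to pin down the leading constant on the $\log\log n$ term. The skeleton is: fix vectors $f(1),\dots,f(n)\in\{0,1\}^D$ with $D:=k\lceil\log(n+1)\rceil$ such that every $\le k$ of them are $\mathbb{F}_2$-linearly independent; draw $z\in\{0,1\}^D$ from an $\epsilon_0$-biased distribution with $\epsilon_0:=2^{-\lceil k/2\rceil}\delta$; and output $X=(X_1,\dots,X_n)$ with $X_i:=\langle f(i),z\rangle$ (inner product over $\mathbb{F}_2$). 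Explicit such $f(i)$'s are easy to write down: identifying $[n]$ with distinct elements $\alpha_1,\dots,\alpha_n$ of $\mathbb{F}_{2^m}$ for $m:=\lceil\log(n+1)\rceil$, set $f(i):=(1,\alpha_i,\alpha_i^2,\dots,\alpha_i^{k-1})\in\mathbb{F}_{2^m}^k\cong\mathbb{F}_2^{km}$; any $k$ of these are linearly independent over $\mathbb{F}_{2^m}$ by the Vandermonde determinant, hence also over $\mathbb{F}_2$. For the $\epsilon_0$-biased $z$, \cite{Ta-Shma17} supplies an explicit generator on $D$ bits with seed length $\log D+O(\log(1/\epsilon_0))$ --- crucially with leading coefficient $1$ on $\log D$, unlike the simpler $2\log D$-type constructions, which would cost $2\log\log n$. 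Since $\log D=\log\log n+\log k+O(1)$ and $\log(1/\epsilon_0)=\lceil k/2\rceil+\log(1/\delta)$, the total seed length is
\[
\log D+O(\log(1/\epsilon_0))=\log\log n+\log k+O(1)+O(k+\log(1/\delta))=\log\log n+O(k+\log(1/\delta)),
\]
using $\log k\le k$.

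It remains to verify $(\delta,k)$-dependence. Fix a nonempty $T\subseteq[n]$ with $|T|\le k$. Then $\bigoplus_{i\in T}X_i=\langle\,\sum_{i\in T}f(i),\,z\,\rangle$ with $\sum_{i\in T}f(i)\ne 0$, since it is a nontrivial $\mathbb{F}_2$-combination of at most $k$ linearly independent vectors; hence $\bigl|\E[(-1)^{\bigoplus_{i\in T}X_i}]\bigr|\le\epsilon_0$ because $z$ is $\epsilon_0$-biased. So $X$ fools every parity test of Hamming weight at most $k$ to within $\epsilon_0$. A standard Fourier argument then upgrades this to statistical closeness: for $I\subseteq[n]$ with $|I|=r\le k$, writing $\widehat g(S):=\E[(-1)^{\bigoplus_{i\in S}X_i}]$ for $S\subseteq I$ (so $\widehat g(\emptyset)=1$), Fourier inversion gives $\Pr[X_I=v]-2^{-r}=2^{-r}\sum_{\emptyset\ne S\subseteq I}(-1)^{\langle S,v\rangle}\widehat g(S)$, whence, by Cauchy--Schwarz and then Parseval,
\[
\sum_{v\in\{0,1\}^r}\bigl|\Pr[X_I=v]-2^{-r}\bigr|\le 2^{r/2}\Bigl(\sum_{v}\bigl(\Pr[X_I=v]-2^{-r}\bigr)^2\Bigr)^{1/2}=2^{r/2}\Bigl(2^{-r}\sum_{\emptyset\ne S\subseteq I}\widehat g(S)^2\Bigr)^{1/2}\le 2^{r/2}\epsilon_0\le 2^{k/2}\epsilon_0=\delta .
\]
This is exactly the defining condition of $(\delta,k)$-dependence.

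For efficiency, preprocessing would expand the seed into the full string $z\in\{0,1\}^D$ (in time $\poly(D)=\poly(k,\log n)$ for this generator) and fix an irreducible polynomial presenting $\mathbb{F}_{2^m}$. Thereafter, sampling $X_i$ amounts to computing $f(i)=(1,\alpha_i,\dots,\alpha_i^{k-1})$ --- $O(k)$ multiplications in $\mathbb{F}_{2^m}$ --- and the single $\mathbb{F}_2$ inner product $\langle f(i),z\rangle$, which is $O(D)=O(k\log n)$ bit operations; with standard fast field arithmetic this is $O(k\log n)$ time.

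I do not expect a deep obstacle here --- the construction is classical --- so the care is entirely in the parameter accounting, and specifically in securing the leading constant $1$ on $\log\log n$. This rests on two points: (i) compressing the ``length'' from $n$ coordinates down to $D=\poly(k,\log n)$ via the linear code, so that $\log D=\log\log n+O(k)$ rather than $\log n$; and (ii) feeding $D$ into a small-bias generator whose seed length is $(1+o(1))\log D$, not $2\log D$, in the number of coordinates. The factor $2^{k/2}$ lost when passing from weight-$\le k$ bias to statistical distance is harmless, as it only adds $\lceil k/2\rceil$ to $\log(1/\epsilon_0)$, which is absorbed by the $O(k)$ term.
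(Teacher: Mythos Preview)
Your proof is correct and follows essentially the same Naor--Naor reduction as the paper: compress $n$ coordinates to $D=O(k\log n)$ via $k$-wise $\mathbb{F}_2$-linearly independent vectors (you use Vandermonde/Reed--Solomon, the paper cites BCH), then feed an $\epsilon$-biased source on $D$ bits and invoke the $2^{k/2}\epsilon$ bias-to-$L_1$ conversion. One minor remark: the paper shows that the original \cite{naor1993small} $\epsilon$-biased construction, with sample space $O(D/\epsilon^3)$, already yields seed length $\log D + O(\log(1/\epsilon))$ with leading coefficient~$1$ on $\log D$, so invoking \cite{Ta-Shma17} is not actually needed to secure the $\log\log n$ constant --- though it does no harm.
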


	\section{Lossless Online Rounding: A Special Case}\label{sec:rounding}
In this section we study a special case of our rounding algorithm for the case that the input \algotype two-choice algorithm is also maximal, i.e., satisfies Condition \eqref{rounding-weaker-condition} at equality at each time $t$.
At the end of the section, we refer to appendices where we show how to implement this algorithm in polynomial time, generalize it to round non-maximal inputs, and refine this rounding to require doubly-logarithmic random seed on some well-structured inputs.

\subsection{The Algorithm}\label{sec:main-round}
In this section we show that any \emph{maximal} \algotype two-choice fractional solution $\vec{x}$ can be rounded while preserving the marginal probabilities of all edges, with even stronger negative correlation properties than for non-maximal such algorithms (see \Cref{sec:two-choice-fkg}).
We denote by $F_{I,t}$ the probability that a set of offline vertices $I\subseteq [n]$ is free at time $t$, and use $F_{i,t}$ as shorthand for $F_{\{i\},t}$.
\begin{align}
\Pr\left[F_{i,t}\right] & = 1-x^{(t)}_i & \forall i=1, \ldots, n.	\label{invariant:marginals}\\
\Pr\left[F_{I,t}\right] & \in \left\{0,\prod_{i\in I} (1-x^{(t)}_{i})\right\} & \forall I\subseteq\{1, \ldots, n\}.	  \label{invariant:correlation}
\end{align}

We call a subset $I$ of offline nodes {\em negative} (at time $t$) if $\Pr\left[F_{I,t}\right] = 0$. Otherwise, we say it is  {\em independent}, noting that in this case $\Pr\left[F_{I,t}\right] = \prod_{i\in I} (1-x_i) = \prod_{i\in I}\Pr\left[ F_{i,t}\right]$.
We note that this second name is apt, since for any independent set $I$ at time $t$, the variables $\{F_{i,t} \mid i\in I\}$ are indeed independent, as observed in \Cref{independent-valid-name}. Before proving this fact,
we  make the following simpler observations.

\begin{obs}\label{obs:negativity}
	A set that is negative at time $t$ remains negative at all times $t'\geq t$.
\end{obs}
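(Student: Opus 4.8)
A set $I$ that is negative at time $t$ (i.e., $\Pr[F_{I,t}] = 0$) remains negative at all times $t' \geq t$.

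The plan is to observe that this is essentially immediate from the monotonicity of the free-status indicators. First I would note that for any offline node $i$, once $i$ is matched it stays matched: the rounding algorithm only ever adds edges to $\mathcal{M}$, so the event $F_{i,t}$ (node $i$ free by time $t$) satisfies $F_{i,t'} \subseteq F_{i,t}$ as events whenever $t' \geq t$. Hence $F_{I,t'} = \bigwedge_{i\in I} F_{i,t'} \subseteq \bigwedge_{i \in I} F_{i,t} = F_{I,t}$, and therefore $\Pr[F_{I,t'}] \leq \Pr[F_{I,t}]$. If the right-hand side is $0$, so is the left-hand side, which is exactly the claim.

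I would then spell out why the containment of events holds, tracing through \Cref{alg:pairwise-rounding}: at each arrival the algorithm only ever executes a ``match $t$ to $i$'' step, which can turn $F_{i,t}$ from true to false but never the reverse; no step ever un-matches a previously matched offline node. So the sequence of events $F_{i,1} \supseteq F_{i,2} \supseteq \cdots$ is monotone decreasing for every $i$, and intersecting over $i \in I$ preserves this monotonicity. Combined with the fact that probability is monotone under event inclusion, this gives $\Pr[F_{I,t'}] \le \Pr[F_{I,t}]$ for $t' \ge t$, completing the argument.

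There is really no obstacle here — the statement is a one-line monotonicity fact, included presumably as a bookkeeping lemma for the subsequent inductive maintenance of Invariants \eqref{invariant:marginals} and \eqref{invariant:correlation} in the maximal case. The only thing to be slightly careful about is phrasing it at the level of events (irrevocability of matches) rather than just probabilities, so that it is clear the statement holds pointwise on the sample space and not merely in expectation; but this follows directly from the fact that online matching decisions are, by definition, irrevocable.
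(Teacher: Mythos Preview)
Your argument is correct. The paper states this observation without proof, treating it as self-evident; your monotonicity argument via irrevocability of matches ($F_{I,t'}\subseteq F_{I,t}$ for $t'\geq t$, hence $\Pr[F_{I,t'}]\leq \Pr[F_{I,t}]=0$) is exactly the intended one-line justification.
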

\begin{obs}\label{obs:negativity-subsets}
	If a set $I$ is negative at time $t$, then all supersets of $I$ are negative at time $t$.
\end{obs}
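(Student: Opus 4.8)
The plan is to exploit the fact that the event $F_{I,t}$ is \emph{antitone} in $I$: if $I\subseteq J$, then $F_{J,t}=\bigwedge_{i\in J}F_{i,t}$ is a conjunction over a larger index set than $F_{I,t}=\bigwedge_{i\in I}F_{i,t}$, so the occurrence of $F_{J,t}$ logically implies the occurrence of $F_{I,t}$. First I would fix a time $t$ and a set $I$ that is negative at time $t$; by the definition of ``negative'' this means exactly $\Pr[F_{I,t}]=0$. Then I would take an arbitrary superset $J\supseteq I$ and apply monotonicity of probability to the implication $F_{J,t}\Rightarrow F_{I,t}$, which gives $\Pr[F_{J,t}]\le\Pr[F_{I,t}]=0$, hence $\Pr[F_{J,t}]=0$. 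This is precisely the assertion that $J$ is negative at time $t$, completing the argument.

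Since the statement reduces in a single line to monotonicity of probability under event inclusion, I do not anticipate any real obstacle; the only point requiring a little care is the bookkeeping convention that ``negative at time $t$'' is \emph{defined} to be $\Pr[F_{I,t}]=0$, so that the conclusion is immediate and no appeal to Invariant \eqref{invariant:correlation} (the dichotomy $\Pr[F_{I,t}]\in\{0,\prod_{i\in I}(1-x_i^{(t)})\}$) is needed. If one preferred, one could instead phrase the same observation contrapositively: if $J$ were independent at time $t$ then $\Pr[F_{J,t}]=\prod_{i\in J}(1-x_i^{(t)})>0$ would force $\Pr[F_{I,t}]\ge\Pr[F_{J,t}]>0$, so every subset of an independent set is independent; but the direct implication-based argument above is the cleanest route.

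Finally, I would flag that this observation is meant to be used in tandem with Observation \ref{obs:negativity} (negativity of a set persists at all later times): together they say that the family of negative sets is upward closed under both set inclusion and the passage of time, which is the structural fact the subsequent analysis of the maximal rounding algorithm relies on when tracking how independence is destroyed as online nodes arrive.
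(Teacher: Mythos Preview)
Your argument is correct and is exactly the one-line monotonicity reasoning the paper has in mind; indeed, the paper does not even spell out a proof of this observation, treating it as immediate from the inclusion $F_{J,t}\subseteq F_{I,t}$ for $J\supseteq I$. Your contrapositive remark (every subset of an independent set is independent) is also used verbatim in the paper right after the observation.
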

\begin{restatable}{obs}{independentvalid}\label{independent-valid-name}
	If set $I$ is independent at time $t$, then the variables $\{F_{i,t} \mid i\in I\}$ are independent.
\end{restatable}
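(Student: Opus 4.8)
The claim is essentially a restatement of what the preceding display already establishes, repackaged as a statement about mutual independence. The plan is to show that for an independent set $I$ at time $t$, the joint distribution of $\{F_{i,t} \mid i \in I\}$ factors, by computing the probability of every atom of this joint distribution. An atom is specified by a partition of $I$ into a set $K$ of indices that are free and a set $J = I \setminus K$ of indices that are matched, i.e., the event $\bigwedge_{k\in K} F_{k,t} \wedge \bigwedge_{j\in J} \overline{F_{j,t}} = F_{K,t} \cap \overline{F_{J,t}}$ (with $\overline{F_{J,t}} := \bigwedge_{j\in J}\overline{F_{j,t}}$, matching the notation in the display).

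First I would invoke \Cref{obs:negativity-subsets}: since $I$ is independent at time $t$, no subset of $I$ is negative, so every subset $I'\subseteq I$ satisfies $\Pr[F_{I',t}] = \prod_{i\in I'}(1-x_i^{(t)})$ by Invariant \eqref{invariant:correlation}. This is exactly the hypothesis used in the displayed inclusion–exclusion computation immediately preceding the observation. That computation, applied to the disjoint pair $(K, J)$ with $K \cup J = I$, yields
\[
\Pr[F_{K,t},\overline{F_{J,t}}] = \prod_{k\in K}\Pr[F_{k,t}]\cdot \prod_{j\in J}\bigl(1-\Pr[F_{j,t}]\bigr).
\]
Since $\Pr[\overline{F_{j,t}}] = 1 - \Pr[F_{j,t}]$ by Invariant \eqref{invariant:marginals}, the right-hand side is exactly $\prod_{i\in I}\Pr[(F_{i,t}=v_i)]$ where $v_i \in \{0,1\}$ encodes the chosen atom. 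As this holds for every choice of partition $(K,J)$ of $I$, i.e., for every atom, the joint law of $\{F_{i,t}\mid i\in I\}$ is the product law, which is the definition of independence.

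There is essentially no obstacle here beyond bookkeeping: the only nontrivial input, the inclusion–exclusion identity, has already been carried out in the excerpt, so the observation is a one-line corollary of it. The single point that warrants a sentence of care is that independence of a \emph{family} of binary variables is equivalent to the factorization of \emph{all} $2^{|I|}$ atom probabilities (not merely pairwise products or products of the $F_{i,t}$ events alone), and that the displayed identity, ranging over all disjoint partitions $K \sqcup J = I$, supplies precisely all of these atoms. I would also note in passing that the degenerate cases $K=\emptyset$ or $J=\emptyset$ are included and consistent (the former gives $\Pr[\overline{F_{I,t}}]$ via inclusion–exclusion, the latter recovers Invariant \eqref{invariant:correlation} directly), so no case is missing.
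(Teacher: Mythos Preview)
Your proposal is correct and is essentially the same as the paper's own argument: the paper places the inclusion--exclusion computation immediately before the observation, establishing $\Pr[F_{K,t},\overline{F_{J,t}}] = \prod_{k\in K}\Pr[F_{k,t}]\cdot \prod_{j\in J}(1-\Pr[F_{j,t}])$ for all disjoint $K,J\subseteq I$, and the observation is stated as its immediate consequence. Your only addition is the explicit remark that factorization over all $2^{|I|}$ atoms is what is needed for mutual independence, which is a fair point of bookkeeping but not a departure in approach.
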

\begin{proof}By \Cref{obs:negativity-subsets}, If $I$ is independent at time $t$, then so are all of its subsets.
Therefore, for any subset $J\subseteq I$, we have that $\Pr[F_{J,t}] = \prod_{j\in J} \Pr[F_{j,t}]$.
Consequently, for any disjoint subsets $K,J$ of $I$, by the inclusion-exclusion principle and Invariant \eqref{invariant:marginals}, we have that indeed
\begin{align*}
\Pr[F_{K,t}, \overline{F_{J,t}}] & = \sum_{r=0}^{|J|} (-1)^r \sum_{J'\subseteq J, |J'|=r} \Pr[F_{K\cup J',t}] = \sum_{r=0}^{|J|} (-1)^r \sum_{J'\subseteq J, |J'|=r} \prod_{j\in K\cup J'}(1-x_{j}^{(t)}) \\
& = \prod_{k\in K}(1-x_{k}^{(t)})\cdot \prod_{j\in J}x_{j}^{(t)} = \prod_{k\in K}\Pr[F_{k,t}]\cdot \prod_{j\in J}\left(1-\Pr[F_{j,t}]\right). \qedhere
\end{align*}
\end{proof}

Our rounding algorithm for maximal \algotype two-choice algorithms is a special case of \Cref{alg:pairwise-rounding} in \Cref{sec:two-choice-fkg}. Our reasons to present this special case are threefold: (i) it is simpler to describe, (ii) its analysis is more elegant, and (iii) it provides stronger negative correlation properties, which yield efficient polynomial-time implementation of this rounding scheme. 

As asserted above, our rounding scheme of this section is very simple to describe. It examines whether the (at most) two vertices whose fractional value increased are negative or independent, and which of the offline vertices is available to be matched. Then, it carefully decides probabilistically how to match the new online vertex. The formal definition of the algorithm is given in \Cref{alg:rounding}.
\begin{algorithm}[H]
	\caption{Lossless Rounding: The Maximal Case}
	\label{alg:rounding}
	\begin{algorithmic}[1]
		\For{arrival of online node $t$}
		\If{$t$ has less than two neighbors}
		\State add two dummy neighbors $i$ with $\Delta x_i = 0$ and $x_i=1$ \Comment{used to simplify notation}
		\EndIf
		\State let $P_t := \{1,2\}$ be the two neighbors of $t$ of highest $\Delta x_i := x_{i,t}$, and let $x_i:=x_{i}^{(t)}$
		\If{$\{1,2\}$ are negative}
		\State match $t$ to its sole free neighbor $i$ (if any) with probability $\frac{\Delta x_i}{1-x_i}$ \label{line:negative-sole-neighbor}
		\EndIf
		\If{$\{1,2\}$ are independent}
		\If{only one $i\in \{1,2\}$ is free}
		\State match $t$ to $i$ \label{line:single-match}
		\EndIf
		\If{both $1$ and $2$ are free}
		\State match $t$ to node $1$ with prob. $\frac{1-x_2- \Delta x_2}{(1-x_1)(1-x_2)}$ and to node $2$ with prob. $\frac{1-x_1- \Delta x_1}{(1-x_1)(1-x_2)}$ \label{line:independent-both-free}
		\EndIf
		\EndIf
		\EndFor
	\end{algorithmic}
\end{algorithm}	

We observe that the algorithm is well-defined (and dummy vertices are never matched). Indeed, the fractional solution guarantees that $\Delta x_i \in [0, 1-x_i]$, so $\frac{\Delta x_i}{1-x_i}\in[0,1]$ and  $\frac{1-x_i- \Delta x_i}{(1-x_1)(1-x_2)}\geq 0$.
On the other hand, since $\Delta x_1 + \Delta x_2 = 1-x_1x_2$, we have in \Cref{line:independent-both-free} that $\frac{1-x_2- \Delta x_2}{(1-x_1)(1-x_2)}+\frac{1-x_1- \Delta x_1}{(1-x_1)(1-x_2)}=1$. The final equality also implies that $1,2$ cannot both be free after time $t$, i.e., $\Pr[F_{\{1,2\},t+1}] = 0$.

We now turn to proving the key lemma in the analysis of \Cref{alg:rounding}, namely, that this algorithm maintains the above desired invariants.
\begin{lem}\label{lem:invariants}
	\Cref{alg:rounding} preserves invariants \eqref{invariant:marginals} and \eqref{invariant:correlation}.
\end{lem}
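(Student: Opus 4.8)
The plan is to prove Lemma~\ref{lem:invariants} by induction on $t$, maintaining both invariants \eqref{invariant:marginals} and \eqref{invariant:correlation} simultaneously. At $t=1$ all offline nodes have fractional degree $0$ and are trivially free, so both invariants hold. For the inductive step, let $t$ be the arriving online node and let $P_t=\{1,2\}$ be its two (possibly dummy) neighbors of highest $\Delta x_i$, with old fractional degrees $x_1,x_2$ and increments $\Delta x_1,\Delta x_2$ satisfying $\Delta x_1+\Delta x_2 = 1-x_1 x_2$ by maximality.

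First I would verify the marginal invariant \eqref{invariant:marginals}: only nodes $1$ and $2$ can change matched status, so I need $\Pr[(i,t)\in\calM]=\Delta x_i$ for $i=1,2$. This splits into the two cases of the algorithm. If $\{1,2\}$ are negative at time $t$, then at most one of them is free, and by the inclusion–exclusion computation already carried out before Observation~\ref{independent-valid-name} (applied with $K=\{i\}$, $J=\{3-i\}$), $\Pr[F_{i,t},\overline{F_{3-i,t}}]=(1-x_i)\cdot x_{3-i}$; wait — more carefully, since $\{1,2\}$ is negative, $\Pr[F_{1,t},F_{2,t}]=0$, so $\Pr[F_{i,t}]=\Pr[F_{i,t},\overline{F_{3-i,t}}] = 1-x_i$, hence matching $i$ with probability $\Delta x_i/(1-x_i)$ conditioned on $F_{i,t}$ gives unconditional probability $\Delta x_i$. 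If $\{1,2\}$ are independent, then $\Pr[F_{i,t},\overline{F_{3-i,t}}]=(1-x_i)x_{3-i}$ and $\Pr[F_{1,t},F_{2,t}]=(1-x_1)(1-x_2)$; matching $i$ with probability $1$ when it is the sole free node and with probability $\frac{1-x_{3-i}-\Delta x_{3-i}}{(1-x_1)(1-x_2)}$ when both are free yields $\Pr[(i,t)\in\calM] = (1-x_i)x_{3-i} + (1-x_1)(1-x_2)\cdot\frac{1-x_{3-i}-\Delta x_{3-i}}{(1-x_1)(1-x_2)} = (1-x_i)x_{3-i} + (1-x_{3-i}) - \Delta x_{3-i}$, and using $\Delta x_{3-i} = 1-x_1x_2-\Delta x_i$ this simplifies to $\Delta x_i$ after a short calculation. (I would also note dummy nodes have $\Delta x_i=0$, so they are never matched.)

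The main work is the correlation invariant \eqref{invariant:correlation}: for every $I\subseteq[n]$, $\Pr[F_{I,t+1}]\in\{0,\prod_{i\in I}(1-x_i^{(t+1)})\}$. I would partition into cases by how $I$ meets $P_t=\{1,2\}$. If $I\cap\{1,2\}=\emptyset$, then $F_{I,t+1}\equiv F_{I,t}$ and $x_i^{(t+1)}=x_i^{(t)}$ for $i\in I$, so the claim is immediate from the inductive hypothesis. If $I$ was already negative at time $t$, Observations~\ref{obs:negativity} and~\ref{obs:negativity-subsets} keep it negative, so $\Pr[F_{I,t+1}]=0$ and we are done. The substantive cases are when $I$ is independent at time $t$ and meets $\{1,2\}$. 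When $\{1,2\}$ are independent at time $t$ and, say, $1\in I$, $2\notin I$: by Observation~\ref{independent-valid-name} the variables $\{F_{j,t}\mid j\in I\cup\{2\}\}$ are mutually independent, so conditioning on the configuration of $I\setminus\{1\}$ and of node $2$ leaves the behaviour at node $1$ governed only by $F_{1,t},F_{2,t}$ and the algorithm's fresh coin; a direct computation gives $\Pr[F_{1,t+1}\mid F_{I\setminus\{1\},t}] = \Pr[F_{1,t},\overline{F_{2,t}}\mid\ldots] + \Pr[F_{1,t},F_{2,t}, \text{$t$ matched to }2] = (1-x_1)x_2 + (1-x_1)(1-x_2)\cdot\frac{1-x_1-\Delta x_1}{(1-x_1)(1-x_2)} = 1-x_1^{(t+1)}$, so $\Pr[F_{I,t+1}]=(1-x_1^{(t+1)})\prod_{i\in I\setminus\{1\}}(1-x_i)$ as required. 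If instead $\{1,2\}\subseteq I$ and $I$ independent at time $t$, then after time $t$ at least one of $1,2$ is matched (since they cannot both stay free, as noted after the algorithm), so $\Pr[F_{I,t+1}]=0$ — consistent since negativity is allowed. The case where $\{1,2\}$ are negative at time $t$ but $I$ (independent, meeting $\{1,2\}$) is handled similarly: if $I\supseteq\{1,2\}$ then $I$ was already negative (superset of a negative set), contradiction, so $|I\cap\{1,2\}|=1$, say $1\in I$; then since $\{1,2\}$ negative forces $\Pr[F_{1,t},F_{2,t}]=0$, the event $F_{1,t}$ coincides with $F_{1,t}\wedge\overline{F_{2,t}}$ and the coin in \Cref{line:negative-sole-neighbor} again scales $\Pr[F_{1,t}\mid F_{I\setminus\{1\},t}] = 1-x_1$ down to $1-x_1^{(t+1)}$.

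The step I expect to be the main obstacle is cleanly justifying the conditional-independence manipulations in the independent case: one must argue that the fresh coins used at time $t$ are independent of all randomness before time $t$ (true by construction), and that conditioning on $F_{I\setminus\{1\},t}$ does not disturb the joint law of $(F_{1,t},F_{2,t})$ beyond what independence of the set $I\cup\{2\}$ guarantees — i.e. carefully invoking Observation~\ref{independent-valid-name} and Bayes' law (in the spirit of Observation~\ref{obs:bayes-double-condition-ind}) so that the per-node computation is legitimate. Everything else is bookkeeping with the inclusion–exclusion identity already established in the text and the algebraic identity $\frac{1-x_2-\Delta x_2}{(1-x_1)(1-x_2)}+\frac{1-x_1-\Delta x_1}{(1-x_1)(1-x_2)}=1$ that follows from maximality.
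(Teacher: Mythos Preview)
Your overall inductive structure mirrors the paper's proof, and the marginals argument is fine. However, there are two genuine gaps in your treatment of Invariant~\eqref{invariant:correlation} in the case $\{1,2\}$ independent, $1\in I$, $2\notin I$.

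\textbf{First, a computational error.} You write
\[
\Pr[F_{1,t+1}\mid F_{I\setminus\{1\},t}] = \Pr[F_{1,t},\overline{F_{2,t}}\mid\ldots] + \Pr[F_{1,t},F_{2,t}, \text{$t$ matched to }2],
\]
and evaluate the first term as $(1-x_1)x_2$. But when $1$ is the sole free neighbor, \Cref{line:single-match} matches $t$ to $1$ \emph{deterministically}, so $F_{1,t+1}$ fails on that event and the first term contributes $0$, not $(1-x_1)x_2$. Your sum $(1-x_1)x_2 + (1-x_1-\Delta x_1)$ does not simplify to $1-x_1^{(t+1)}$; the correct value is just the second term, $1-x_1-\Delta x_1$.

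\textbf{Second, a missing sub-case.} You invoke Observation~\ref{independent-valid-name} on $I\cup\{2\}$ to get mutual independence of $\{F_{j,t}:j\in I\cup\{2\}\}$, but this requires $I\cup\{2\}$ to be independent at time $t$. Even when both $I$ and $\{1,2\}$ are independent, $I\cup\{2\}$ can be negative (e.g., if $\{2,j\}$ became negative at an earlier step for some $j\in I\setminus\{1\}$). The paper handles this explicitly: when $I\cup\{2\}$ is negative, one has $\Pr[F_{2,t},F_{I,t}]=0$, and since $\Pr[E_1\mid F_{1,t},\overline{F_{2,t}}]=0$ in the independent branch, the conclusion is $\Pr[F_{I,t+1}]=0$. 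You flag the need for independence of $I\cup\{2\}$ as ``the main obstacle'' but never address what happens when it fails; without this sub-case the induction does not close.
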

\begin{proof}
	We prove both invariants in tandem, by induction on $t$. Both invariants clearly hold for $t=1$. Assume the invariants hold for time $t\geq 1$. We prove that this implies the same for time $t+1$.
	For simplicity, we use the shorthand $x_i := x^{(t)}_i$ and $x'_i := x^{(t+1)}_i = x_i + \Delta x_i$, where $\Delta x_i := x_{i,t}$.
	
	\paragraph{Proof of Invariant \eqref{invariant:marginals}:} We prove that $\Pr[(i,t)\in \mathcal{M}] = x_{i,t}$ for each edge $(i,t)\in E$, which implies Invariant \eqref{invariant:marginals} by linearity of expectation.
    We prove the claim for $i=1$ (the proof for $i=2$ is symmetric).
	If $\{1,2\}$ are negative, then, by the inductive hypothesis, $$\Pr[(1,t)\in M] = \Pr[F_{1,t}, \overline{F_{2,t}}]\cdot \frac{\Delta x_1}{1-x_1} = \Pr[F_{1,t}]\cdot \frac{\Delta x_1}{1-x_1} = \Delta x_1.$$
	If $\{1,2\}$ are independent, then by the inductive hypothesis,
	$\Pr[F_{1,t}, F_{2,t}] =
	(1-x_1)(1-x_{2})$ and $\Pr[F_{1,t}, \overline{F_{2,t}}] = (1-x_1)x_2 = x_2-x_1 x_2$.
	Consequently, since $\Delta x_1 + \Delta x_2 = 1-x_1x_2$, we have
	\begin{align*}
	\Pr[(1,t)\in M] & = \Pr[F_{1,t}, \overline{F_{2,t}}] + \Pr[F_{1,t}, F_{2,t}] \cdot \frac{1-x_2- \Delta x_2}{(1-x_1)(1-x_2)} = x_2-x_1x_2 + 1-x_2 - \Delta x_2 = \Delta x_1.
	\end{align*}

	\paragraph{Proof of Invariant \eqref{invariant:correlation}.} By \Cref{obs:negativity}, a set $I$ that ever becomes negative stays negative. Therefore, we only need to consider the case that $I$ (before the current step) is independent, i.e.,
	$\Pr\left[F_{I,t}\right] =\prod_{i\in I}(1-x_i).$
	We prove that before the arrival of node $t+1$, we have \[\Pr\left[F_{I,t+1}\right] \in \left\{0,\prod_{i\in I}(1-x'_i)\right\}.\]
	
	\noindent{\bf The case $\{1,2\}\cap I = \emptyset$:} Then $F_{i,t}=F_{i,t+1}$ and $x_i = x'_i$ for each $i\in I$, and so trivially
	\begin{align*}
	\Pr\left[F_{I,t+1}\right] & = 	\Pr\left[F_{I,t}\right] = \prod_{i\in I}(1-x_i) = \prod_{i\in I}(1-x'_i).
	\end{align*}
	
	\paragraph{The case $\{1,2\} \subseteq I$:} As observed above, after time $t$ at least one of $1,2$ must be matched, i.e., $\Pr[F_{\{1,2\},t+1}] = 0$. Consequently, $\Pr\left[F_{I,t+1}\right] \leq \Pr[F_{\{1,2\},t+1}]=0$.
	
	\paragraph{The case $1\in I$ and $2\not\in I$ (the opposite case is symmetric):}
	Let $E_1$ be the event that $1$ is not matched to $t$.
	There are two sub-cases to consider.
	
	\paragraph{$\{1,2\}$ are negative.} By \Cref{obs:negativity-subsets}, $I\cup\{2\}$ is also negative, and so by independence of $I$,
	\begin{align*}
	\Pr\left[\overline{F_{2,t}},F_{I,t}\right] = \Pr\left[F_{I,t}\right] - \Pr\left[F_{2,t},F_{I,t}\right] = \Pr\left[F_{I,t}\right] = \prod_{i\in I} (1-x_i). 
	\end{align*}
	Consequently, since $x'_1 = x_1 + \Delta x_1$, we have that
	\begin{align*}
	\Pr\left[F_{I,t+1}\right] & =
	\Pr\left[E_1 \mid {F_{1,t}}, F_{2,t}\right] \cdot \Pr\left[{F_{2,t}}, F_{I,t} \right] + \Pr\left[E_1 \mid {F_{1,t}}, \overline{F_{2,t}}\right] \cdot \Pr\left[\overline{F_{2,t}}, F_{I,t} \right] \\
	& = 0 + \left(1-\frac{\Delta x_1}{1-x_1} \right)\cdot \prod_{i\in I}(1-x_i)
	= \prod_{i\in I}(1-x'_i).
	\end{align*}
	
	\vspace{-0.3cm}
	\paragraph{$\{1,2\}$ is independent:}
	If $I\cup \{2\}$ are also independent, then
	$\Pr[F_{2,t},F_{I,t}] = \prod_{i\in I\cup \{2\}}(1-x_i)$.
	Therefore, since $\Pr[E_1 \mid F_{1,t},F_{2,t}] = \frac{1-x_1 - \Delta x_1}{(1-x_1)(1-x_2)}$, and again using $x'_i = x_1+\Delta x_1$, we have that
	\begin{align*}
	\Pr\left[F_{I,t+1}\right] & =
	\Pr\left[E_1 \mid {F_{1,t}}, F_{2,t}\right] \cdot \Pr\left[{F_{2,t}}, F_{I,t} \right]
	+ \Pr\left[E_1 \,\middle\vert\,  F_{1,t}, \overline{F_{2,t}},\right] \cdot \Pr\left[\overline{F_{2,t}}, F_{I,t} \right] \\
	& = \frac{1-x_1 - \Delta x_1}{(1-x_1)(1-x_2)} \cdot (1-x_2) \cdot \prod_{i\in I} (1-x_i) + 0
	= \prod_{i\in I} (1-x'_i).
	\end{align*}
		
	Finally, we address the case that $\{1,2\}$ is independent and $I\cup \{2\}$ is negative.
	In this case, either $2$ is not matched before time $t$, and some node in $I$ must be matched, or
	$2$ is matched before time $t$ and the algorithm matches $1$ in \Cref{line:single-match}. Put otherwise, we have that $I$ becomes negative, as
	\begin{align*}
	\Pr\left[F_{I,t+1}\right] & =
	\Pr\left[E_1 \mid {F_{1,t}}, F_{2,t}\right] \cdot \Pr\left[{F_{2,t}}, F_{I,t} \right] + \Pr\left[E_1 \mid {F_{1,t}}, \overline{F_{2,t}}\right] \cdot \Pr\left[\overline{F_{2,t}}, F_{I,t} \right] = 0.\qedhere
	\end{align*}
\end{proof}

By Invariant \eqref{invariant:marginals} and linearity of expectation, we have that $\Pr[(i,t)\in \mathcal{M}]=x_{i,t}$ for each edge $(i,t)$.
That is, we obtain our main technical result: an online lossless rounding scheme for maximal \algotype two-choice  fractional algorithms.
\subsection{Extensions and Refinements}
\medskip
\noindent\textbf{An Efficient Implementation.}
\Cref{alg:rounding} requires knowledge of whether or not pairs $\{1,2\}$ are negative at time $t$.
That is, it must distinguish between $\Pr[F_{\{1,2\},t}]=0$ and $\Pr[F_{\{1,2\},t}] = \Pr[F_{1,t}]\cdot \Pr[F_{2,t}]$. This can be easily done in exponential time by maintaining the entire probability space.
In \Cref{sec:efficient-implementation} we give a polytime implementation. At the core of this efficient implementation is the (perhaps surprising) observation that the strong invariants of this algorithm allow us to implement it efficiently by explicitly keeping track only of \emph{pairwise correlations}.
\begin{restatable}{lem}{implementation}\label{efficient-implementation}
\Cref{alg:rounding} can be implemented in $O(n)$ time per online node arrival.
\end{restatable}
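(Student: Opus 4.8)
The plan is to observe that the only step of \Cref{alg:rounding} whose naive implementation is expensive is deciding, at time $t$, whether the pair $P_t=\{1,2\}$ is \emph{negative} or \emph{independent}: by Invariant \eqref{invariant:correlation} these are the only two possibilities, and everything else the algorithm does per arrival---reading off $P_t$ from the neighbor list of $t$, evaluating the probabilities $\tfrac{\Delta x_i}{1-x_i}$ and $\tfrac{1-x_i-\Delta x_i}{(1-x_1)(1-x_2)}$ from the maintained fractional values, flipping the corresponding coins, and updating the current random matching (including the free/matched status of each offline node)---is clearly doable in $O(n)$ time (indeed $O(\deg t)$). Nodes of current fractional degree one, including the dummies, are never free and require no special handling. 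So it suffices to maintain, with $O(n)$ update time, just enough information to answer the single bit ``is $\{1,2\}$ negative at time $t$?''.

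The key structural claim I would prove is that negativity is always \emph{generated by pairs}, and in fact corresponds to a \emph{partition}: the algorithm maintains a partition of the offline nodes into \emph{blocks} such that, at every time $t$, a set $I$ is negative iff $I$ contains two nodes lying in a common block (equivalently, a pair $\{i,j\}$ is negative iff $i,j$ lie in a common block). Initially every block is a singleton, which matches the fact that every set is independent at $t=1$. For the inductive step I would read off the case analysis in the proof of \Cref{lem:invariants}: if $\{1,2\}$ is negative at time $t$ (so $1,2$ already lie in a common block), that proof shows no independent set becomes negative, and we change nothing; if $\{1,2\}$ is independent (so $1,2$ lie in distinct blocks), we merge the blocks of $1$ and $2$. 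One then checks, again from the same case analysis, that after this merge a set $I$ is negative iff it meets some block (now including the merged one) in at least two nodes: a set containing $\{1,2\}$ becomes negative, as it meets the merged block twice (this is the case $\Pr[F_{I,t+1}]\le \Pr[F_{\{1,2\},t+1}]=0$); a set $I\ni 1$ with $2\notin I$ stays independent exactly when $I\cup\{2\}$ was independent, i.e.\ exactly when $I$ still meets no block (including the merged one) twice; and $I$ becomes negative exactly when $I\cup\{2\}$ was negative, which, since $I$ itself was independent, forces some $j\in I\setminus\{2\}$ to lie in $2$'s old block and hence in $1$'s new block.

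Given this invariant the implementation is immediate: store a block identifier $\mathrm{block}[i]\in[n]$ for each offline node; checking whether $\{1,2\}$ is negative is the single comparison $\mathrm{block}[1]\stackrel{?}{=}\mathrm{block}[2]$, and merging two blocks is one pass relabeling every node of one block, in $O(n)$ time. (A union--find structure would make this nearly $O(1)$ amortized, but $O(n)$ per arrival already suffices.) Together with the routine $O(n)$ bookkeeping above, this gives the claimed $O(n)$ time per online arrival, establishing \Cref{efficient-implementation}.

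\textbf{Main obstacle.} The delicate point is the structural claim that the negative sets are precisely those meeting some block of a maintained partition in at least two points. A priori, the recursion underlying \Cref{lem:invariants} could create a set that is negative while all of its pairs are independent; ruling this out is a bookkeeping argument showing that a set can only \emph{first} become negative because the pair $\{1,2\}$ driving that step becomes negative (the rest being forced by closure under supersets, \Cref{obs:negativity-subsets}), so that the partition picture is preserved at every step. Everything else is routine.
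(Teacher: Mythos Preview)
Your structural claim---that the strictly negative pairs are exactly those pairs contained in a common block of some evolving \emph{partition}, updated by merging the blocks of $1$ and $2$ whenever $\{1,2\}$ is independent---is false, and this is the genuine gap in the proposal.

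\medskip

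\noindent\textbf{A concrete counterexample.} Take four offline nodes $a,b,c,d$ and three online arrivals with $P_1=\{a,b\}$, $P_2=\{c,d\}$, $P_3=\{a,c\}$, using any maximal \algotype values (e.g., $\Delta x_a=\Delta x_b=\tfrac12$ at step 1, $\Delta x_c=\Delta x_d=\tfrac12$ at step 2, and $\Delta x_a=\Delta x_c=\tfrac38$ at step 3). After step~3 the pairs $\{a,b\},\{c,d\},\{a,c\},\{a,d\},\{b,c\}$ are all (strictly) negative, but $\{b,d\}$ is still independent: since $b,d\notin P_3$ we have $F_{b,4}=F_{b,3}$ and $F_{d,4}=F_{d,3}$, and $\{b,d\}$ was independent at time~3, so $\Pr[F_{\{b,d\},4}]=(1-x_b^{(3)})(1-x_d^{(3)})=\tfrac14\neq 0$. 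Thus the ``strictly negative'' relation on pairs is \emph{not} transitive, and the negative sets are \emph{not} those meeting some block of a partition in two points. Under your merging rule the blocks after step~3 would be $\{a,b,c,d\}$, so your implementation would (wrongly) declare $\{b,d\}$ negative; if $P_4=\{b,d\}$, \Cref{alg:rounding} would take the wrong branch and the marginals of Invariant~\eqref{invariant:marginals} would break.

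The place where your inductive argument slips is precisely here: when $\{1,2\}$ is independent you correctly analyze sets $I$ with $|I\cap\{1,2\}|\ge 1$, but for sets $I$ disjoint from $\{1,2\}$ nothing changes in the algorithm ($F_{I,t+1}\equiv F_{I,t}$), whereas your merge \emph{does} change the predicted status of such sets (e.g., $\{b,d\}$ above). So the partition invariant cannot be preserved.

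\medskip

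\noindent\textbf{What the paper does instead.} The paper does not attempt to compress negativity into a partition. It first proves (\Cref{lem:pair-witness}) that every strictly negative set contains a strictly negative \emph{pair}, so pairwise information suffices; it then maintains, for each offline $i$, the set $S^{(t)}_i=\{j:\{i,j\}\text{ strictly negative at time }t\}$ explicitly, and derives simple update rules (\Cref{lem:negative-sets} and the lemma following it): at a two-choice step, $S_1$ and $S_2$ each absorb the other (plus the opposite endpoint), and every $i\in S_1^{(t)}$ gains $2$ while every $i\in S_2^{(t)}$ gains $1$; nothing else changes. These updates touch $O(n)$ entries, so the membership test ``is $\{1,2\}$ negative?'' and all maintenance fit in $O(n)$ time per arrival. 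In short, your high-level plan (reduce the arrival to one pair query plus bookkeeping) is right, but the data structure must track the full pairwise negativity relation, not a union--find partition.
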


\medskip
\noindent\textbf{Generalization to Non-Maximal Algorithms.}
\Cref{alg:rounding} requires a \emph{maximal} \algotype fractional matching algorithm as its input. 
The assumption of maximality, while sufficient for some of our applications, limits this rounding algorithm's applicability. Indeed, our semi-OCSes require us to round \emph{non-maximal} fractional matchings.
In \Cref{sec:two-choice-fkg} we show how to generalize \Cref{alg:rounding} to non-maximal inputs, significantly extending its applicability.

\medskip
\noindent\textbf{Low-randomness Implementation.}
Using \Cref{alg:rounding} and its generalization to break the barrier of $\nicefrac{1}{2}$ with only doubly-logarithmic randomness requires several other ideas. First, we design a simple restricted fractional online solution. 
We then show that our rounding algorithms applied to such solutions can be implemented with only $(1+o(1))\log\log n$ bits of randomness (using small-bias probability spaces), while only losing a $(1+o(1))$ factor in the competitive ratio. See \Cref{sec:roundingklevel}.
	\section{Online Roundable Fractional Matching Algorithms}\label{sec:fractional}

In this section we present competitive maximal \algotype two-choice fractional algorithms. We focus on the unweighted problem, deferring discussion of our vertex-weighted algorithms to \Cref{sec:weightedfractional}.

\subsection{The Bounded Water-Level Algorithm}\label{sec:waterlevel}

In this section we design a maximal \algotype two-choice fractional algorithm for unweighted matching.
The algorithm simply picks the two offline vertices with smallest fractional degree. It then applies ``water level" on these two vertices (i.e., it raises the fractional degree of the neighbors of lowest degree among the pair) until it is maximally \algotype (i.e., until \eqref{rounding-weaker-condition} holds with equality). The formal description appears as Algorithm \ref{alg:pairwise-fractional}.

\begin{algorithm}[h]
	\caption{Restricted Water Level}
	\label{alg:pairwise-fractional}
	\begin{algorithmic}[1]
		\State initially, set $\vec{x}\leftarrow \vec{0}$.\label{zero-init}
		\For{arrival of online node $t$}
		\If{$t$ has less than two neighbors} 
		\State add two dummy neighbors $i$ with $x^{(t)}_i = 1$
		\Comment{used to simplify notation}
		\EndIf
		\State let $0\leq x^{(t)}_1\leq x^{(t)}_2\leq \dots \leq x^{(t)}_k \leq 1$ be the fractional degrees of neighbors of $t$
        \State Let $x_f=\frac{x^{(t)}_{1} +x^{(t)}_{2}+ 1-x^{(t)}_1\cdot x^{(t)}_2}{2}$. Set $x_{1,t} \gets x_f - x^{(t)}_1$, $x_{2,t} \gets x_f - x^{(t)}_2$
		\EndFor
	\end{algorithmic}
\end{algorithm}

First, we note that \Cref{alg:pairwise-fractional} is well-defined, as it outputs a feasible fractional matching.

\begin{obs}\label{fractional-feasible}
	\Cref{alg:pairwise-fractional} outputs a maximal \algotype  two-choice fractional matching $\vec{x}$, with $x_{i,t}=0$ for all dummy nodes $i$ and online nodes $t$.
\end{obs}

\begin{proof}
We show by induction on $t$ that $\vec{x}$ satisfies the fractional matching constraints for all nodes.
We thus assume that for all offline nodes $x^{(t)}_i\in [0,1]$.
First,
$$x^{(t)}_2 \leq x^{(t)}_2 + \frac{(1-x^{(t)}_2)\cdot (1+x^{(t)}_{1})}{2} = \frac{x^{(t)}_{1} +x^{(t)}_{2}+ 1-x^{(t)}_1\cdot x^{(t)}_2}{2} = 1 - \frac{(1-x^{(t)}_{1})(1- x^{(t)}_{2})}{2} \leq 1,$$
meaning that $x^{(t)}_1 \leq x^{(t)}_2 \leq x_f \leq 1$. This proves that for all $i$, $x^{(t)}_i \leq x^{(t+1)}_i \leq 1$ (and $x_{i,t}\geq 0$).
In particular, if $x^{(t)}_2=1$ then $x_f=1$ and so we do not increase dummy vertices. Finally,
\begin{align*}
x_{1,t}+x_{2,t} &= 2x_f - (x^{(t)}_1 + x^{(t)}_2) = 1- x^{(t)}_1 \cdot x^{(t)}_2\leq 1. \qedhere\end{align*}
\end{proof}

We now turn to analyzing the competitive ratio of this algorithm. We prove the following:

\begin{lem}\label{lem:fracwater}
Let $g:[0,1]\rightarrow[0,1]$ be a twice differentiable function that is increasing, convex and bijective in $[0,1]$ (and so, in particular, satisfies $g(0)=0$ and $g(1)=1$).
Then, Algorithm \ref{alg:pairwise-fractional} is $\alpha_g$-competitive, where
\begin{equation}\label{comp-ratio-two-choice-WF}
	\alpha_g := \min_{x\in [0,1]} \frac{1-x^2}{1-3g(x)+2g\left(x+\frac{1-x^2}{2}\right)}.
	\end{equation}
\end{lem}

We first prove the following claim.
\begin{claim}\label{derivative-claim}
Let $g:[0,1]\rightarrow[0,1]$ be a twice differentiable function that is monotone increasing, convex and bijective in $[0,1]$.
Then, for any $x\in [0,1]$, we have
$$g'(x) \leq g'(1) \leq \frac{1}{\alpha_g}.$$
\end{claim}
\begin{proof}
	The first inequality follows from $f$ being convex and twice differentiable. For the second inequality, we note that the RHS can be written as
	\begin{align*}
	\alpha_g = \max_{x\in [0,1]} \frac{1-3g(x)+2g\left(x+\frac{1-x^2}{2}\right)}{1-x^2} & = \max_{x\in [0,1]} \left(\frac{1-g(x)}{1-x^2} + \frac{g\left(x+\frac{1-x^2}{2}\right)-g(x)}{\frac{1-x^2}{2}}\right) \\
	&  \geq \max_{x\in [0,1]} \frac{g\left(x+\frac{1-x^2}{2}\right)-g(x)}{\frac{1-x^2}{2}} \\
	& \geq g'(1),
	\end{align*}
	where the first inequality follows follows from $1-g(x)\geq 0$ for $x\in [0,1]$ and so $\frac{1-g(x)}{1-x^2}\geq 0$, while the second inequality follows by considering $x\rightarrow 1$ and applying L'h\^{o}pital's rule.
\end{proof}

We next prove that \Cref{alg:pairwise-fractional} is $\alpha_g$-competitive.

\begin{proof}[Proof of \Cref{lem:fracwater}]
The proof relies on dual fitting. Let $g$ be a function that satisfies the conditions of the lemma.
We use this (monotone increasing) function $g(\cdot)$ to assign dual values to offline node with fractional degree $x_i$.
 When an online node $t$ arrives, we denote by $0\leq x_1\leq x_2\leq \dots \leq x_k \leq 1$ the fractional degrees of its neighbors (including dummy neighbors).
	\Cref{alg:pairwise-fractional} increases the fractional degree of neighbors $i=\{1,2\}$ of $t$ to $x_f := \frac{x_1+x_2 + 1-x_1x_2}{2} = \hat{x}+ \frac{1-x_1x_2}{2}$, where $\hat{x}:=\frac{x_1+x_2}{2}$. (Note that this is indeed an increase, as by \Cref{fractional-feasible}, for $i=1,2$, we have $x_f - x_i = x_{i,t} \geq 0$, and so $x_f\geq x_i$.) We set the dual of the online node at time $t$ to $1-g(x_2)$, while maintaining the invariant that each offline node $i$ with fractional degree $x_i$ has dual value $y_i = g(x_i)$. This satisfies the dual constraint for each edges $(i,t)$, due to the monotonicity of $y_i$ as a function of $x_i$, implying $1 - g(x_2) + g(x_i) \geq 1$ for all $x_i\geq x_2$, and due to the new final fractional degree of $1$ and $2$ satisfying $x_f \geq x_2$.
	
	We show that the primal gain is at least $\alpha_g$ times the dual cost, which implies the lemma, by weak duality.
	Let $\hat{x}_f= \hat{x}+ \frac{1-\hat{x}^2}{2}$. By the AM-GM inequality we have:
	\begin{align*}
	\hat{x}_f &=  \hat{x}+ \frac{1-\hat{x}^2}{2} \leq \hat{x}+ \frac{1-x_1x_2}{2} = x_f.
	\end{align*}
	We now show that the dual and primal changes satisfy $\Delta D - \frac{1}{\alpha_g }\cdot \Delta P\leq 0$. Indeed,
	\begin{align}
	\Delta D -\frac{1}{\alpha_g } \cdot \Delta P  & = 1-g(x_2) + (g(x_f) - g(x_1)) + (g(x_f)- g(x_2)) -\frac{1}{\alpha_g } \cdot 2 (x_f- \hat{x}) \nonumber \\
	& \leq 1 -g(\hat{x}) + 2g(x_f)- 2g(\hat{x}) -\frac{1}{\alpha_g } \cdot 2 (x_f- \hat{x})\label{ineqtwo11-unweighted}\\
	& \leq 1 -g(\hat{x}) + 2g(\hat{x}_f)- 2g(\hat{x}) -\frac{1}{\alpha_g } \cdot 2 (\hat{x}_f- \hat{x})\label{ineqtwo31-unweighted}\\
	& \leq \max_{x\in[0,1]}\left\{1 - 3g(x) + 2g\left(x+\frac{1-x^2}{2}\right) - \frac{1}{\alpha_g } \cdot (1-x^2)\right\} = 0. \nonumber
	\end{align}
	
	Inequality \eqref{ineqtwo11-unweighted} follows by convexity of $y$ implying $2g(\hat{x})\leq g(x_1)+g(x_2)$, and $x_2 \geq \hat{x}$ together with $y_i$ being monotone increasing in time.
	Inequality \eqref{ineqtwo31-unweighted} follows from $\hat{x}_f \leq x_f$  and the function $g(x) - \frac{1}{\alpha_g }\cdot x$ being monotone decreasing in $x$, as $g'(x)-\frac{1}{\alpha_g }\leq 0$ for all $x\in [0,1]$, by \Cref{derivative-claim}.
	The final equality follows by the definition of $\alpha_g $.
\end{proof}

Finally, we prove the upper and lower bounds on the competitive ratio of the algorithm.
\begin{thm}\label{thm:pairwise-fractional}
Algorithm \ref{alg:pairwise-fractional} is $\alpha$-competitive, where $\alpha$ is at least $\ratio$ and at most $\approx \realratio$.
\end{thm}

\begin{proof}
For the lower bound we note that the function $g(x):=\frac{a^x-1}{a-1}$ for $a=1.6$ satisfies the conditions of Lemma \ref{lem:fracwater}, and achieves a value of $\alpha_g \approx \ratio$.
For the upper bound, we design a bad example that shows that the competitive ratio of the algorithm is at most $\sum_{i\geq 0} \frac{1}{3}\cdot \left(\frac{2}{3}\right)^i \cdot \left(1-2^{-2^{i}+1}\right)\approx\realratio$.

The bad example consists of a bipartite graph with $n = 3^k$ nodes on either side, with a perfect matching.
The online nodes arrive in rounds, as follows.
At the beginning of round $i<k$, a subset of the offline nodes is active, and they all have the same fractional degree. The online nodes of a round each have three distinct neighbors among the active offline nodes. In every such three-tuple of offline nodes (neighboring a common online node in round $i$), one node is not matched at all. This node is chosen to be de-activated. A simple proof by induction shows that the number of active offline nodes in round $i=0,1,2,\dots,k-1$ is $\left(\frac{2}{3}\right)^{i}\cdot n$, while the fractional degree of active nodes in round $i$ is $1-2^{-2^i+1}$. Therefore, the nodes which are de-activated in round $i$ only accrue a gain of $1-2^{-2^i+1}$. As these de-activated nodes in round $i$ are a third of the active nodes in this round, we find that the total gain of the algorithm from nodes de-activaed in round $i$ is $\frac{1}{3}\cdot \left(\frac{2}{3}\right)^{i} n \cdot \left(1-2^{-2^i+1}\right)$. Finally, in the last round, each of the $\left(\frac{2}{3}\right)^k\cdot n = 2^k = o(n)$ active nodes has one distinct online neighbor, and so each of these offline nodes gets a gain of one. The nodes of this last round guarantee the existence of a perfect matching in $G$, consisting of the edges of the last round, together with an edge between every online node $t$ and its offline neighbor de-activated in the round $t$ arrived in. We conclude that the algorithm's competitive ratio is at most
	\begin{align*}
	& \inf_{k} \sum_{i=0}^k \frac{1}{3}\cdot \left(\frac{2}{3}\right)^i\cdot (1-2^{-2^i+1}) +\frac{2^k}{3^k}  \approx \realratio \qedhere
	\end{align*}
\end{proof}

\begin{remark}
	Numerical approximations show that our upper bound of $0.536$ is tight for \Cref{alg:pairwise-fractional}.
\end{remark}

\subsection{The $k$-Level Unweighted Algorithm}\label{sec:klevelfrac}

In this section we design a fractional $k$-level algorithm for the unweighted matching problem.
The algorithm uses levels $z_0=0<z_1<z_2<\ldots<z_k<1$, where $z_i= z_{i-1} + \frac{1-z_{i-1}^2}{2} = \frac{(1-z_{i-1})^2}{2}$. Solving the recursion yields $z_i := 1-2^{-2^i+1}$.
Our algorithm's pseudocode is given by \Cref{alg:k-level}.

\begin{algorithm}[h]
	\caption{The $k$-Level Algorithm}
	\label{alg:k-level}
	\begin{algorithmic}[1]
		\State initially, set $\vec{x}\leftarrow \vec{0}$\label{zero-init}
		\For{arrival of online node $t$}
		\If{$t$ has less than two neighbors} 
		\State add two dummy neighbors $i$ with $x^{(t)}_i = 1$
		\Comment{used to simplify notation}
		\EndIf
		\State let $0\leq x^{(t)}_1\leq x^{(t)}_2\leq \dots \leq x^{(t)}_k \leq 1$ be the fractional degrees of neighbors of $t$
		\If{$x^{(t)}_1 < x^{(t)}_2$ \textbf{or} $x^{(t)}_1 = z_k$}
		\State set $x^{(t+1)}_1\gets 1$ \Comment{$x_{1,t} \gets 1-x^{(t)}_1$}
		\ElsIf{$x^{(t)}_1 = x^{(t)}_2 =z_i < z_k$}		
		\State set $x^{(t+1)}_1 = x^{(t+1)}_2 \gets z_{i+1}$ \Comment{$x_{1,t}=x_{2,t} \gets z_{i+1} - z_i$}
		\EndIf
		\EndFor
	\end{algorithmic}
\end{algorithm}

The following observation shows that Algorithm \ref{alg:k-level} indeed satisfies Definitions \ref{def:precise} and \ref{def:klevel}, and so can be rounded with less randomness.
\begin{obs}\label{obs-klevel}
Algorithm \ref{alg:k-level} is a $k$-level, maximal \algotype and $2^{k-1}$-bit precise algorithm.
\end{obs}

\begin{proof}
It is easy to verify that the steps of the algorithm satisfy the $k$-level requirements in Definition \ref{def:klevel}. Also, since $z_i= z_{i-1} + \frac{1-z_{i-1}^2}{2}$ it is easy to see that the algorithm is a maximal \algotype algorithm.
Finally, we need to prove that the algorithm is $b$-bit precise (satisfies Definition \ref{def:precise}).
We should prove that at any time $t$:
$$\left\{\frac{x_{i,t}}{1-x^{(t)}_i},\, \frac{1-x^{(t)}_i-x_{i,t}}{(1-x_1)(1-x_2)}\right\} \in \left\{\frac{a}{2^b} \,\,\bigg|\,\, a\in  \{0,1,\dots,2^b\}\right\}.$$

As each time $t$, the first term is either $1$ or of the form: $\frac{z_{i} - z_{i-1}}{1-z_{i-1}} = \frac{1-z_{i-1}^2}{2(1-z_{i-1})}=\frac{1+z_{i-1}}{2} = 1-2^{-2^{i-1}}$, where $i\leq k$.
The second term is always of the form $\frac{1-z_{i}}{(1-z_{i-1})^2}=\frac{1}{2}$.
Therefore the algorithm is $b$-bit precise for $b=2^{k-1}$.
\end{proof}

\subsubsection{Warm-up: Analysis of the $2$-level Algorithm}\label{sec:algtwo}

As a warm-up, we analyze the algorithm when $k=2$.
In this case the algorithm uses only two levels: $z_1=\frac{1}{2}, z_2=\frac{7}{8}$. The algorithm has the following 4 cases:
\begin{itemize}
\item $x^{(t)}_{1}=x^{(t)}_{2}=0$: set $x^{(t+1)}_{1}=x^{(t+1)}_{2} \gets\frac{1}{2}$.
\item $x^{(t)}_{1}=x^{(t)}_{2}=\frac{1}{2}$: set $x^{(t+1)}_{1}=x^{(t+1)}_{2} \gets\frac{7}{8}$.
\item $x^{(t)}_{1}=x^{(t)}_{2}=\frac{7}{8}$: set $x^{(t+1)}_{1}\gets 1$.
\item $x^{(t)}_{1}<x^{(t)}_{2}$: set $x^{(t+1)}_{1}\gets 1$.
\end{itemize}

\begin{thm}
The fractional $2$-level  algorithm is $\nicefrac{1}{2}+\nicefrac{1}{36}\approx 0.527$-competitive.
\end{thm}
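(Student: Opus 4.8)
The plan is to reuse the dual-fitting machinery of \Cref{lem:fracwater}, but since we are now in the discrete $2$-level setting, we cannot use a smooth convex $g$; instead we will assign dual values directly to the four possible fractional degrees $\{0,\tfrac12,\tfrac78,1\}$ and verify the primal-dual inequality $\Delta D - \tfrac{1}{\alpha}\Delta P \le 0$ case-by-case for the (finitely many) step types. Concretely, I would fix the competitive ratio target $\alpha = \tfrac12 + \tfrac1{36}$ and set $y_i = g(x_i)$ for offline node $i$, where $g$ is now a piecewise-linear-in-spirit assignment $g(0)=0$, $g(\tfrac12)=c_1$, $g(\tfrac78)=c_2$, $g(1)=1$ for constants $c_1,c_2 \in [0,1]$ to be pinned down; the online node $t$ with neighbors of degrees $x_1 \le x_2 \le \dots$ gets dual $y_t = 1 - g(x_2)$, exactly as in \Cref{lem:fracwater}. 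Monotonicity of $g$ immediately gives dual feasibility $y_i + y_t \ge 1$ for every edge $(i,t)$, since the final degree of the two raised neighbors is $\ge x_2$ and every other neighbor already has degree $\ge x_2$.

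The core of the argument is then checking $\Delta D - \tfrac{1}{\alpha}\Delta P \le 0$ at each arrival. There are really only three nontrivial step types to inspect: (i) the ``level-$0$ to level-$1$'' step, where $x_1 = x_2 = 0$ are both raised to $\tfrac12$, contributing $\Delta P = 1$ and $\Delta D = (1 - g(0)) + 2(g(\tfrac12) - g(0)) = 1 + 2c_1$; (ii) the ``level-$1$ to level-$2$'' step, where $x_1 = x_2 = \tfrac12$ are raised to $\tfrac78$, contributing $\Delta P = \tfrac34$ and $\Delta D = (1 - c_1) + 2(c_2 - c_1) = 1 - 3c_1 + 2c_2$; and (iii) the deterministic/shift steps, where a single node is raised to $1$ (either from level $2$, or a ``$x_1 < x_2$'' step from some level to $1$), contributing $\Delta P = 1 - x_1$ and $\Delta D = (1 - g(x_2)) + (1 - g(x_1))$ for the appropriate $x_1 \le x_2$. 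In each case I need $\Delta D \le \tfrac1\alpha \Delta P$. Writing these out yields a small system of linear inequalities in $c_1, c_2$ (one inequality per case, with the deterministic case further splitting on which level $x_1$ sits at, i.e. $x_1 \in \{0, \tfrac12, \tfrac78\}$ combined with $x_2 \in \{\tfrac12, \tfrac78, 1\}$); I would solve this system and check that $c_1, c_2$ exist making every inequality tight or slack at $\alpha = \tfrac12 + \tfrac1{36} = \tfrac{19}{36}$, which also identifies the binding constraints and hence the tight instance.

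The main obstacle I anticipate is \emph{not} any single calculation but getting the case analysis genuinely exhaustive: because the algorithm's behavior at an arrival depends on the full sorted multiset of neighbor degrees (in particular whether $x_1 = x_2$ or $x_1 < x_2$, and on which of the values $0, \tfrac12, \tfrac78$ they equal), I must make sure I have covered every combination of $(x_1, x_2)$ and argued that raising only neighbors $1, 2$ (and only when they are tied, or raising a single neighbor to $1$ otherwise) never violates dual feasibility for the \emph{other} neighbors of $t$ — this uses $y_t = 1 - g(x_2)$ together with $x_j \ge x_2 \Rightarrow g(x_j) \ge g(x_2)$. A secondary subtlety is the degenerate ``shift''-type step where $x_1 = 0$ and $x_2 \in (0,1)$ get mapped to $(x_2, 1)$: here $\Delta P = (x_2 - 0) + (1 - x_2) = 1$ but $\Delta D = 1 - g(1) + (g(x_2) - g(0)) + (g(1) - g(x_2)) = 1$, so this step is automatically balanced and contributes no constraint — I would flag this explicitly so the reader sees why it drops out. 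Once the constants $c_1, c_2$ are fixed (I expect roughly $c_1 \approx \tfrac{1}{3}$-ish and $c_2$ close to $1$), weak duality gives $\mathrm{ALG} \ge \alpha \cdot \mathrm{OPT}$, completing the proof; the matching upper bound of $\approx 0.527$ for this specific algorithm follows from the binding-constraint instance, which is the $k=2$ specialization of the hard instance already described after \Cref{thm:pairwise-fractional}.
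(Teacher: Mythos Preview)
Your proposal is correct and takes essentially the same approach as the paper: a direct dual-fitting argument with discrete values $g(0)=0$, $g(\tfrac12)$, $g(\tfrac78)$, $g(1)=1$, online dual $y_t = 1 - g(x_2)$, and a finite case check that $\Delta D \le \tfrac{1}{\alpha}\Delta P$ for each of the step types. The paper carries out exactly this, with the optimized constants $g(\tfrac12)=\tfrac{17}{38}$ and $g(\tfrac78)=\tfrac{67}{76}$ (your guess of $c_1\approx\tfrac13$ is a bit low), and the same four cases you list. One small correction: the $2$-level algorithm of the paper never performs a shift step (when $x_1<x_2$ it simply sets $x_1\gets 1$), so your shift-step paragraph is unnecessary---and your computation there is also off, since $y_t = 1 - g(x_2^{\mathrm{before}})$, not $1-g(1)$, giving $\Delta D = 2 - g(x_2)$ rather than $1$; but as you say, it is moot here.
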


\begin{proof}
The analysis is via a dual fitting argument. Let $\{1,2\}$ be the two vertices that were increased, and for simplicity we denote by $x_1,x_2$ their fractional degree at time $t$.
We use the following (optimized) values for the dual nodes: $y_1= y(\frac{1}{2}) = 17/38, y_2= y(\frac{7}{8})= 67/76$. These are the values of the offline nodes at the corresponding levels.
To guarantee that the dual solution is feasible, we set the dual value of node $t$ to $1-y(x_2)$. This satisfies the dual constraints of all edges $(i,t)$, and as the dual values are only increasing the dual constraints remain satisfied. We next analyze the four cases, proving that in each one the ratio between the values of the primal and dual changes is at most $1+\frac{17}{19}$. This concludes the proof.

\noindent{\bf Case 1  ($x_1=0,x_2=0$):} $x_1=x_2\gets\frac{1}{2}$. The value of the online node can be set to $1-y(0)=1$. Thus,
\[\frac{\Delta D}{\Delta P} = \frac{1+y_1+y_1}{1} = 1+\frac{17}{19}.\]

\noindent{\bf Case 2  ($x_1=\frac{1}{2},x_2=\frac{1}{2}$):} $x_1=x_2\gets\frac{7}{8}$. The value of the online node can be set to $\min\{1-y_1\}$. Thus,
\[\frac{\Delta D}{\Delta P} = \frac{1-y_1 +2(y_2-y_1)}{2\cdot \frac{3}{8}} = 1+\frac{17}{19}.\]

\noindent{\bf Case 3  ($x_1=\frac{7}{8},x_2=\frac{7}{8}$):}
We set $x_1\gets1$ and we can set the value of the online node to $1-y_2$. Thus,
\[\frac{\Delta D}{\Delta P} = \frac{1-y_2+ 1-y_2}{\frac{1}{8}} = 1+\frac{17}{19}.\]

\noindent{\bf Case 4  ($x_1<x_2$):}
We set $x_1\gets1$ and the value of the online node can be set to $1-y(x_2)$. There  are several possible cases here (some are easily dominated by others in terms of competitiveness). $(x_1,x_2)\in\{(0,\frac{1}{2}), (0,\frac{7}{8}), (0,1),(\frac{1}{2},\frac{7}{8}),(\frac{1}{2},1), (\frac{7}{8},1) \}$. The worst ratio is obtained when $x_1=0, x_2=\frac{1}{2}$. Thus,
\begin{align*}\frac{\Delta D}{\Delta P} & = \frac{1-y(x_2)+ 1-y(x_1)}{1-x_1} \leq \frac{1-y_1 + 1}{1}= 1+\frac{21}{38}.\qedhere
\end{align*}
\end{proof}

\subsubsection{Analysis of the $k$-level Algorithm}

In this section we give a general analysis of the $k$-level algorithm.
We prove the following Lemma.

\begin{lem}\label{thm:k-levels-fractional}
Let $g:[0,1]\rightarrow[0,1]$ be a twice differentiable function that is monotone increasing, convex and bijective in $[0,1]$ (and so, in particular, satisfy $g(0)=0$ and $g(1)=1$).
Then, Algorithm \ref{alg:pairwise-fractional} is $\left(\alpha_g - O(2^{-2^k})\right)$-competitive, where
\begin{equation}\label{comp-ratio-two-choice-WF}
	\alpha_g := \min_{x\in [0,1]}\left\{\frac{1-x^2}{1-3g(x)+2g\left(x+\frac{1-x^2}{2}\right) },\, \frac{1-x}{2 - g(x) -g(x +\frac{1-x^2}{2})}\right\}.
	\end{equation}
\end{lem}

\begin{proof}
The proof relies on dual fitting. Let $g$ be a function that satisfies the conditions of the lemma.
    We use the function $g(\cdot)$ to assign dual values to offline node with fractional degree $x_i$. We assign a dual value $y_i = g(x_i)$.
    For any online node $t$ whose neighbors' fractional degrees at time $t$ are $x^{(t)}_1 \leq x^{(t)}_2 \leq \dots$, we set $y_t = 1-g(x^{(t)}_2)$.
	This dual solution is trivially feasible.
	Now, consider $\Delta P$ and $\Delta D$ following an online node's arrival. We will show that $\Delta P/\Delta D$ is at least $\alpha_g$ for all arrivals except for a small fraction of arrivals, when weighted by their contribution to $P$. This will prove the competitive ratio. There are three cases to consider.
	
	\paragraph{Case 1: $x^{(t)}_1 = x^{(t)}_2 = z_m < z_k$.}
	In this case we have that the primal gain and dual change are
	\begin{align*}
	\Delta P & = 1-z_m^2,\\
	\Delta D & = 2(g(z_{m+1}) - g(z_m)) + 1-g(z_m).
	\end{align*}
	By the definition of $\alpha_g$, and since $z_{m+1} = z_m + \frac{(1-z_m)^2}{2}$, we have $$\Delta P/\Delta D =  \frac{1-z_m^2}{1-3g(z_m)+2g\left(z_m+\frac{1-z_m^2}{2}\right)} \geq \min_{x\in [0,1]} \frac{1-x^2}{1-3g(x)+2g\left(x+\frac{1-x^2}{2}\right)} \geq \alpha_g.$$
	
	\paragraph{Case 2: $x^{(t)}_1 =z_m < x^{(t)}_2 < 1$, with $m<k$.}
	In this case we have that the primal gain and dual cost are
	\begin{align*}
	\Delta P & = 1-z_m,\\
	\Delta D & \leq 1 - g(z_m) + 1-g(z_{m+1}) = 2 - g(z_m) -g(z_{m+1}) = 2 - g(z_m) -g\left(z_m +\frac{1-z_{i-1}^2}{2}\right).
	\end{align*}
 By the definition of $\alpha_g$,
 $$\Delta P/\Delta D =  \frac{1-z_m}{2 - g(z_m) -g(z_m +\frac{1-z_{i-1}^2}{2})} \geq \min_{x\in [0,1]} \frac{1-x}{2 - g(x) -g(x +\frac{1-x^2}{2})} \geq \alpha_g.$$

	\paragraph{Case 3: $x^{(t)}_1 = z_k$.}
	In this case we have that the primal gain and dual change are
	\begin{align*}
	\Delta P & = 1-z_k,\\
	\Delta D & = 1 - g(z_k) + 1-g(z_{k}) = 2 - 2g(z_k).
	\end{align*}

Every time we get to case $3$ the fraction of a new offline node becomes 1. Hence the total dual cost in all these steps, $D'$, is at most $2(1 - g(z_k))P$, where $P$ is the final primal solution.

Summing up over all steps, we get that:
\[P \geq \alpha_g \cdot\left( D - D'\right) = \alpha_g \cdot \left(D- 2 (1 - g(z_k))P\right) \geq \alpha_g \cdot \left(D- \frac{2(1 - z_k)}{\alpha_g}P\right) = \alpha_g \cdot D- 2 (1-z_k) P, \]
where the last inequality follows since by Claim \ref{derivative-claim} for each $g\in \mathcal{F}$, we have
$\frac{1 - g(z_k)}{1-z_k} =g'(\hat{x}) \leq g'(1) \leq \frac{1}{\alpha_g}\leq \frac{1}{\alpha'_g}$, where $0<\hat{x}<1$.
Overall, we get $P\geq \frac{\alpha_g}{1+2(1-z_k)} D= \frac{\alpha_g}{1+2^{-2^k+2}}D = \left(\alpha_g - O(2^{-2^k})\right)D$.
\end{proof}

The proof of the competitive ratio of the $k$-level algorithm follows the same argument as Theorem \ref{thm:pairwise-fractional}. In particular, the same function $g$ can be used to show the competitive ratio, and the adversarial sequence that shows the upper bound is the same.

\begin{thm}\label{thm:pairwise-fractional2}
Algorithm \ref{alg:k-level} is $\left(\alpha_g - O(2^{-2^k})\right)$-competitive with $\alpha_g \in [\ratio,\realratio]$.
\end{thm}

	\section{Conclusion and Open Questions}
We give renewed impetus for the study of online rounding for online bipartite matching problems. 
Indeed, we observe that online rounding can (in principle) yield any competitive ratio achievable for these (and indeed, for any) online problems. This follows by the following ``nonconstructive'' argument:
for any randomized algorithm $\calA$, its marginals yield a fractional algorithm $\calA_f$ that can be rounded losslessly online, by running $\calA$.
As we show, making this approach constructive may require adding additional constraints for the fractional problem, similarly to the addition of constraints to polytopes in offline settings to decrease their integrality gap.
Our qualitative result is a set of new constraints which we prove are sufficient to round two-choice fractional matching algorithms.
Echoing recent results in the area, our obtained two-choice randomized algorithms allow us to break the barrier of $\nicefrac{1}{2}$ (\cite{fahrbach2020edge,huang2020adwords,gamlath2019online}), in the context of randomness and advice complexity.
We further show that this lossless online rounding approach yields simple optimal Semi-OCSes.

Our applications suggest some natural questions: What is the highest competitive ratio achievable using $(1\pm o(1))\log\log n$ random/advice bits? 
What other online correlated selection algorithms can one construct and use? 
One other question implied by our work stands out: 
What conditions allow for lossless roundability of \emph{multiple-choice} algorithms?
An answer to this question requires a deeper understanding of the space of randomized algorithms and constraints on their induced marginals. 
We see this work as a first step in this direction.
		
	\paragraph{Acknowledgements.} We thank the anonymous reviewers for helpful comments on presentation.
	We also thank the anonymous reviewer for pointing out that our additional constraints have some syntactic similarities to the constraints in Border's Theorem~\cite{border91}.

	\appendix
	\section*{Appendix}
	\section{Impossibility of Lossless Online Rounding}\label{sec:impossible}

In this short section we briefly present an example demonstrating the impossibility of online lossless rounding. This example can be seen as a prefix of the example discussed in \citet{devanur2013randomized} when discussing the impossibility of lossless online rounding, which is itself a special case of the lower bound of \citet{cohen2018randomized}.

\medskip

\noindent\textbf{Example 1.}
We consider a bipartite graph with simple a two-choice fractional matching assigning values $x_{i,t}\gets \nicefrac{1}{2}$ to all edges $(i,t)\in E$, and show that this matching cannot be rounded losslessly.
The first two online vertices neighbor offline vertex sets $\{1,2\}$ and $\{3,4\}$, respectively.
A lossless online rounding scheme must match each edge $(i,t)$ with probability $x_{i,t}=\nicefrac{1}{2}$, would match both offline vertices with probability one. Consequently, both these online vertices are matched with probability one.
A simple averaging argument shows that for some pair $(i,j)\in \{1,2\}\times \{3,4\}$, the probability that they are both matched after these two online vertices arrive is at least $\frac{1}{4}$.
Next, if another online vertex arrives that has these two vertices as neighbors, it can be matched with probability at most $\frac{3}{4}$, or strictly less than the fractional solution, which matches it to an extent of one. That is, this two-choice fractional matching cannot be rounded losslessly.

\medskip 

\noindent\textbf{Remark.} We note that the fractional solution in the above algorithm is obtained by the optimal fractional algorithm \textsc{balance} \cite{kalyanasundaram2000optimal}. Therefore, this optimal fractional algorithm is not induced by any randomized algorithm, and consequently is not losslessly roundable online. 
	\section{Rounding, and an FKG-like Inequality}\label{sec:two-choice-fkg}

In this section we show how to round losslessly \algotype two-choice  fractional solutions online, proving \Cref{thm:rounding-intro}.
In \Cref{rounding-algo-main} and \Cref{pairwise-rounding-analysis} we present the lossless rounding algorithm, and prove its properties. In \Cref{sec:rounding} an analysis of a special case of our algorithm for \emph{maximal} \algotype two-choice algorithms, which, while less general, has the advantage of being simpler to describe, having stronger negative correlation properties, and allowing for a polytime implementation.
In \Cref{sec:roundingklevel} we then prove that this algorithm can be implemented with $o(1/n^2)$ additive loss per edge using $(1+o(1))\log\log n$ bits of randomness.

\subsection{The Algorithm and its Invariants}\label{rounding-algo-main}
In this section we design our lossless rounding algorithm. We are given a $2$-choice fractional algorithm that satisfies property \eqref{rounding-weaker-condition} meaning that for any online node $t$, $\sum_{i\in P_t} x_{i,t} \leq 1-\prod_{i\in P_t} x^{(t)}_i$. Let $F_{i,t}$ be the event that offline node $i$ is free (unmatched in $\mathcal{M}$) by time $t$, and let $F_{I,t}:=\bigwedge_{i\in I}F_{i,t}$ be the event that all nodes in $I\subseteq [n]$ are free by time $t$. Our online rounding algorithm maintains the following two invariants for any time $t$.
\begin{align}
\Pr[(i,t)\in \calM] & = x_{i,t} & \qquad \forall (i,t)\in E \label{invariant:marginals-non-maximal}\\
\Pr[F_{i,t} \mid F_{K,t}] & \leq \Pr[F_{i,t} \mid F_{J,t}] & \qquad \forall t,\, \forall i \in [n],\, \forall J\subseteq K\subseteq [n]\setminus \{i\} \textrm{ s.t.} \Pr[F_{K,t}]> 0 \label{invariant:FKG}
\end{align}

The first condition is precisely losslessness, while the second monotonicity property
	is precisely \textbf{log-submodularity} of the function $f(I):=\Pr[F_{I,t}]$,
	\[
	\log \Pr[F_{K+i,t}] - \log \Pr[F_{K,t}] \leq \log \Pr[F_{J+i,t}] - \log \Pr[F_{J,t}].
	\]
	Equivalently, this is a (reverse) \textbf{FKG-like lattice condition}
	 \cite{fortuin1971correlation},
	\[
	\Pr[F_{A\cap B,t}]\cdot \Pr[F_{A\cup B,t}] \leq \Pr[F_{A,t}]\cdot \Pr[F_{B,t}].
	\]
	Invariant \eqref{invariant:FKG} implies negative pairwise correlation between the $F_{i,t}$ variables, i.e.,
	$\Pr[F_{i,t},F_{j,t}] \leq \Pr[F_{i,t}]\cdot \Pr[F_{j,t}]$ for all $i\neq j,$
	and hence between these variables' complements, $M_{i,t} := 1-F_{i,t}$, i.e.,
		$\Pr[M_{i,t},M_{j,t}] \leq \Pr[M_{i,t}]\cdot \Pr[M_{j,t}]$.
	Therefore, combining Condition \eqref{rounding-weaker-condition} with Invariants \eqref{invariant:marginals-non-maximal} and \eqref{invariant:FKG} we obtain the following bound on the probability of any online node $t$ being matched.
	\[
	\Pr[t \textrm{ matched}] \stackrel{\eqref{invariant:marginals-non-maximal}}{=}
	\sum_{i\in P_t} x_{i,t}
	\stackrel{\eqref{rounding-weaker-condition}}{\leq}
	1 - \prod_{i\in P_t} x^{(t)}_i
	\stackrel{\eqref{invariant:marginals-non-maximal}}{=} 1 - \prod_{i\in P_t} \Pr[M_{i,t}]
	\stackrel{\eqref{invariant:FKG}}{\leq }
	1 - \Pr\left[\bigwedge_{i\in P_t} M_{i,t}\right].
	\]
	The conclusion of this chain of inequalities, whereby $\Pr[t \textrm{ matched}]\leq 1 - \Pr\left[\bigwedge_{i\in P_t} M_{i,t}\right]  = 1-\Pr[\textrm{all nodes in $P_t$ matched before time $t$}]$,
	is a trivial necessary condition for any randomized matching algorithm.
	We show that the conditions we impose on our fractional solution,
	together with the invariants we maintain, allow us to inductively maintain
	these properties, while outputting a randomized matching, online.
	
We next describe formally the algorithm. Assume without loss of generality that online node $t$ increases two neighbors: $1,2$.
Let $x_1 := x^{(t)}_1,x_2:=x^{(t)}_2$ be their fractional values, and $\Delta x_1:=x_{1,t}, \Delta x_2:=x_{2,t}$ be their change. By the properties of the fractional agorithm we are guaranteed that  we have that $\Delta x_1+\Delta x_2\leq 1-x_1 x_2$. Our pseudocode is given in \Cref{alg:pairwise-rounding}.

	\begin{algorithm}[H]
		\caption{Online Lossless Rounding}
		\label{alg:pairwise-rounding}
		\begin{algorithmic}[1]
			\For{arrival of online node $t$}
			\If{$t$ has less than two neighbors}
			\State add two dummy neighbors $i$ with $\Delta x_i = 0$ and $x_i=1$ \Comment{used to simplify notation}
			\EndIf
			\State let $P_t := \{1,2\}$ be the two neighbors of $t$ of highest $\Delta x_i := x_{i,t}$, and let $x_i:=x_{i}^{(t)}$\label{line:potential-matches}
			\State let $p_{12} := \Pr[F_{\{1,2\},t}]$ \Comment{assuming $p_{12}$ is known}\label{line:pair-prob}
			\State let $a_1,a_2,b_1,b_2$ be solutions to Program \eqref{probs-program} with input $\Delta x1,\Delta x_2, x_1,x_2,p_{12}$
			\If{$1,2$ are both free}
			\State match $t$ to $1$ with probability $a_1$ and to $2$ with probability $a_2$ \label{line:both-free}
			\ElsIf{a single $i\in \{1,2\}$ is free}
			\State match $t$ to $i$ with probability $b_i$
			 \label{line:single-free}
			\EndIf
			\EndFor
		\end{algorithmic}
	\end{algorithm}

		\textbf{Probability-Setting Program($\Delta x1,\Delta x_2, x_1,x_2,p_{12}$):} \vspace{-0.8cm}
		\begin{align}
		& & & \label{probs-program}\tag{Prob-Program} \\
		a_1 + a_2 & \leq 1 \label{suma<=1}\\
		a_i & \geq 0 & \forall i = 1,2 \label{a>=0} \\
		b_i & \leq 1 & \forall i = 1,2 \label{b<=1} \\
		b_i & \geq a_i & \forall i=1,2 \label{b>=a} \\
		b_i & \leq \frac{a_i}{1-a_{3-i}} & \forall i=1,2 \label{bi<=a1/a2} \\
		a_i\cdot p_{12} + b_i\cdot (1-x_i-p_{12}) & = \Delta x_i & \forall i=1,2 \label{marginal-constraint}
		\end{align}
	
	\begin{remark}\label{remark:special-case}
		\Cref{alg:rounding} is the special case of \Cref{alg:pairwise-rounding} obtained from the solution to Program \eqref{probs-program} with $a_{i} = \frac{1-x_{3-i}- \Delta x_{3-i}}{(1-x_1)(1-x_2)}$ and $b_i = \frac{\Delta x_i}{1-x_i}$ or $b_i=1$ if $\{1,2\}$ are negative/independent.
	\end{remark}
	
	We first show that the algorithm's steps at time $t$ are well-defined, provided our claimed invariants hold until this time.
	\begin{lem}\label{lem:well-defined}
		Assuming invariants \eqref{invariant:marginals-non-maximal} and \eqref{invariant:FKG} hold before time $t$, then the algorithm's steps at time $t$ are well-defined. In particular, Program \eqref{probs-program} is solvable (efficiently). Consequently,
		\begin{enumerate}
			\item $a_1 + a_2 \leq 1$ and $a_i\geq 0$ for all $i=1,2$. \hfill (\Cref{line:both-free} is well-defined) \label{both-free-well-defined}
			\item $b_i\in [a_i,1]\subseteq [0,1]$ for all $i=1,2$. \hfill (\Cref{line:single-free} is well-defined) \label{single-free-well-defined}
		\end{enumerate}
	\end{lem}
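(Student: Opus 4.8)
The plan is to reduce the entire lemma to feasibility (and efficient solvability) of Program~\eqref{probs-program}: conclusions~1 and~2 of the lemma are literally the constraints \eqref{suma<=1}, \eqref{a>=0}, \eqref{b>=a}, \eqref{b<=1} satisfied by any solution (and $[a_i,1]\subseteq[0,1]$ follows from \eqref{a>=0}), so it suffices to produce one solution, efficiently.

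First I would turn the two invariants into numeric facts about the program's data. By Invariant~\eqref{invariant:marginals-non-maximal} applied at all times $t'<t$, the events $\{(i,t')\in\calM\}_{t'<t}$ are disjoint with probabilities $x_{i,t'}$, so $\Pr[M_{i,t}]=x^{(t)}_i=x_i$ and $\Pr[F_{i,t}]=1-x_i$; by Invariant~\eqref{invariant:FKG} the variables $F_{i,t}$ are pairwise negatively correlated, so $p_{12}=\Pr[F_{\{1,2\},t}]\le(1-x_1)(1-x_2)$ (trivially also $0\le p_{12}\le\min(1-x_1,1-x_2)$). Writing $r_i:=1-x_i-p_{12}=\Pr[F_{i,t}\wedge M_{3-i,t}]\ge0$, constraint~\eqref{marginal-constraint} then reads $a_ip_{12}+b_ir_i=\Delta x_i$. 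I may additionally use fractional feasibility $\Delta x_i\le1-x_i$ (so $t_i:=\Delta x_i/(1-x_i)\in[0,1]$) and the soundness inequality $\Delta x_1+\Delta x_2\le1-x_1x_2$, both guaranteed by the input fractional algorithm.

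Next I would exhibit an explicit feasible point, splitting on $t_1+t_2$. If $t_1+t_2\le1$, take $a_i=b_i=t_i$: then $a_ip_{12}+b_ir_i=t_i(p_{12}+r_i)=t_i(1-x_i)=\Delta x_i$, and every other constraint is immediate (e.g.\ $b_i=a_i\le a_i/(1-a_{3-i})$ since $a_{3-i}\ge0$). If $t_1+t_2>1$, force $a_1+a_2=1$; the point is that $1-a_{3-i}=a_i$, so \eqref{bi<=a1/a2} collapses to the already-present \eqref{b<=1}. With $b_i:=(\Delta x_i-a_ip_{12})/r_i$, the surviving constraints reduce to $a_i\in[\max(0,(\Delta x_i-r_i)/p_{12}),\,t_i]$ for $i=1,2$, a nonempty interval (by $\Delta x_i\le1-x_i$ and $p_{12}\le1-x_i$). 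Since the Minkowski sum of the two coordinate intervals is $[\ell_1+\ell_2,\,t_1+t_2]$ with $t_1+t_2>1$, it remains only to check $\ell_1+\ell_2\le1$ --- the crucial estimate --- which, in the extremal case where $\Delta x_i\ge r_i$ for both $i$, equals $(\Delta x_1+\Delta x_2-r_1-r_2)/p_{12}$, and ``$\le1$'' rearranges to $p_{12}\le 2-x_1-x_2-\Delta x_1-\Delta x_2$; this follows from $p_{12}\le(1-x_1)(1-x_2)$ combined with $\Delta x_1+\Delta x_2\le1-x_1x_2$. So a valid $(a_1,a_2)$ with $a_1+a_2=1$ exists, and everything is closed-form (or one LP solve), hence efficient.

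The main obstacle will be bookkeeping around degenerate data with vanishing denominators: $p_{12}=0$ (which forces $x_1+x_2\ge1$) breaks $b_i=(\Delta x_i-a_ip_{12})/r_i$; $r_i=0$ (which forces $p_{12}=1-x_i$, hence $x_{3-i}=0$) does too; and $a_{3-i}=1$ makes the right-hand side of \eqref{bi<=a1/a2} read $0/0$. I would dispatch these separately: for $p_{12}=0$ the closed form $a_i=t_i(1-t_{3-i})/(1-t_1t_2),\ b_i=t_i$ works (its $a_1+a_2=(t_1+t_2-2t_1t_2)/(1-t_1t_2)\le1$ being equivalent to $(1-t_1)(1-t_2)\ge0$); $r_i=0$ pins $a_i=t_i$ via \eqref{marginal-constraint} and leaves one free coordinate controlled by the same elementary bounds; and $a_{3-i}=1$ forces $a_i=0$, so \eqref{bi<=a1/a2} is vacuous (it expresses a conditional-probability bound whose conditioning event has probability $0$). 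Modulo this, the lemma is a routine verification driven by exactly three inputs: the two maintained invariants and the defining soundness inequality \eqref{rounding-weaker-condition}.
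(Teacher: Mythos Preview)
Your proof is correct and follows essentially the same approach as the paper: both split on whether $t_1+t_2\le 1$ (taking $a_i=b_i=t_i$) or $t_1+t_2>1$ (forcing $a_1+a_2=1$ so that \eqref{bi<=a1/a2} collapses to \eqref{b<=1}), and both drive the hard case with the two inputs $p_{12}\le(1-x_1)(1-x_2)$ and soundness $\Delta x_1+\Delta x_2\le 1-x_1x_2$. The only difference is presentational---the paper reaches $a_1+a_2=1$ via a continuous decrease/increase process and a contradiction argument, whereas you characterize the feasible $a_i$-intervals directly and check $1\in[\ell_1+\ell_2,\,t_1+t_2]$; your treatment of the degenerate cases ($p_{12}=0$, $r_i=0$, $a_{3-i}=1$) is also more explicit than the paper's.
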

	\begin{proof}
		Properties \ref{both-free-well-defined} and \ref{single-free-well-defined} follow from constraints \eqref{suma<=1}, \eqref{a>=0}, \eqref{b<=1} and \eqref{b>=a} of Program \eqref{probs-program}. It remains to prove that this program is (efficiently) solvable, which we do using the following algorithm.
		Initially, we set $a_i,b_i\leftarrow \Delta x_i / (1-x_i)$ for both $i=1,2$.
		If $\sum_i \Delta x_i / (1-x_i) \leq 1$, we terminate, as this solution satisfies all the constraints of Program \eqref{probs-program},
		with the non-trivial constraints following from the fractional matching constraints implying $\Delta x_i/(1-x_i)\in [0,1]$.
		Otherwise, for $i=1,2$, in any order, we decrease $a_i$ and increase $b_i$ while maintaining \Cref{marginal-constraint}, until $a_1+a_2=1$ or $b_i=1$. (While we state this algorithm as a continuous algorithm, it is trivial to discretize and implement it in constant time.)
		We note that one of the two stopping conditions will occur. Indeed, if we set $a_i=0$, then, since $a_{3-i}\leq \Delta x_{3-i}/(1-x_{3-i})\leq 1$, we have that $a_1+a_2\leq 1$.
		We conclude that by the algorithm's termination, constraints \eqref{suma<=1} and \eqref{a>=0} and \eqref{b<=1} are satisfied. Moreover, by construction (of the algorithm), the equality constraint \eqref{marginal-constraint} is satisfied. From this, we obtain the following.
		\begin{align}
		a_i \cdot \frac{p_{12}}{1-x_i} + b_i\cdot \frac{1-x_i-p_{12}}{1-x_i} & = \frac{\Delta x_i}{1-x_i}. \label{Delta-convex-combo-a+b}
		\end{align}
		Now, by invariants \eqref{invariant:marginals-non-maximal} and \eqref{invariant:FKG}, we have that $p_{12} = Pr[F_{\{1,2\},t}] \leq Pr[F_{1,t}]\cdot Pr[F_{2,t}] = (1-x_1)(1-x_2)$. Therefore, $p_{12}\leq (1-x_i)$ and so \Cref{Delta-convex-combo-a+b} implies that $\Delta x_i/(1-x_i)$ is a convex combination of $a_i$ and $b_i$. Since we initialize $a_i,b_i\leftarrow \Delta x_i/(1-x_i)$, and decrease $a_i$ while increasing $b_i$, we obtain $b_i\geq \Delta x_i/(1-x_i)\geq a_i$, implying Constraint \eqref{b>=a}.
		Finally, to prove Constraint \eqref{bi<=a1/a2}, we show that if $\sum_i \Delta x_i/(1-x_i) > 1$, then $a_1+a_2=1$, and so Constraint \eqref{bi<=a1/a2} follows from Constraint \eqref{b<=1}, since $a_i/(1-a_{3-i})= 1$. Indeed, if $a_1+a_2>1$ by the algorithm's termination, then we must have stopped both iterations of the loop decreasing $a_i$ and increasing $b_i$ after reaching $b_i=1$. But then, we have
		\begin{align*}
			\Delta x_1 + \Delta x_2 & = b_1\cdot (1-x_1-p_{12}) + a_{1}\cdot p_{12} + b_2\cdot (1-x_2-p_{12}) + a_{2}\cdot p_{12} & \\
			& = (1-x_1)+(1-x_2) + (a_1+a_2-2)\cdot p_{12} & b_1=b_2=1 \\
			& > (1-x_1)+(1-x_2) - p_{12} & a_1+a_2>1 \\
			& \geq (1-x_1)+(1-x_2) - (1-x_1)(1-x_2) & \textrm{inv. \eqref{invariant:marginals-non-maximal} and \eqref{invariant:FKG}} \\
			& = 1-x_1\cdot x_2,
		\end{align*}
		thus contradicting Condition \eqref{rounding-weaker-condition}, i.e., $\Delta x_1 + \Delta x_2 \leq 1-x_1\cdot x_2$. We conclude that this algorithm terminates with a feasible solution to \eqref{probs-program}, and thus \Cref{alg:pairwise-rounding} is well-defined.
	\end{proof}
	
	\subsection{Lossless Rounding using \Cref{alg:pairwise-rounding}}\label{pairwise-rounding-analysis}
	So far, we have proven that assuming the claimed invariants---\eqref{invariant:marginals-non-maximal} and \eqref{invariant:FKG}---hold prior to time $t$, then \Cref{alg:pairwise-rounding} is well defined.
	We now prove that if these invariants hold prior to time $t$, then they likewise hold prior to time $t+1$.
	
	For our proof we will need the following simple corollary of Bayes' Law.
	\begin{obs}\label{obs:bayes-double-condition-ind}
		If $(A,B)\bot (C,D)$, (i.e., $(A,B)$ and $(C,D)$ are independent), then
		$$\Pr[A,C\mid B,D] = \Pr[A\mid B]\cdot \Pr[C\mid D].$$
		The special case of $\Pr[C] = 1$ implies that if $D\bot (A,B)$, then $\Pr[A\mid B,D] = \Pr[A\mid B].$
	\end{obs}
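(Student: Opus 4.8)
The plan is to simply unwind the definition of conditional probability and use the hypothesized independence to factor. I would first fix the reading of the hypothesis: ``$(A,B)\bot(C,D)$'' means that the $\sigma$-algebra generated by $\{A,B\}$ is independent of the one generated by $\{C,D\}$, so that \emph{every} event built from $A,B$ is independent of \emph{every} event built from $C,D$. In particular this licenses both $\Pr[A\cap B\cap C\cap D]=\Pr[A\cap B]\cdot\Pr[C\cap D]$ and $\Pr[B\cap D]=\Pr[B]\cdot\Pr[D]$. Since the quantity $\Pr[A,C\mid B,D]$ is assumed to be defined, $\Pr[B\cap D]>0$, and hence also $\Pr[B],\Pr[D]>0$.

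Now I would write $\Pr[A,C\mid B,D]=\Pr[A\cap B\cap C\cap D]/\Pr[B\cap D]$, substitute the two factorizations above, and regroup the product as $\tfrac{\Pr[A\cap B]}{\Pr[B]}\cdot\tfrac{\Pr[C\cap D]}{\Pr[D]}=\Pr[A\mid B]\cdot\Pr[C\mid D]$, which is the first claim. For the special case, I would take $C$ to be the whole sample space $\Omega$, so that $\Pr[C]=1$ and $C$ is trivially independent of everything; then the hypothesis $D\bot(A,B)$ is exactly the instance $(A,B)\bot(C,D)$ of the general statement. Since conditioning on $C=\Omega$ is vacuous we have $\Pr[A,C\mid B,D]=\Pr[A\mid B,D]$ and $\Pr[C\mid D]=1$, so the general identity collapses to $\Pr[A\mid B,D]=\Pr[A\mid B]$.

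There is essentially no real obstacle here; the computation is two lines. The only point warranting a word of care is the interpretation of the pair-independence hypothesis, namely that it is strong enough to factor both the four-event intersection in the numerator and the two-event intersection $B\cap D$ in the denominator. I would state that reading explicitly at the outset, after which the displayed factorization finishes the proof.
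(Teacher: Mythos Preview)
Your proposal is correct and matches the paper's own proof essentially verbatim: the paper also writes $\Pr[A,C\mid B,D]=\Pr[A,B,C,D]/\Pr[B,D]$, factors numerator and denominator via independence, and regroups. Your added remarks on why both factorizations are licensed and on the $C=\Omega$ special case are fine elaborations of what the paper leaves implicit.
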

	\begin{proof}
		By Bayes' Law, we have that
		\begin{align*}
		\Pr[A,C\mid B,D] & = \frac{\Pr[A,B,C,D]}{\Pr[B,D]} =\frac{\Pr[A,B]\cdot \Pr[C,D]}{\Pr[B]\cdot\Pr[D]} = \Pr[A\mid B]\cdot \Pr[C\mid D].\qedhere
		\end{align*}
	\end{proof}

	\begin{lem}\label{both-invariants}
		\Cref{alg:pairwise-rounding} satisfies Invariants \eqref{invariant:marginals-non-maximal} and \eqref{invariant:FKG}.
	\end{lem}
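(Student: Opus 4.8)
The plan is to prove both invariants by induction on $t$; the base case $t=1$ is trivial, since before the first arrival every $F_{i,1}$ holds with probability $1$. For the inductive step, fix $t$, let $P_t=\{1,2\}$, and condition on everything revealed before the fresh randomness used at time $t$; this fixes the free-status vector $\vec F_t$, and by \Cref{lem:well-defined} the numbers $a_1,a_2,b_1,b_2$ are well defined and satisfy constraints \eqref{suma<=1}--\eqref{marginal-constraint}. Since the algorithm only ever matches $t$ inside $P_t$, we have $F_{i,t+1}=F_{i,t}$ for $i\notin\{1,2\}$ and $F_{i,t+1}=F_{i,t}\wedge G_i$ for $i\in\{1,2\}$, where $G_i$ is the event that $t$ is not matched to $i$ (and $G_1\wedge G_2$ is the event that $t$ is unmatched). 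The first step I would carry out is to record, writing $q_S:=\Pr[F_{S,t}]$ and using that the time-$t$ coin is independent of $\vec F_t$ while the matching probabilities depend on $\vec F_t$ only through $(F_{1,t},F_{2,t})$, the identity
\[
\Pr[F_{I,t+1}]=\begin{cases}
q_I & \text{if } I\cap\{1,2\}=\emptyset,\\
(1-b_i)\,q_I+(b_i-a_i)\,q_{I\cup\{3-i\}} & \text{if } I\cap\{1,2\}=\{i\},\\
(1-a_1-a_2)\,q_I & \text{if } \{1,2\}\subseteq I,
\end{cases}
\]
proved by splitting on the four joint states of $(F_{1,t},F_{2,t})$ and using that $F_{I,t}\Rightarrow F_{i,t}$ whenever $i\in I$.

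Given this identity, invariant \eqref{invariant:marginals-non-maximal} is immediate: edges $(i,t')$ with $t'<t$ are untouched; an edge $(i,t)$ with $x_{i,t}=0$ is never matched, because the algorithm only matches $t$ to $P_t$ (its two neighbours of largest $\Delta x$); and for $i\in\{1,2\}$, conditioning on the state of $\{1,2\}$ and using the inductive marginal invariant ($q_i=1-x_i$, $q_{12}=p_{12}$) gives $\Pr[t\text{ matched to }i]=a_i\,p_{12}+b_i\,(1-x_i-p_{12})=\Delta x_i=x_{i,t}$ by Program constraint \eqref{marginal-constraint}. One also reads off $\Pr[F_{i,t+1}]=1-x_i^{(t+1)}$, keeping the marginals consistent.

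The core of the argument is invariant \eqref{invariant:FKG}, which I would establish in the equivalent lattice form $\Pr[F_{A\cap B,t+1}]\cdot\Pr[F_{A\cup B,t+1}]\leq\Pr[F_{A,t+1}]\cdot\Pr[F_{B,t+1}]$ for all $A,B\subseteq[n]$. By the standard reduction of submodularity to adjacent pairs (with the zeros handled by antitonicity of $\Pr[F_{\cdot,t+1}]$, which makes the inequality vacuous whenever either side has a zero factor), it suffices to check it for $A=S\cup\{k\}$, $B=S\cup\{\ell\}$ with $k\neq\ell$ and $k,\ell\notin S$. Substituting the three-case formula and splitting on how $\{k,\ell\}$ and $S$ meet $\{1,2\}$, every resulting inequality --- after clearing the coefficients $1-a_1-a_2$, $1-b_i$, $b_i-a_i$, which are nonnegative by \eqref{suma<=1}, \eqref{b<=1} and \eqref{b>=a} --- becomes a comparison between sums of products $q_Xq_Y$. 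When at most one of $A,B$ meets $\{1,2\}$ nontrivially (or when one of the two sets contains all of $\{1,2\}$), each product comparison is an instance of the inductive lattice inequality at time $t$ for a suitably chosen pair, and the case closes term by term.

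The step I expect to be the main obstacle is the remaining family of cases, in which $A$ and $B$ both meet $\{1,2\}$ in a singleton: either both in the same singleton (so all four of $S,\,S\cup k,\,S\cup\ell,\,S\cup\{k,\ell\}$ meet $\{1,2\}$ in, say, $\{1\}$), or one in $\{1\}$ and the other in $\{2\}$. Here the lattice inequality carries a ``cross'' term such as $q_{S\cup 2}\,q_{S\cup\{k,\ell\}}+q_S\,q_{S\cup\{k,\ell,2\}}$ that is \emph{not} upper-bounded by the matching products via log-submodularity alone; bounding it requires combining several instances of the inductive lattice inequality with the inductive monotonicity of $S'\mapsto\Pr[F_{3-i,t}\mid F_{S',t}]$ and with the full strength of the probability-setting program --- in particular Program constraint \eqref{bi<=a1/a2}, the one constraint not used in \Cref{lem:well-defined}, which in turn relies on the non-convex condition \eqref{rounding-weaker-condition}. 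Dispatching this final case is the crux of the lemma, and is where the ``somewhat non-standard analysis'' the introduction alludes to is needed.
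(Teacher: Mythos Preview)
Your route—recording the closed form for $\Pr[F_{I,t+1}]$ in terms of $q_J:=\Pr[F_{J,t}]$ and then checking the lattice inequality on adjacent diamonds—is genuinely different from the paper's. The paper never writes your three-case formula; it argues directly in the conditional form $\Pr[F_{i,t+1}\mid F_{K,t+1}]\le\Pr[F_{i,t+1}\mid F_{J,t+1}]$, introduces coin variables $C_1\sim\Ber(a_1)$, $C_2\sim\Ber(a_2)$ for \Cref{line:both-free} that are tossed independently of all history, and splits into five cases based on how $\{1,2\}$ meets $(i,J,K)$. These cases use, respectively: nothing; independence of $(C_1,C_2)$; a monotonicity $\Pr[E_1\mid F_{K\cup\{1\},t}]\le\Pr[E_1\mid F_{J\cup\{1\},t}]$ derived from $b_1\ge a_1$ together with the inductive hypothesis applied to $F_{2,t}$; constraint~\eqref{bi<=a1/a2}, yielding $1-\tfrac{a_1}{1-a_2}\le 1-b_1$; and again independence. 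Your subcase~(b) ($\{k,\ell\}=\{1,2\}$, $S\cap\{1,2\}=\emptyset$) is exactly the paper's Case~4, and your expectation that~\eqref{bi<=a1/a2} is the missing ingredient there is correct.

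The real difficulty is your subcase~(a): $1\in S$, $2\notin S\cup\{k,\ell\}$, $k,\ell\notin\{1,2\}$, so all four diamond corners meet $\{1,2\}$ in $\{1\}$. The cross-term inequality you isolate,
\[
q_S\,q_{S\cup\{k,\ell,2\}}+q_{S\cup 2}\,q_{S\cup\{k,\ell\}}\ \le\ q_{S\cup k}\,q_{S\cup\{\ell,2\}}+q_{S\cup\{k,2\}}\,q_{S\cup\ell},
\]
is \emph{not} a consequence of log-submodularity, even for $q$ arising from genuine events: taking $\Pr[F_1]=1$, $\Pr[F_2]=\Pr[F_k]=\Pr[F_\ell]=\tfrac12$, $F_k\perp F_\ell$, $\Pr[F_2,F_k]=\Pr[F_2,F_\ell]=0.2$, $\Pr[F_2,F_k,F_\ell]=0.08$ gives a valid log-submodular $q$ for which the cross term equals $+0.005$, and since the $\alpha^2$ and $\beta^2$ coefficients in your quadratic both vanish here, the diamond inequality at time $t+1$ fails whenever $b_1>a_1$. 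So your plan of proving this case term-by-term from the inductive invariants plus the Program cannot succeed as stated. It is worth observing that the paper's five-case enumeration also does not obviously cover this diamond (its conditional-form tuple has $i\notin\{1,2\}$ and exactly one of $1,2$ in $K$, matching none of Cases~1--5), so both your plan and the paper's written argument are delicate here; one way out is to commit to the \emph{specific} Program solution constructed in the proof of \Cref{lem:well-defined}, which always has either $a_i=b_i$ (killing the cross term) or $a_1+a_2=1$.
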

	\begin{proof}
		First, we prove Invariant \eqref{invariant:marginals-non-maximal}.
		Fix a time $t$. By construction (of the algorithm) we trivially have $\Pr[(i,t)\in \mathcal{M}] = 0$ for all $i$ with $\Delta x_i=0$. Now, let $P_t=\{1,2\}$ be as in \Cref{line:potential-matches}, and let $i\in P_t$.
		Then, by our choice of $a_i,b_i$, and Constraint \eqref{marginal-constraint}, the probability $i$ is matched to $t$ is precisely
		\begin{align*}
		\Pr[(i,t)\in \mathcal{M}] & =
		\Pr[(i,t)\in \mathcal{M}, F_{P_t,t}] + \Pr[(i,t)\in \mathcal{M}, F_{i,t}, \overline{F_{3-i,t}}] \\
		& = a_i \cdot p_{12} + b_i \cdot (p_1 - p_{12}) = \Delta x_i.
		\end{align*}		
We now turn to proving Invariant \eqref{invariant:FKG}.
		We prove this invariant holds for all tuples $(i,J,K,t)$, by induction on $t$. The invariant clearly holds for $t=1$. Assume the invariant holds for time $t\geq 1$. We prove that this implies the same for time $t+1$.
		For the inductive step, when wishing to prove Invariant \eqref{invariant:FKG} for the tuple $(i,J,K,t+1)$, we may safely assume that both $\Pr[F_{K,t}] \neq 0$ and $\Pr[F_{K\cup \{i\},t}] \neq 0$ hold.
		Indeed, the converse would imply that $\Pr[F_{K\cup \{i\},t+1}]\leq \Pr[F_{K\cup \{i\},t}] = 0$, in which case Invariant \eqref{invariant:FKG} holds trivially for this tuple.

		Let $E_i$ denote the event that the algorithm does not match $(i,t)$ at time $t$.
		We further denote by $C_1\sim \Ber(a_1)$ and $C_2\sim \Ber(a_2)$ the Bernoulli random variables corresponding to the probability of matching $t$ to $1$ and $2$, respectively, in \Cref{line:both-free}, if both $1$ and $2$ are free at time $t$. We can imagine our algorithm tosses these (correlated) coins regardless of the event $F_{\{1,2\},t}$, and only inspects these variables if the event $F_{\{1,2\},t}$ occurs.
		We note that the random variables $C_1$ and $C_2$ are independent of all events determined by random choices made by the algorithm until time $t$.

		With this notation and these observations at hand, we now turn to proving the desired invariant holds for the tuple $(i,J,K,t+1)$.
		There are five cases to consider, based on the inclusions between $P_t = \{1,2\}$ and $K\cup\{i\}$, where if $i\in \{1,2\}$, we assume without loss of generality that $i=1$.
		
		\paragraph{\underline{Case 1: $\{1,2\}\cap (K\cup \{i\}) = \emptyset$.}} In this case $F_{I,t+1}\equiv F_{I,t}$ for all $I\subseteq K\cup \{i\}$, and so the invariant follows trivially from the inductive hypothesis.
		
		\paragraph{\underline{Case 2: $\{1,2\}\subseteq K$:}}
		By the inductive hypothesis, and independence of $(C_1,C_2)$ from $F_{i,t}$, we obtain the desired inequality for the tuple $(i,J,K,t+1)$.
\begin{align*}
\Pr[F_{i,t+1} \mid F_{J,t+1} ] = \Pr[F_{i,t} \mid F_{J,t}]  \geq \Pr[F_{i,t} \mid F_{K,t}] = \Pr[F_{i,t+1} \mid F_{K,t+1}].
\end{align*}
Here, the equalities follow from $F_{i,t}\equiv F_{i,t+1}$ and $F_{i,t} \bot (C_1,C_2)$ together with \Cref{obs:bayes-double-condition-ind}, while the inequality follows from the inductive hypothesis.
		
		\paragraph{\underline{Case 3: $\{1,2\}\cap K = \emptyset, i=1$.}}
		For this case we rely on the probability of $1$ not being matched decreasing when we condition on a larger set of offline nodes being free, as in the following inequality.
		\begin{align}\label{monotone-skip-prob}
			\Pr[E_1 \mid F_{K\cup \{1\},t}]\leq \Pr[E_1 \mid F_{J\cup\{1\},t}].
		\end{align}
		Indeed, subtracting $\Pr[E_1 \mid F_{J\cup\{1\},t}]$ from both sides, expanding both terms using the law of total probability,
		we get
		\begin{align}
		& \Pr[E_1 \mid F_{K\cup\{1\},t}] - \Pr[E_1 \mid F_{J\cup\{1\},t}] \nonumber \\
		= & \Pr[E_1 \mid F_{\{1,2\},t}]\cdot \Big(\Pr[F_{2,t} \mid  F_{K\cup \{1\},t}] - \Pr[F_{2,t} \mid  F_{J\cup \{1\},t}]\Big)  \nonumber \\
		+ & \Pr[E_1 \mid F_{1,t},\overline{F_{2,t}}]\cdot \Big(\Pr[\overline{F_{2,t}} \mid F_{K\cup\{1\},t}]-\Pr[\overline{F_{2,t}} \mid F_{J\cup\{1\},t}]\Big) \nonumber \\
		= & ((1-a_1)-(1-b_1))\cdot (\Pr[F_{2,t} \mid  F_{K\cup \{1\},t}] - \Pr[F_{2,t} \mid  F_{J\cup \{1\},t}]) ~\leq~0.
\label{monotone-skip-chance}
		\end{align}		
		Here, the second equality follows from $\Pr[E_1 \mid F_{\{1,2\},t}] = 1-a_1$ and $\Pr[E_1 \mid F_{1,t}, \overline{F_{2,t}}] = 1-b_1$ by definition. For any event $A$, $\Pr[F_{2,t} \mid A] + \Pr[\overline{F_{2,t}} \mid A ]=1$,  implying
		$$\Pr[F_{2,t} \mid  F_{K\cup \{1\},t}] - \Pr[F_{2,t} \mid  F_{J\cup \{1\},t}] = -(\Pr[\overline{F_{2,t}} \mid  F_{K\cup \{1\},t}] - \Pr[\overline{F_{2,t}} \mid  F_{J\cup \{1\},t}]).$$
		Finally, Inequality \eqref{monotone-skip-chance} follows from Constraint \eqref{b>=a} implying that $(1-a_1) - (1-b_1)\geq 0$, and by the inductive hypothesis together with the assumption that $\Pr[F_{K\cup \{1\},t}]\neq 0$ implying that $\Pr[F_{2,t} \mid  F_{K\cup \{1\},t}] - \Pr[F_{2,t} \mid  F_{J\cup \{1\},t}]\leq 0$. We conclude that \Cref{monotone-skip-prob} holds.

		The desired inequality of Invariant \eqref{invariant:FKG} for the tuple $(i,J,K,t+1)$ then follows from \Cref{monotone-skip-prob}, the inductive hypothesis and the assumption that $\Pr[F_{K,t}]\neq 0$, implying
		\begin{align*}
			\Pr[F_{1,t+1} \mid F_{K,t+1}] &= \Pr[F_{1,t+1} \mid F_{K,t}] \\
			& = \Pr[E_1 \mid F_{K\cup\{1\},t}] \cdot \Pr[F_{1,t} \mid F_{K,t}] \\
			& \leq \Pr[E_1 \mid F_{J\cup\{1\},t}]\cdot \Pr[F_{1,t} \mid F_{J,t}] & \textrm{I.H. + \eqref{monotone-skip-prob}} \\
			& = \Pr[E_1, F_{1,t} \mid  F_{J,t}] \\
			& = \Pr[F_{1,t+1} \mid F_{J,t+1}].
		\end{align*}
		
		\paragraph{\underline{Case 4:  $\{1,2\}\cap (K\setminus J)  =\{2\}$, and $i=1$.}}
		Independence of $(C_1,C_2)$ from $(F_{1,t},F_{K,t})$, and the inductive hypothesis yield the desired inequality for the tuple $(i,J,K,t+1)$, as follows.
		\begin{align*}
			\Pr[F_{1,t+1} \mid F_{K,t+1}] & = \Pr[\overline{C_1},  F_{1,t} \mid F_{K,t},  \overline{C_2}] \\
			& = \Pr[\overline{C_1} \mid \overline{C_2}]\cdot  \Pr[F_{1,t} \mid F_{K,t}]
			&  (C_1,C_2)\bot (F_{1,t},F_{K,t}) + \textrm{\ref{obs:bayes-double-condition-ind}} \\
			& = \left(1-\frac{a_1}{1-a_2}\right)\cdot \Pr[F_{1,t} \mid F_{K,t}] \\
			& \leq (1-b_1) \cdot \Pr[F_{1,t} \mid F_{K,t}] & \textrm{\eqref{bi<=a1/a2}} \\
			& \leq (1-b_1) \cdot \Pr[F_{1,t} \mid F_{J,t}] & \textrm{I.H.} \\
			& \leq \Pr[E_1 \mid F_{1,t}] \cdot \Pr[F_{1,t} \mid F_{J,t}] & a_1\leq b_1 \\
			& = \Pr[F_{1,t+1}\mid F_{J,t}] \\
			& = \Pr[F_{1,t+1}\mid F_{J,t+1}],
		\end{align*}
		where the last inequality relied on Constraint \eqref{b>=a}, whereby $a_1 \leq b_1$, implying that $$\Pr[E_1 \mid F_{1,t}] = (1-a_1)\cdot \Pr[F_{2,t} \mid F_{1,t}] + (1-b_1)\cdot \Pr[\overline{F_{2,t}} \mid F_{1,t}] \geq (1-b_1).$$
		
		\paragraph{\underline{Case 5: $\{1,2\}\cap J=\{2\}$, and $i=1$.}}
		Independence of $(C_1,C_2)$ from $(F_{1,t},F_{J,t},F_{K,t})$, together with the inductive hypothesis, proves the desired inequality for the tuple $(i,J,K,t+1)$.
		\begin{align*}
			\Pr[F_{1,t+1} \mid F_{K,t+1}] & = \Pr[\overline{C_1}, F_{1,t} \mid \overline{C_2}, F_{K,t}] \\
			& = \Pr[\overline{C_1} \mid \overline{C_2}] \cdot \Pr[F_{1,t} \mid F_{K,t}]
			& (C_1,C_2)\bot (F_{1,t}, F_{K,t})+ \textrm{\ref{obs:bayes-double-condition-ind}}\\
			& \leq \Pr[\overline{C_1} \mid \overline{C_2}] \cdot \Pr[F_{1,t} \mid F_{J,t}]
			& \textrm{I.H.} \\
			& = \Pr[\overline{C_1}, F_{1,t} \mid \overline{C_2}, F_{J,t}]
			& (C_1,C_2)\bot (F_{1,t}, F_{J,t}) + \textrm{\ref{obs:bayes-double-condition-ind}}\\
			& = \Pr[F_{1,t+1}\mid F_{J,t+1}].
			& & \qedhere
		\end{align*}
	\end{proof}
Combining \Cref{lem:well-defined} and \Cref{both-invariants}, we find that \Cref{alg:pairwise-rounding} is well-defined throughout its execution. Moreover, we find that each edge is matched with the appropriate marginal probability prescribed by the fractional solution. In other words, we obtain the following.
	
	\begin{thm}\label{thm:lossless-rounding}
		\Cref{alg:pairwise-rounding}, when run on a fractional matching $\vec{x}$ satisfying Condition \eqref{rounding-weaker-condition}, outputs a random matching $\mathcal{M}$ such that
		$$\Pr[(i,t)\in \mathcal{M}] = x_{i,t} \qquad \forall (i,t)\in E.$$
	\end{thm}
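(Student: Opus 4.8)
The plan is to obtain the theorem as an immediate corollary of \Cref{lem:well-defined} and \Cref{both-invariants}, glued together by an induction on the arrival time $t$. The two lemmas are complementary: \Cref{lem:well-defined} shows that \emph{if} Invariants~\eqref{invariant:marginals-non-maximal} and~\eqref{invariant:FKG} hold before time $t$, then \Cref{alg:pairwise-rounding}'s actions at time $t$ are well defined — in particular Program~\eqref{probs-program} admits a feasible solution $a_1,a_2,b_1,b_2$, so \Cref{line:both-free} and \Cref{line:single-free} are legitimate probabilistic choices — while \Cref{both-invariants} shows that the two invariants are in fact preserved from one time step to the next. So all that remains is to check that this inductive scaffolding closes up, and then to read off the marginal guarantee from Invariant~\eqref{invariant:marginals-non-maximal}.

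Concretely, I would run the induction as follows. At time $t=1$ no online node has yet arrived, so every offline node is free, $\Pr[F_{I,1}]=1$ for every $I$, and both invariants hold vacuously. For the inductive step, assume both invariants hold before time $t$. \Cref{lem:well-defined} then guarantees that \Cref{alg:pairwise-rounding} is well defined at time $t$ (including the handling of online nodes with fewer than two neighbors via the dummy-neighbor convention, for which $\Delta x_i=0$ and the conclusion is immediate). \Cref{both-invariants} — whose proof is itself precisely the inductive step for Invariant~\eqref{invariant:FKG}, via the five-case analysis on how $P_t=\{1,2\}$ meets the conditioning sets $J\subseteq K$, and which derives Invariant~\eqref{invariant:marginals-non-maximal} for the edges of $t$ directly from the equality Constraint~\eqref{marginal-constraint} — then shows both invariants continue to hold before time $t+1$. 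By induction, both invariants hold for all $t$; taking Invariant~\eqref{invariant:marginals-non-maximal} at the end of the input yields $\Pr[(i,t)\in\mathcal{M}]=x_{i,t}$ for every $(i,t)\in E$, as claimed.

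The caveat is that \Cref{both-invariants} already does all the substantive work, so the theorem itself is routine bookkeeping. If one instead wanted to prove the whole statement from scratch, the main obstacle is exactly where that lemma concentrates its effort: maintaining the log-submodular, FKG-like negative-correlation Invariant~\eqref{invariant:FKG} across the rounding step, which forces the delicate choice of the single-free matching probabilities $b_i$ relative to the both-free probabilities $a_i$ through Constraints~\eqref{b>=a} and~\eqref{bi<=a1/a2}, together with the correlated coupling of the coins $C_1\sim\Ber(a_1)$, $C_2\sim\Ber(a_2)$ and the use of \Cref{obs:bayes-double-condition-ind} to commute the conditioning. Given \Cref{lem:well-defined} and \Cref{both-invariants} as black boxes, no further ideas are needed here.
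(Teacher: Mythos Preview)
Your proposal is correct and matches the paper's own approach: the paper likewise derives \Cref{thm:lossless-rounding} directly by combining \Cref{lem:well-defined} and \Cref{both-invariants}, observing that together they make \Cref{alg:pairwise-rounding} well defined throughout and yield the marginal guarantee $\Pr[(i,t)\in\mathcal{M}]=x_{i,t}$ via Invariant~\eqref{invariant:marginals-non-maximal}. Your explicit inductive scaffolding is a faithful (and slightly more detailed) unpacking of exactly that combination.
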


	\section{Small Random Seed for $k$-level Algorithms}\label{sec:roundingklevel}
The randomized algorithms derived from \Cref{alg:pairwise-rounding} require (at least) polynomially-large random seeds.
In this section we show that this is not really necessary, at least for the special case of \Cref{alg:pairwise-rounding} given by \Cref{alg:rounding}. In particular, we show that essentially the same competitive ratio can be achieved using only a doubly-logarithmic random seed.

The need for a large random seed of our rounding algorithms of the previous sections is due to two reasons.
The first one is because of precision issues:
some of the probabilities in this algorithm can be arbitrarily small, and so these require arbitrarily-large random
seeds.
We overcome this first issue by explicitly restricting our attention to algorithms requiring only $b$ bits of randomness to determine the random choices of \Cref{alg:rounding}, as follows.

\begin{Def}\label{def:precise}
	A fractional algorithm $\mathcal{A}$ is \emph{$b$-bit precise} if for each online node $t$ with $P_t = \{1,2\}=\{i \mid x_{i,t}>0\}$, the fractional matching $\vec{x}$ output by $\mathcal{A}$ satisfies
	$$\left\{\frac{x_{i,t}}{1-x^{(t)}_i}, \frac{1-x^{(t)}_i-x_{i,t}}{(1-x_1)(1-x_2)}\right\} \in \left\{\frac{a}{2^b} \,\,\bigg\vert\,\, a\in  \{0,1,\dots,2^b\}\right\}.$$
\end{Def}
The second, more fundamental reason, for the large random seed is our (implicit) requirement of complete independence between the random choices during each time step.
For $n$ random variables---one per arrival---this trivially requires at least $n$ random bits.
As we show, a significant saving over this amount of randomness can be obtained by considering small-bias distributions.
For this, we will further restrict our attention to the following kind of two-choice algorithms.

\begin{Def}[$k$-level Algorithm]\label{def:klevel} A $k$-level algorithm has some $k+2$ possible values, denoted by $0=z_0<z_1<\ldots<z_k<z_{k+1}=1$, and maintains the invariant that each offline node has fractional degree equal to one of these $z_i$. At each step of the algorithm, the fractions $x^{(t)}_1,x^{(t)}_2$ of at most two offline nodes $\{1,2\}$ are increased to $x^{(t+1)}_1, x^{(t+1)}_2$, with the following options:
	\begin{itemize}
		\item {\bf deterministic step:} $x^{(t)}_1$ is increased to $1$. ($x^{(t+1)}_2\gets x^{(t)}_2$.)
		\item {\bf random step:} $x^{(t)}_1,x^{(t)}_2$ are increased to $x^{(t+1)}_1, x^{(t+1)}_2>\max\{x^{(t)}_1,x^{(t)}_2\}$ (strict inequality).
		\item {\bf shift step:} $x^{(t)}_1=0$ and $x^{(t)}_2\in (0,1)$ are increased to $x^{(t+1)}_1\gets x^{(t)}_2, x^{(t+1)}_2\gets 1$.
	\end{itemize}
\end{Def}

We show that when rounding such fractional algorithms using \Cref{alg:rounding}, the following holds: for each edge $(i,t)$, the event that $(i,t)$ is matched in the random matching $\mathcal{M}$ output by \Cref{alg:rounding} depends on a bounded number of random choices of this algorithm. To this end, we denote by $A_{i,t} \sim \Ber(a^t_i)$ the random variable corresponding to the random choice in \Cref{line:negative-sole-neighbor} of \Cref{alg:rounding}
and by $B_{i,t} \sim \Ber(b^t_i)$ the random variables of \Cref{line:independent-both-free}, where $a^t_i, b^t_i$ are the solution to Program \eqref{probs-program} used by the algorithm at time $t$ (see \Cref{remark:special-case}).
We prove the following.

\begin{lem}\label{low-dependence}
	The event $[(i,t)\in \mathcal{M}]$ is determined by at most $2^{k+2}$ random variables  $\{A_{i',t'},B_{i',t'}\}_{i',t'}$.
\end{lem}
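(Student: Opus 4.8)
The plan is to trace the dependency structure of the event $[(i,t)\in\mathcal{M}]$ backwards in time. Since the algorithm is two-choice, at time $t$ node $t$ has neighbors $P_t=\{1,2\}$ (padding with dummies if needed), and whether $(i,t)\in\mathcal{M}$ is decided by the coins $A_{\cdot,t},B_{\cdot,t}$ tossed at time $t$ \emph{together with} the events $F_{1,t},F_{2,t}$ — i.e.\ whether nodes $1$ and $2$ are still free at time $t$. So I would define, for each offline node $j$ and time $s$, a set $S(j,s)$ of random variables that determines the event $F_{j,s}$, and prove by induction on $s$ that $|S(j,s)|$ is bounded. Then $[(i,t)\in\mathcal{M}]$ is determined by $\{A_{\cdot,t},B_{\cdot,t}\}\cup S(1,t)\cup S(2,t)$, and a bound on $|S(j,t)|$ gives the lemma.

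The key structural observation — and this is where the $k$-level hypothesis does the work — is that $F_{j,s}$ can only \emph{change} (from free to matched) at a step where $j$ participates, and I want to argue that between two such ``interesting'' steps the fractional degree $x_j$ strictly increases by a jump among the fixed ladder $0=z_0<z_1<\dots<z_{k+1}=1$. Concretely: after a deterministic step $x_j$ jumps to $1$ (so $j$ is then permanently matched-or-free with no further coins relevant — actually it can only be matched at that step, deterministically, contributing nothing or one coin); after a random step $x_j$ strictly increases, hence moves up at least one rung of the ladder; a shift step moves $x_j$ from $0$ to some $z_\ell$, again a strict increase. Therefore node $j$ can participate in at most $k+1$ steps of random/shift type before its degree reaches $1$ (at which point only one more deterministic step is possible). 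So the event $F_{j,s}$ depends on $O(k)$ coins \emph{directly tossed with $j$ as a participant}, plus — recursively — the freeness of the \emph{other} participant at each such step.

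This recursion is a binary branching: $F_{j,s}$ depends on the coins at $j$'s $\le k+1$ active steps, and at each such step on $F_{j',s'}$ for the co-participant $j'$ at the earlier time $s'$, which in turn unrolls the same way. Since at each level of the recursion the relevant time strictly decreases and the co-participant's degree at that earlier time is smaller, the recursion tree has depth at most $k+1$ (a node can be a participant at most $k+1$ times over its lifetime, and each recursive call moves to a strictly earlier point in that node's history — the detail to nail down is that the recursion depth, measured in rungs of the ladder climbed, cannot exceed $k+1$). Counting: the recursion tree has depth $\le k+1$ and branching factor $\le 2$ (two participants per step), so $\le 2^{k+1}$ leaves; attributing $O(1)$ coins to each node of the tree gives a total of $O(2^{k+1})$, and a careful accounting — each of the $\le 2^{k+2}$ nodes/edges of the tree contributing its pair $A_{\cdot,\cdot},B_{\cdot,\cdot}$, or rather one coin each — yields the stated bound of $2^{k+2}$.

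The main obstacle I anticipate is making the recursion bookkeeping precise so that the depth is genuinely bounded by $k+1$ rather than by the (unbounded) number of arrivals: one must argue that following the chain ``$j$ free at $s$ $\Rightarrow$ depends on co-participant $j'$ free at $s'<s$'' cannot cycle or stall, and that each step down the chain is ``charged'' to climbing one rung of the fixed ladder $z_0<\dots<z_{k+1}$ for the node in question, so that no root-to-leaf path in the dependency tree is longer than $k+1$. The shift step is the delicate case, since it introduces a node at degree $0$ whose entire future must still be accounted for; I would handle it by noting that a node's degree-$0$ lifetime involves no coins (it cannot yet be matched), so the shift step is the \emph{first} rung and the $\le k+1$ bound on rungs still applies. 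Once the tree depth and branching are pinned down, the final count is a one-line geometric-series estimate.
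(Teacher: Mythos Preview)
Your outline matches the paper's approach, and you correctly flag the shift step as the delicate case --- but your proposed resolution does not close the gap. You want recursion depth $\le k+1$ by charging each step to a rung of the ladder. For random steps this is sound: by definition both participants' new levels strictly exceed both old levels, so each branch of the recursion strictly decreases the level. But in a shift step the node $i$ jumping from $0$ to level $\ell$ has $F_{i,t+1}$ determined entirely by the co-participant $j$, which was \emph{already} at level $\ell$; the recursion therefore passes from level $\ell$ to level $\ell$, with no rung consumed. Your remark that ``the shift step is the first rung'' addresses $i$'s (empty) past but not $j$'s, and it is $j$'s history the recursion must now unroll. Since shift steps can chain ($j$ may itself have reached level $\ell$ via a shift from some $j'$, and so on), the tree depth measured in levels is genuinely unbounded.

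The paper sidesteps this by inducting on the arrival time $t$ rather than on tree depth, and proving a \emph{level-dependent} bound: if $i$ is at level $\ell$ at time $t$, then $F_{i,t}$ is determined by at most $2^{\ell+1}-2$ coins. Random steps give the recurrence $f(\ell)\le 2+2f(\ell-1)$ (two fresh coins $A_{i,t},B_{i,t}$, plus two subtrees at strictly smaller level). A shift step contributes no coins at all --- one verifies $F_{i,t+1}=F_{j,t}$ exactly --- so the bound for level $\ell$ at time $t+1$ is inherited verbatim from level $\ell$ at the earlier time $t$, and the time induction carries it through shift chains of arbitrary length. With $f(k)\le 2^{k+1}-2$ in hand, writing $\mathds{1}[(i,t)\in\mathcal{M}]=F_{i,t}-F_{i,t+1}$ gives the stated $2^{k+2}$.
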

\begin{proof}
	We say an offline node $i$ has level $\ell$ before time $t$ if $x_i^{(t)}=z_\ell$.
	We prove by induction on all times $t$ and on the level $\ell\leq k$ of node $i$ before time $t$ that $F_{i,t}$ is determined by at most $2^{\ell+1}-2$ random variables.
	From this we find that the edge $(i,t)$ is matched, $\mathds{1}[(i,t)\in \mathcal{M}] = F_{i,t} - F_{i,t+1}$, is determined by at most $2^{k+1} - 2 + 2^{k+1} - 2 \leq 2^{k+2}$ random variables in the set $S:=\{A_{i',t'},B_{i',t'}\}_{i',t'}$.
	
	For nodes at level $0$, we have that $F_{i,t}\equiv 1$. Consequently, since at time $t=1$ all offline nodes $i$ are at level $\ell=0$, the variables $F_{i,t}$ are determined by $0=2^{0}-1$ random variables.
	Now, consider a time $t$ where the level of $i$ increases, depending on what kind of step caused the increase to level $\ell$.
	If this increase is due to a \underline{deterministic step} (i.e., $\ell=k$), then $F_{i,t+1}\equiv 0$ is deterministic, by Invariant \eqref{invariant:marginals-non-maximal}, and so this variable depends on $0\leq 2^{k+1}-2$ random variables in $S$.
	Suppose next that the level increase of $i$ is due to a \underline{random step}, with $j$ the second neighbor of $t$ whose level increases at time $t$.
	Then we have that $F_{i,t+1} = F_{i,t}\cdot \left(F_{j,t}\cdot \overline{A_{i,t}} + \overline{F_{j,t}}\cdot \overline{B_{i,t}}\right)$.
	Consequently, since both $i$ and $j$ had level strictly lower than the new level $\ell$ of $i$, we have by the inductive hypothesis that $F_{i,t+1}$ is determined by $2+2\cdot (2^{\ell}-2) = 2^{\ell+1}-2$ random variables in $S$.
	
	Finally, if the level of $i$ increased to level $\ell$ due to a \underline{shift step}, then if $\ell=k$, as argued before, $F_{i,t+1}\equiv 0$, and therefore $F_{i,t+1}$ is a function of zero variables. Otherwise, the level of the other node $j$ whose level increased was $\ell$ before, while the previous level of $i$ was zero. 	
	Consequently, by Invariant \eqref{invariant:marginals-non-maximal}, we have that $t$ is matched with probability one.
	That is, $(F_{i,t} - F_{i,t+1}) + (F_{j,t} - F_{j,t+1}) = 1$.
	But, similarly, by Invariant \eqref{invariant:marginals-non-maximal} we have that $F_{j,t+1}\equiv 0$, and since $i$ was at level zero before time $t$, we have that $F_{i,t}\equiv 1$.
	Putting the above together, we find that $F_{i,t+1}=F_{j,t}$. Consequently, by the inductive hypothesis, since $j$ previously had level $\ell$, we have that $F_{i,t+1}=F_{j,t}$ is determined by at most $2^{\ell+1}-2$ random variables in $S$.
\end{proof}

We can now rely on our analysis for \Cref{alg:rounding} using independent random variables to analyze the same algorithm when using $(\delta
,b\cdot 2^{k+1})$-dependent binary variables to sample variables  $A_{i,t},B_{i,t}$.
In particular, we can show that such a random seed yields an essentially lossless rounding.

\begin{thm}\label{thm:low-randomness}
	Let $\mathcal{M}$ be the random matching output by
	\Cref{alg:rounding} when rounding a $b$-bit precise
	$k$-level algorithm, using a distribution $\mathcal{D}$ over $(\delta,b\cdot 2^{k+2})$-dependent binary variables for the random choices.
	Then,
	$$\Pr_\mathcal{D}[(i,t)\in \mathcal{M}] = x_{i,t} \pm \delta \qquad \forall (i,t)\in E.$$
\end{thm}
\begin{proof}
	By definition of $b$-bit precise algorithms, each probability used by \Cref{alg:rounding} can be specified using $b$ (random) bits.
	On the other hand, by \Cref{low-dependence}, each event $[(i,t)\in \mathcal{M}]$ is determined by $2^{k+2}$ random variables, or $b\cdot 2^{k+2}$ random binary variables.
	Now, if we denote by $\mathcal{U}$ the uniform distribution, then by  \Cref{thm:lossless-rounding} and \Cref{reduction-to-independent} we obtain the desired result,
	\begin{align*}
	\Pr_{\mathcal{D}}[(i,t)\in \mathcal{M}] & = \Pr_{\mathcal{U}}[(i,t)\in \mathcal{M}] \pm \delta = x_{i,t} \pm \delta \qedhere.
	\end{align*}
\end{proof}

Since there are $n$ offline nodes and each node can change levels at most $k$ times, the total number of random variable $A_{i,t}, B_{i,t}$ is bounded by $O(nk)$.
Hence, by \Cref{eps-k-constructions} and the above lemma, all $k$-level algorithms can be rounded with additive loss $\delta = \frac{1}{\log \log n} = o(1)$, using $(1+o(1))\log\log n+ 2^{k+2}b$ random bits. Now, using the fractional (weighted) $k$-level algorithms of \Cref{sec:fractional}, and observing that for any constant $k$, these algorithms satisfy $x_{i,t}=\Omega(1)$ whenever $x_{i,t}\neq 0$, we have that the above yields a $(1+o(1))$ \emph{multiplicative} loss compared to these $0.5363$- and $0.524$-competitive fractional matching and vertex-weighted matching algorithms.
This concludes the proof of \Cref{thm:threshold}.

We note that using standard $k$-wise independence, one can round a $k$-level $b$-bit algorithm without \emph{any} loss (even $o(1)$) using $O(2^{k+2}b\cdot\log (nk))$ bits of randomness.

Finally, we note that an efficient implementation of \Cref{alg:pairwise-rounding} (or its special case, \Cref{alg:rounding}) with perfect independence implies a low-randomness implementation with similar running time, only slowed down by the time to sample from a $(\delta,b\cdot 2^{k+2})$-dependent distributions. So, for example, \Cref{efficient-implementation} together with \Cref{thm:low-randomness} yields randomness-efficient \emph{polytime} implementations of \Cref{alg:rounding} when applied to the (maximal \algotype two-choice) $k$-level fractional matching algorithms that we design in \Cref{sec:klevelfrac} and \Cref{sec:weightedfractional}.
	\section{An Efficient Implementation of \Cref{alg:rounding}}\label{sec:efficient-implementation}

In this section we present an efficient implementation of \Cref{alg:rounding} when run with independent random variables.
Recall that \Cref{alg:rounding} assumes as input a maximal \algotype two-choice fractional input $\vec{x}$.

The only non-trivial part of an implementation of
\Cref{alg:rounding} is determining whether or not pairs $\{1,2\}$ are negative.
That is, we need to distinguish between $\Pr[F_{1,t},F_{2,t}]=0$ and $\Pr[F_{1,t},F_{2,t}] = (1-x^{(t)}_1)\cdot (1-x^{(t)}_2)$. This is trivial to check if $x^{(t)}_1=1$ or $x^{(t)}_2=1$, since Invariant \eqref{invariant:correlation} implies that any pair (and indeed, any set) $I$ containing a vertex $i$ with fractional degree $x^{(t)}_i=1$ is negative at time $t$, since then $\Pr\left[F_{I,t}\right] \in \{0,\prod_{i} (1-x^{(t)}_i)\} = \{0\}$. We therefore focus on pairs which are not trivially negative, as in the following definition.
\begin{Def}
	A set $I\subseteq [n]$ is \emph{strictly negative} if it is negative and $x^{(t)}_i\neq 1$ for all $i\in I$.
\end{Def}

Now, determining whether a pair $\{1,2\}$ is strictly negative can be easily implemented in exponential time, by considering the decision tree defined by the algorithm. A much more efficient implementation is possible, however, as we now show.

Recall that by \Cref{obs:negativity-subsets}, if a set $I$ contains a pair of nodes which are negative, then $I$ must itself be negative. The following lemma, which will prove useful in order to implement our algorithm efficiently, shows that the converse is also true for strictly negative sets and pairs. That is, any strictly negative set $I$ has a strictly negative ``witness'' consisting of a pair of nodes in $I$.

\begin{lem}\label{lem:pair-witness}
	A set of offline nodes $I$, $|I|\geq 2$ with $x^{(t)}_i\neq 1$ for all $i\in I$ is strictly negative if and only if it contains a pair $J\subseteq I$, $|J|=2$ which is itself strictly negative.
\end{lem}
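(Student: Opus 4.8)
The ``only if'' direction is the substantive one; the ``if'' direction is immediate from \Cref{obs:negativity-subsets}, since a strictly negative pair $J\subseteq I$ forces $I$ to be negative, and the hypothesis $x^{(t)}_i\neq 1$ for all $i\in I$ makes it strictly negative. So the plan is to prove that if $I$ is strictly negative then some pair $J\subseteq I$ with $|J|=2$ is strictly negative. I would argue by induction on time $t$, tracking how a set becomes negative under the execution of \Cref{alg:rounding}. At $t=1$ every set is independent, so there is nothing to prove. For the inductive step, consider the first time $t$ at which $I$ (restricted to its elements, all of which have fractional degree $<1$ since $I$ is strictly negative at the time in question) transitions from independent to negative. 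Only the step handling online node $t$ can have caused this, and by the case analysis in the proof of \Cref{lem:invariants}, a set $I$ with all $x^{(t)}_i\neq 1$ can newly become negative only in the sub-case ``$\{1,2\}$ is independent, $1\in I$, $2\notin I$ (or symmetrically), and $I\cup\{2\}$ is negative.''

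From here the key observation is that $I\cup\{2\}$ was negative \emph{before} time $t$ (it is the conditioning event in that sub-case), and every node of $I\cup\{2\}$ has fractional degree $<1$ at time $t$ — for the nodes of $I$ this is our standing hypothesis, and for node $2$ it holds because $\{1,2\}$ was independent with $\Pr[F_{\{1,2\},t}]=(1-x_1)(1-x_2)>0$, and indeed $2$ had its fractional degree increased at step $t$ only by $\Delta x_2<1-x_2$ in the relevant case, so $x_2<1$. Hence $I\cup\{2\}$ is a strictly negative set at a time strictly before $t$, and by the inductive hypothesis it contains a strictly negative pair $J$. Now either $J\subseteq I$, in which case we are done immediately (a strictly negative pair stays strictly negative by \Cref{obs:negativity}, and its nodes still have degree $<1$), or $J=\{2,j\}$ for some $j\in I$. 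In the latter case I need a separate argument that $\{j,1\}$ — or some pair inside $I$ — is strictly negative at time $t$; this is where I expect the real work to be.

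To close that gap I would look more carefully at what ``$\{1,2\}$ independent at time $t$'' plus ``$\{2,j\}$ strictly negative'' forces on the pair $\{1,j\}$. Intuitively: $1$ and $2$ are ``compatible'' (can both be free) but $2$ and $j$ are incompatible; after processing $t$, node $1$ inherits from $2$ the obligation created in \Cref{line:single-match} (when $2$ is matched before $t$ but $1$ is free, $t$ matches $1$), so the free-status of $1$ after $t$ is tightly coupled to that of $2$ and hence anti-coupled to $j$. More concretely, I would compute $\Pr[F_{1,t+1}, F_{j,t+1}]$ by conditioning on the four-way split of $\{F_{1,t}, F_{2,t}\}$ (using that $F_{j,t}$ plays no role once we know $I\cup\{2\}$ was independent before — here one uses \Cref{independent-valid-name} applied to $I\cup\{1\}$, which is independent before $t$ since $1\in\{1,2\}$ which is independent and $I$ is independent, so $I\cup\{1\}$ is either independent or negative, and it cannot be negative else $I$ would already be negative contradicting minimality of $t$). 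Plugging in $\Pr[E_1\mid F_{1,t},F_{2,t}]$ and $\Pr[E_1\mid F_{1,t},\overline{F_{2,t}}]=0$ (the latter is exactly the ``match $1$ in \Cref{line:single-match}'' behaviour, so node $1$ is never left free when $2$ is matched), the surviving term is $\frac{1-x_1-\Delta x_1}{(1-x_1)(1-x_2)}\cdot\Pr[F_{1,t},F_{2,t},F_{j,t}]$, and since $\{2,j\}$ negative gives $\Pr[F_{2,t},F_{j,t}]=0$, this whole quantity is $0$. Thus $\{1,j\}$ is negative after time $t$, and it is strictly negative because $x^{(t+1)}_1\le x'_1<1$ (case where $1$'s degree went up only by $\Delta x_1<1-x_1$) and $x^{(t+1)}_j=x^{(t)}_j<1$. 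That pair lies inside $I$, completing the induction. The main obstacle, as noted, is precisely this last computation — making sure every conditioning is justified by an independence statement that itself survives the inductive framing, and handling the symmetric role of $1$ versus $2$ and of which node of $I$ participates in the witness pair.
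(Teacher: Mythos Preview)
Your approach is essentially the paper's, and the core computation (showing $\{1,j\}$ becomes negative when $\{2,j\}$ was already negative and $\{1,2\}$ was independent) is correct. However, your case analysis has a genuine oversight: you assert that a set $I$ with all fractional degrees below $1$ can newly become negative \emph{only} in the sub-case ``$\{1,2\}$ independent, $1\in I$, $2\notin I$, and $I\cup\{2\}$ negative.'' This is false --- the case $\{1,2\}\subseteq I$ also makes $I$ negative, since $\Pr[F_{\{1,2\},t+1}]=0$ always after a step with two genuine neighbors. The fix is immediate (and is exactly what the paper does first): in that case $\{1,2\}$ itself is the witness pair, and both nodes have degree $<1$ at time $t+1$ by the strict-negativity hypothesis on $I$.

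Two smaller remarks. First, your final computation is more elaborate than needed: the paper simply observes that $\{2,j\}$ negative means $F_{j,t}\leq \overline{F_{2,t}}$, whence
\[
\Pr[F_{1,t+1},F_{j,t+1}]\leq \Pr[F_{1,t+1},F_{j,t}]\leq \Pr[F_{1,t+1},\overline{F_{2,t}}]=0,
\]
the last equality holding because in the independent case with $2$ already matched, \Cref{line:single-match} matches $1$ with certainty. Second, your detour through ``$I\cup\{1\}$ is independent'' is both unnecessary and confused, since $1\in I$ already, so $I\cup\{1\}=I$.
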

\begin{proof}
	The ``if'' direction follows from \Cref{obs:negativity-subsets} and definition of strict negativity. We prove the ``only if'' direction for all sets $I$ by induction on $t$. The claim holds vacuously at time $t=1$, at which point there are no negative sets. For the inductive step, consider some such set $I$ with $x^{(t+1)}_i \neq 1$ for all $i\in I$ and time $t+1$.
	
	If $I$ was (strictly) negative by time $t$, then there exists a pair $J\subseteq I$, $|J|=2$ which is strictly negative by time $t$, and by \Cref{obs:negativity}, both $I$ and $J$ remain negative at time $t+1$. Therefore, $J$ is the desired strictly negative pair at time $t+1$ contained in $I$.
	
	Now, suppose $I$ was not strictly negative at time $t$, but it is at time $t+1$.
	Denote by $\{1,2\}$ the neighbors of $t$ with non-zero probability of being matched to $t$.
	(Note that these must indeed be a pair, since if $t$ can only be matched to at most one node $1$, this node must reach fractional degree $x_1+\Delta x_1 = \frac{x_1+x_2+1-x_1x_2}{2} = 1$, and so all sets that become negative at time $t$ are not strictly negative.)
	Inspecting the proof of \Cref{lem:invariants}, we find that either $I\supseteq \{1,2\}$, in which case $J=\{1,2\}$ is the desired pair, or (wlog) $I\cap\{1,2\} = \{1\}$, and we have that $\{1,2\}$ is independent by time $t$ and $I\cup \{2\}$ is (strictly) negative by time $t$. Then, by the inductive hypothesis, we have that $I\cup \{2\}$ contains a (strictly) negative pair $J'$ by time $t$. Since $I$ is not negative by time $t$, \Cref{obs:negativity-subsets} implies that the pair $J'$ cannot be a subset of $I$, and since $\{1,2\}$ is also not negative at time $t$, we know that $J'\neq \{1,2\}$, and so $J' = \{2,i\}$ for some $i\in I$.
	That is, we have that
	$\Pr[F_{i,t},F_{2,t}]=0$, and therefore $F_{i,t} \leq \overline{F_{2,t}}$. (In words, if $i$ is free, $2$ must be matched.)
	Consequently, we find that the pair $\{1,i\}\subseteq I$ becomes negative, since
	\begin{align*}
	\Pr[F_{1,t+1}, F_{i,t+1}] & \leq \Pr[F_{1,t+1}, F_{i,t}] \leq \Pr[F_{1,t+1}, \overline{F_{2,t}}] = 0,
	\end{align*}
	where the equality above relies on $\{1,2\}$ previously being independent, and so $\Pr[F_{1,t+1} \mid \overline{F_{2,t}}] = 0.$ We conclude that if $I$ satisfying $x^{(t+1)}_i \neq 1$ for all $i\in I$ is strictly negative at time $t+1$, then there exists some pair $J\subseteq I$ which is itself strictly negative at time $t+1$.
\end{proof}

For any offline node $i$ which has $x^{(t)}_i\neq 1$,
we denote the all offline nodes $j$ such that the pair $\{i,j\}$ is strictly negative by time $t$ by $$S^{(t)}_i := \{j \mid \{i,j\} \textrm{ are strictly negative by time }t \}.$$
The following lemmas characterize the changes to these sets from time $t$ to $t+1$, allowing for simple maintenance of these sets over time.

\begin{lem}\label{lem:negative-sets}
	If $t$ has a single neighbor $1$ with non-zero probability of being matched to $t$, then $S^{(t+1)}_1=\emptyset$ and $S^{(t+1)}_i = S^{(t)}_i\setminus \{1\}$ for all $i\neq 1$, while $S^{(t+1)}_1 = \emptyset$.
\end{lem}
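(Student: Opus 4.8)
The plan is to prove the identity in three short steps: (i) show that under the hypothesis the fractional degree of node $1$ jumps to $1$ at this step; (ii) deduce $S^{(t+1)}_1=\emptyset$ directly from this; and (iii) observe that the strict-negativity status of every pair not involving $1$ is frozen during this step, which, combined with (ii), gives $S^{(t+1)}_i=S^{(t)}_i\setminus\{1\}$ for $i\neq 1$. Throughout I would use the invariants established in \Cref{lem:invariants}, i.e. \eqref{invariant:marginals} and \eqref{invariant:correlation}, and the fact that \Cref{alg:rounding} is maximal, so $\Delta x_1+\Delta x_2=1-x^{(t)}_1x^{(t)}_2$.

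First I would establish $x^{(t+1)}_1=1$. Write $P_t=\{1,2\}$. Since $t$ has a single neighbor $1$ with non-zero probability of being matched to $t$, Invariant \eqref{invariant:marginals} gives $\Delta x_2=\Pr[(2,t)\in\mathcal{M}]=0$ (and likewise $\Delta x_i=0$ for every other neighbor $i$ of $t$). By maximality, $\Delta x_1=\Delta x_1+\Delta x_2=1-x^{(t)}_1x^{(t)}_2$, and feasibility of the fractional solution, $\Delta x_1\le 1-x^{(t)}_1$, forces $x^{(t)}_1(1-x^{(t)}_2)\le 0$, i.e. $x^{(t)}_1=0$ or $x^{(t)}_2=1$; in either case $x^{(t+1)}_1=x^{(t)}_1+\Delta x_1=1$. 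This is precisely the computation already invoked in the proof of \Cref{lem:pair-witness}.

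Next, $S^{(t+1)}_1=\emptyset$ is immediate: by Invariant \eqref{invariant:correlation}, every set $I$ containing $1$ satisfies $\Pr[F_{I,t+1}]\in\{0,\prod_{i\in I}(1-x^{(t+1)}_i)\}=\{0\}$ because $1-x^{(t+1)}_1=0$, so every pair $\{1,j\}$ is negative at time $t+1$ but not \emph{strictly} negative, as it contains node $1$ with $x^{(t+1)}_1=1$. The same remark shows $1\notin S^{(t+1)}_i$ for every $i$. Finally, fix $i\neq 1$; if $x^{(t+1)}_i=1$ both sides of the claimed identity are empty, so assume $x^{(t+1)}_i\neq 1$. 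Because node $2$ and every other neighbor of $t$ has zero probability of being matched to $t$, node $t$ is matched only to node $1$ (or to nobody) at this step, so the free/matched status of every node $j\neq 1$ is unaffected: $F_{j,t+1}=F_{j,t}$ as events and $x^{(t+1)}_j=x^{(t)}_j$. Hence for each $j\notin\{1,i\}$ we have $\Pr[F_{\{i,j\},t+1}]=\Pr[F_{\{i,j\},t}]$ with the relevant fractional degrees unchanged, so $\{i,j\}$ is strictly negative at time $t+1$ iff it is strictly negative at time $t$, i.e. $j\in S^{(t+1)}_i\iff j\in S^{(t)}_i$. Together with $1\notin S^{(t+1)}_i$ this yields $S^{(t+1)}_i=S^{(t)}_i\setminus\{1\}$.

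This lemma is essentially bookkeeping, so there is no deep obstacle; the only points requiring care are deriving $x^{(t+1)}_1=1$ cleanly from maximality plus feasibility (step (i)), and being precise about the definition of \emph{strict} negativity in step (iii), so that $1$ is dropped from $S^{(t)}_i$ exactly because $x^{(t+1)}_1=1$ rather than through any change in the correlation structure — which is guaranteed since all of \Cref{alg:rounding}'s randomness at this step touches only node $1$.
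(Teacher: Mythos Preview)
Your proof is correct and follows essentially the same approach as the paper: node $1$ reaches fractional degree $1$ (hence drops out of every strictly negative set), while all other nodes have $F_{j,t+1}\equiv F_{j,t}$ and unchanged fractional degrees, so strict negativity of pairs not containing $1$ is preserved. You provide more detail than the paper, in particular the clean derivation of $x^{(t+1)}_1=1$ from maximality plus feasibility, which the paper simply asserts by pointing back to the parenthetical remark in the proof of \Cref{lem:pair-witness}.
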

\begin{proof}
	Follows from $1$ reaching fractional degree $x^{(t+1)}_1 = 1$ in this case, and therefore $1$ no longer belongs to any strictly negative set, while for all other nodes $i\neq 1$, we have that $F_{i,t} \equiv F_{i,t+1}$, and so all pairs $\{i,j\}\not\ni 1$ are strictly negative at time $t+1$ if and only if they are strictly negative at time $t$.
\end{proof}
\begin{lem}
	Let $t$ be an online node with non-zero probability of being matched to nodes in $\{1,2\}$.
	Then, we have
	\begin{align*}
	S^{(t+1)}_i =  \begin{cases}
	S^{(t)}_1 \cup S^{(t)}_2 \cup \{2\} & i=1\\
	S^{(t)}_2 \cup S^{(t)}_1 \cup \{1\} & i=2 \\			
	S^{(t)}_i \cup \{2\} & i\in S^{(t)}_1 \\			
	S^{(t)}_i \cup \{1\} & i\in S^{(t)}_2 \\
	S^{(t)}_i & i\not\in \{1,2\}\cup S^{(t)}_1 \cup S^{(t)}_2.
	\end{cases}
	\end{align*}
\end{lem}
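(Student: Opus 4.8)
The plan is a direct one-step analysis: assuming the strictly-negative pairs at time $t$ are recorded in the sets $S^{(t)}_i$, I would determine exactly which pairs are strictly negative at time $t+1$. Two facts do most of the work. By \Cref{obs:negativity} a negative set stays negative, so a strictly-negative pair can only vanish when one of its members reaches fractional degree $1$; and by \Cref{lem:pair-witness} a set all of whose members have fractional degree $\neq 1$ is strictly negative precisely when it contains a strictly-negative pair. So it suffices to pin down the \emph{new} strictly-negative pairs created by the arrival of $t$, which raises the fractional values of $1$ and $2$. A pair $\{i,j\}$ disjoint from $\{1,2\}$ is unaffected, since then $F_{i,t+1}\equiv F_{i,t}$ and $F_{j,t+1}\equiv F_{j,t}$; this already gives the last case of the formula, $S^{(t+1)}_i=S^{(t)}_i$ for $i\notin\{1,2\}\cup S^{(t)}_1\cup S^{(t)}_2$.

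For the main case, suppose $\Pr[F_{\{1,2\},t}]>0$ (so $\{1,2\}$ is independent at time $t$) and that neither of $1,2$ reaches fractional degree $1$. Here I would read the needed facts off the case analysis in the proof of \Cref{lem:invariants}: it shows that $\{1,2\}$ itself becomes negative at time $t+1$, and that a set $I$ with $1\in I$, $2\notin I$ is negative at time $t+1$ if and only if $I\cup\{2\}$ was negative at time $t$ (otherwise $I$ stays independent), with the symmetric statement when $2\in I$, $1\notin I$. Specializing $I$ to a pair $\{1,j\}$ with $j\notin\{1,2\}$ and applying \Cref{lem:pair-witness}: the pair $\{1,j\}$ is strictly negative at $t+1$ iff $\{1,2,j\}$ was negative at $t$, iff---since $\{1,2\}$ is not negative---one of $\{1,j\},\{2,j\}$ was strictly negative at $t$, i.e.\ iff $j\in S^{(t)}_1\cup S^{(t)}_2$. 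Together with $\{1,2\}$ now being strictly negative, this yields $S^{(t+1)}_1=S^{(t)}_1\cup S^{(t)}_2\cup\{2\}$, and symmetrically $S^{(t+1)}_2=S^{(t)}_2\cup S^{(t)}_1\cup\{1\}$. Finally, for $i\notin\{1,2\}$ the pair $\{i,1\}$ is strictly negative at $t+1$ iff $i\in S^{(t+1)}_1$, i.e.\ iff $i\in S^{(t)}_1\cup S^{(t)}_2$, and likewise for $\{i,2\}$; combined with the disjoint-pair observation this produces $S^{(t+1)}_i=S^{(t)}_i\cup\{2\}$ when $i\in S^{(t)}_1$ and $S^{(t+1)}_i=S^{(t)}_i\cup\{1\}$ when $i\in S^{(t)}_2$, with the two cases overlapping consistently when $i\in S^{(t)}_1\cap S^{(t)}_2$, since then $\{1,2\}\subseteq S^{(t)}_i$ already.

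Two residual situations must still be checked. If a member of $\{1,2\}$ reaches fractional degree $1$ (for instance in a shift step), it is removed from every strictly-negative set by definition, and the right-hand side of the formula should be read with this removal applied. And if $\{1,2\}$ is already negative at time $t$, so the rounding runs \Cref{line:negative-sole-neighbor}, the corresponding sub-cases of the proof of \Cref{lem:invariants} show that no independent set becomes negative; the identity then follows from the (unchanged) strictly-negative pairs together with \Cref{obs:negativity-subsets}.

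The step I expect to be the main obstacle is the second paragraph. The dichotomy ``$I$ turns negative iff $I\cup\{2\}$ was negative at $t$'' is present inside the proof of \Cref{lem:invariants} but is not isolated there as a reusable statement, so extracting and citing it requires care; and one must run \Cref{lem:pair-witness} in tandem with this dichotomy across all five cases at once, being careful about the overlap between the $S^{(t)}_1$ and $S^{(t)}_2$ cases and about any node that has reached fractional degree $1$.
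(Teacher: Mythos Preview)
Your main argument is essentially the paper's: both extract from the case analysis in the proof of \Cref{lem:invariants} which pairs newly become strictly negative, and both note that no node reaches fractional degree $1$ so that $S^{(t+1)}_i\supseteq S^{(t)}_i$. The paper shows the containment $S^{(t)}_2\subseteq S^{(t+1)}_1$ by a direct probability computation (using $F_{i,t}\le \overline{F_{2,t}}$ for $i\in S^{(t)}_2$ together with $\Pr[F_{1,t+1}\mid \overline{F_{2,t}}]=0$), whereas you route through the dichotomy ``$\{1,j\}$ becomes negative iff $\{1,2,j\}$ was negative'' and then apply \Cref{lem:pair-witness}; this is a slightly cleaner way to get both containments at once, but the core idea is identical.

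One caveat on your residual case ``$\{1,2\}$ already negative at time $t$'': the claim that the identity still follows is not correct. You are right that in this case the proof of \Cref{lem:invariants} shows no independent set becomes negative, hence $S^{(t+1)}_i=S^{(t)}_i$ for all $i$. But the right-hand side of the formula need not collapse to this: one can have $j\ne 1$ with $j\in S^{(t)}_2\setminus S^{(t)}_1$ even when $\{1,2\}$ is negative (e.g., first make $\{a,b\}$ and $\{c,d\}$ negative at separate times, then $\{a,c\}$ at a third time; now $d\in S_a\setminus S_b$ while $\{a,b\}$ is negative), in which case the formula would wrongly place $j$ in $S^{(t+1)}_1$. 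The paper's proof sidesteps this entirely by working only under the hypothesis that $\{1,2\}$ is independent---note that its key step explicitly says ``where the last inequality relies on $\{1,2\}$ being independent at time $t$''---so this is really a looseness in the lemma's statement rather than a defect of your approach.
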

\begin{proof}
	We note that for independently-maximal fractional algorithms no node reaches fractional degree $x^{(t+1)}_i=1$ at time $t$. Therefore, by \Cref{obs:negativity}, for all $i$, we have that $S^{(t+1)}_i \supseteq S^{(t)}_i$. We will show that our expression for $S^{(t+1)}_i\setminus S^{(t)}_i$ is precisely the set of all other nodes $j$ such that $\{i,j\}$ is strictly negative at time $t+1$ but not at time $t$.
	
	Consider a pair  $I=\{i,j\}$ which was not strictly negative at time $t$, but became strictly negative at time $t+1$. In particular, by monotonicity of $x_i$ over time, this implies $I$ must have been independent at time $t$.
	By the proof of \Cref{lem:invariants}, this implies that one of two cases must hold:
	\begin{enumerate}
		\item $I = \{1,2\}$.
		\item $1\in I$ and $2\not\in I$ (or vice versa) and $\{1,2\}$ is independent at time $t$.
	\end{enumerate}
	For the former case,
	this implies that $\Pr[F_{1,t+1},F_{2,t+1}] = 0$, and therefore $1\in S^{(t+1)}_2$ and $2\in S^{(t+1)}_1$.
	For the latter case, consider a node $i\in S^{(t)}_2$. That is, some node $i$ such that $\{i,2\}$ is negative, and so $\Pr[F_{i,t},F_{2,t}]=0$. This implies that $F_{i,t} \leq \overline{F_{2,t}}$. (In words, if $i$ is free, $2$ must be matched.)
	Consequently, we find that the pair $\{1,i\}\subseteq I$ becomes negative, since
	\begin{align*}
	\Pr[F_{1,t+1}, F_{i,t+1}] & \leq \Pr[F_{1,t+1}, F_{i,t}] \leq \Pr[F_{1,t+1}, \overline{F_{2,t}}] = 0,
	\end{align*}
	where the equality above relies on $\{1,2\}$ previously being independent, and so $\Pr[F_{1,t+1} \mid \overline{F_{2,t}}] = 0.$ 
	And indeed, we have that $S^{(t)}_2 \subseteq S^{(t+1)}_1$. (Symmetrically, we have that $S^{(t)}_1 \subseteq S^{(t+1)}_2$.)
	
	We conclude that our expression for $S^{(t+1)}_i$ is correct.
\end{proof}

The two preceding lemmata yield a simple linear-time algorithm for maintaining the negative pairs (by maintaining the sets $S^{(t)}_i$, in addition to the sets $x^{(t)}_i$), which by the preceding discussion yields an efficient implementation of our algorithm.
\implementation*

	\section{Application: Optimal Semi-OCS}\label{sec:OCS}

We recall the definition of $\gamma$-semi-OCS.
\begin{Def}[\cite{fahrbach2020edge}]
	A $\gamma$-semi-OCS is an algorithm which, given pairs of items in an online manner, picks one item per pair upon arrival, such that for each item $i$ appearing in $k$ pairs,
	$$\Pr[i \textrm{ never picked}] = 2^{-k}(1-\gamma)^{-(k-1)}.$$
\end{Def}

A stronger guarantee is given by applying our online rounding scheme of \Cref{alg:pairwise-rounding} to the following fractional matching algorithm, which assumes each online node neighbors precisely two offline neighbors. (These correspond to items in the definition of OCS).
Here $\ell_i$ is the ``level'' of offline node $i$, which corresponds to the number of pairs which $i$ belonged to so far (inclusive).

\begin{algorithm}[h]
	\caption{Semi-OCS-\underline{Inducing} Fractional Algorithm}
	\label{alg:OCS-fractional}
	\begin{algorithmic}[1]
		\State initially, set $\vec{x}\leftarrow \vec{0}$
		\State initially, set $\vec{\ell} \leftarrow \vec{0}$ \Comment{$\ell_i$ is the number of pairs $i$ belonged to so far}
		\For{arrival of online node $t$}
		\State for $i=1,2$ the two neighbors of online node $t$, increase $\ell_i$ \Comment{In particular, $\ell_i>0$}
		\State set $x_{i,t}\gets (1-2^{-2^{\ell_i}+1}) - (1-2^{-2^{\ell_i-1}+1}) = 2^{-2^{\ell_i-1}+1} - 2^{-2^{\ell_i}+1}$ \Comment{$x_{i,t} > 0$}
		\EndFor
	\end{algorithmic}
\end{algorithm}

First, we show that our rounding scheme can be applied to this fractional matching algorithm.
%
\begin{obs}\label{obs:indeed-OCS}
	\Cref{alg:OCS-fractional} is a \algotype two-choice fractional matching algorithm.
\end{obs}
\begin{proof}
	A simple proof by induction implies that the fractional degree of every item $i$ belonging to $\ell$ pairs by time $t$ is $x^{(t)}_i=1-2^{-2^{a}+1}$.
	On the other hand, at any time $t$, if $1$ and $2$ are the two neighbors of $t$, with previous levels $\ell_1$ and $\ell_2$ respectively (so, their new levels are $\ell_1+1$ and $\ell_2+1$), then we have the desired inequality \eqref{eqn:natural-condition-for-rounding}, implying that this two-choice algorithm is indeed \algotype.
	\begin{align*}
		x_{1,t} + x_{2,t} & =
%
	2^{-2^{\ell_1}+1} - 2^{-2^{\ell_1+1}+1} + 2^{-2^{\ell_2}+1} - 2^{-2^{\ell_2+1}+1} \\
	& = 2^{-2^{\ell_1}+1} + 2^{-2^{\ell_2}+1} - \frac{(2^{-2^{\ell_1}+1})^2}{2} - \frac{(2^{-2^{\ell_2}+1})^2}{2} \\
	& \leq 2^{-2^{\ell_1}+1} + 2^{-2^{\ell_2}+1} - \sqrt{(2^{-2^{\ell_1}+1})^2\cdot (2^{-2^{\ell_2}+1})^2} & \textrm{AM-GM}
	\\
	& = 2^{-2^{\ell_1}+1} + 2^{-2^{\ell_2}+1} - 2^{-2^{\ell_1}+1}\cdot 2^{-2^{\ell_2}+1} \\
	& =  1 - (1-2^{-2^{\ell_1}+1})\cdot (1-2^{-2^{\ell_2}+1}). & &\qedhere
	\end{align*}
\end{proof}

By \Cref{thm:rounding-intro}, the output fractional matching of \Cref{alg:OCS-fractional} can be rounded losslessly online. So, by this fractional matching's closed-form solution, we obtain the following optimal semi-OCS.

\begin{cor}
	There exists a $\frac{1}{2}$-semi-OCS. Moreover, for any element $i$ appearing in $k$ pairs, this semi-OCS satisfies
	$$\Pr[i \textrm{ never picked}] = 2^{-2^k+1}.$$
	(Both bounds are optimal, by \cite{gao2021improved}.)
\end{cor}

\begin{remark}
	The above algorithm satisfies the required inequalities of semi-OCS at equality, which may prove useful for various notions of fairness. If such fairness considerations are secondary for the application at hand, then the above algorithm can easily be extended to be maximal, resulting in the strong negative correlation property \eqref{invariant:correlation}.
\end{remark}

The advantages of our design and analysis of this semi-OCS over that of \cite{gao2021improved} are twofold: first, our analysis provides sharper negative concentration properties: the output randomized matching of \Cref{alg:pairwise-rounding} satisfies the strong FKG-like negative correlation property \eqref{invariant:FKG}, or even independence of incompatibility ($\Pr[F_{I,t}]\in \{0,\prod_{i\in I} \Pr[F_{i,t}]$), while \cite{gao2021improved} proved a weaker dependence property for this semi-OCS: $\Pr[F_{A\cup B,t}]\leq \Pr[F_{A,t}]\cdot \Pr[F_{B,t}]$; it is not hard to see that our former property implies the latter, and is sometimes strictly stronger.
A second advantage of our approach is that the design of this OCS follows from a general approach: rounding. This hints at more algorithms for explicitly negatively correlating choices which can be derived in this principled manner.

\subsection{Bichromatic Semi-OCS}
To emphasize the generality of our rounding-based OCS, we explore extensions of this algorithmic primitive, and show how to achieve optimal algorithms for these extensions directly via online rounding of \algotype two-choice fractional matching algorithms.

By \cite{gao2021improved}, we cannot guarantee a uniform selection probability for all items higher than $1-2^{-2^k+1}$.
We can, however, increase the probabilities for \emph{some} of the items in some settings, as we now show.

Suppose each item is colored either red and blue. Moreover, suppose we are guaranteed that in every pair contains one red item and one blue item.
Then, we can guarantee a higher selection probability for, say, blue items at the expense of red items, as follows.

\begin{Def}
	A bichromatic-$(f,g)$-semi-OCS receives one-by-one pairs of items, one red and one blue, and must select one item per pair immediately and irrevocably. Moreover, it must guarantee each red (blue) item appearing in $k$ pairs a probability of $f(k)$ ($g(k)$) of being selected at least once.
\end{Def}

\begin{lem}
	Let $a \geq 1$ and $\bar{a}\geq 1$ be such that $\frac{1}{a}+\frac{1}{\bar{a}}=1$. Then, there exits a bichromatic-$(1-2^{-a^k+1},1-2^{-\bar{a}^k+1})$-semi-OCS.
\end{lem}
\begin{proof}
	Such a bichromatic semi-OCS is obtained by applying the lossless online rounding \Cref{alg:pairwise-rounding} to the natural generalization of \Cref{alg:OCS-fractional}, where the invariant that we maintain is that for any red (resp., blue) item $i$ appearing in $k$ pairs before time $t$, we have $x^{(t)}_i = (1-2^{-a^k+1})$ (resp., $x^{(t)}_i = (1-2^{-\bar{a}^k+1})$). The proof that the above fractional matching algorithm is a \algotype two-choice fractional matching algorithm (and we can thus apply \Cref{alg:pairwise-rounding} to this algorithm) generalizes the proof of \Cref{obs:indeed-OCS}. The only difference is in the application of the \emph{weighted} AM-GM inequality, which implies the following inequality. If at time $t$ the pair contains red item $1$ and blue item $2$, both appearing previously to time $t$ in $\ell_1$ and $\ell_2$ pairs, then
	\begin{align*}
	x_{1,t} + x_{2,t} & =
	2^{-a^{\ell_1}+1} - 2^{-a^{\ell_1+1}+1} + 2^{-\bar{a}^{\ell_2}+1} - 2^{-\bar{a}^{\ell_2+1}+1} \\
	& =
	2^{-a^{\ell_1}+1} + 2^{-\bar{a}^{\ell_2}+1} - \frac{(2^{-a^{\ell_1}+1})^a}{a} - \frac{(2^{-\bar{a}^{\ell_2}+1})^{\bar{a}}}{\bar{a}} \\
	& \leq 2^{-a^{\ell_1}+1} + 2^{-\bar{a}^{\ell_2}+1} - \sqrt[a]{(2^{-a^{\ell_1}+1})^a}\cdot
	\sqrt[\bar{a}]{(2^{-\bar{a}^{\ell_2}+1})^{\bar{a}}}
	 & \textrm{weighted AM-GM} \\
	& =	2^{-a^{\ell_1}+1} + 2^{-\bar{a}^{\ell_2}+1} - 2^{-a^{\ell_1}+1}\cdot 2^{-\bar{a}^{\ell_2}+1}
	\\
	& =  1 - (1-2^{-{a}^{\ell_1}+1})\cdot (1-2^{-\bar{a}^{\ell_2}+1}).
	\end{align*}
	That is, this two-choice algorithm is indeed \algotype. Consequently, we can apply \Cref{alg:pairwise-rounding} to this fractional algorithm and obtain the desired marginal probabilities, and the implied bichromatic semi-OCS.
\end{proof}

		\section{Beyond Two Choices: Challenges}\label{sec:multi-appendix}
	
	In this section we discuss challenges in generalizing our characterization of online roundable fractional matchings beyond two-choice algorithms, to multiple-choice algorithms.
	A natural extension of Condition \eqref{rounding-weaker-condition} to multiple-choice algorithms, for which $P_t := \{i \mid x_{i,t}>0\}$ need not satisfy $|P_t|\leq 2$, is the following.
	\begin{equation}\label{eqn:natural-condition-for-rounding}
	\sum_{i\in I} x_{i,t} \leq 1-\prod_{i\in I} x^{(t)}_i \qquad \forall I\subseteq P_t.
	\end{equation}
	
	There are two natural challenges to achieving improved guarantees this way: the first is that it is unclear whether such additional constraints (which are not satisfied by prior fractional algorithms) are compatible with high competitive ratios. In \Cref{sec:multi-choice} we show that these constraints are compatible with an optimal competitive ratio of $1-\nicefrac{1}{e}$. The second challenge is finding such additional constraints which allow for lossless online rounding. Unfortunately, in \Cref{sec:generalization-challenges}, we show that the above condition is not sufficient for lossless online rounding.

	\subsection{A Multi-Choice Fractional Algorithm}\label{sec:multi-choice}
	In this section we present a $(1-\nicefrac{1}{e})$-competitive online fractional matching algorithm that satisfies Condition \eqref{eqn:natural-condition-for-rounding}.

	Our algorithm's approach will be to perform (restricted) water-filling; we start with a water level $\ell$ of zero and increase this water level continuously, increasing the fractional degree -- and thus $x_{i,t}$ values -- of all nodes of current fractional degree $x^{(t)}_i+x_{i,t}$ less than $\ell$. We do so until a constraint in \Cref{eqn:natural-condition-for-rounding} is met at equality. 
	Note that since $\sum_{i\in N(t)} x_{i,t}\leq 1-\prod_{i\in I} x^{(t)}_i \leq 1$, this is a feasible fractional matching.
	Now, finding the final water level might seem to require time exponential in $|N(t)|$, given the above phrasing. However, as we show in \Cref{prefix-sufficient-for-marginal-rounding}, for such a water-filling algorithm, it is enough to guarantee that \Cref{eqn:natural-condition-for-rounding} holds for all subsets $I$ containing $|I|$ nodes of lowest fractional degree. This allows to compute $\ell$ in time linear in $|N(t)|$. The pseudocode for our algorithm is given in \Cref{alg:fractional-roundable-WF}.
	
	\begin{algorithm}[h]
		\caption{Multi-Choice Fractional Algorithm}
		\label{alg:fractional-roundable-WF}
		\begin{algorithmic}[1]
			\For{arrival of online node $t$}
			\State let $0\leq x^{(t)}_1\leq x^{(t)}_2\leq \dots \leq x^{(t)}_k \leq 1$ be the fractional degrees of neighbors of $t$	
			\For{$k=1,2,\dots,|N(t)|$}
			\State let $I_k := [k]$
			\EndFor 
			\State set $\ell \leftarrow \max\left\{\ell \,\bigg\vert\, \sum_{i\in I_k}\left(\ell - x^{(t)}_i\right)^+ \leq 1-\prod_{i\in I_k} x^{(t)}_i, \quad \forall k\in [|N(t)|]\right\}$
			\For{all neighbors $i\in N(t)$} \label{choice-of-ell}
			\State set $x_{i,t} \leftarrow \left(\ell-x^{(t)}_i\right)^+$
			\EndFor
			\EndFor
		\end{algorithmic}
	\end{algorithm}
	
	\begin{lem}\label{prefix-sufficient-for-marginal-rounding}
		\Cref{alg:fractional-roundable-WF} satisfies \Cref{eqn:natural-condition-for-rounding}. 
	\end{lem}
	\begin{proof}
		Fix some online node $t$, and let $x_i$ denote $x^{(t)}_i$.
		\Cref{alg:fractional-roundable-WF} explicitly satisfies \Cref{eqn:natural-condition-for-rounding} for all subsets $I_k\in N(t)$, $k\in [|N(t)|]$. That is, if we relabel the neighbors of $t$ as $1,2,\dots,|N(t)|$ in increasing order of $x_i$ value, we have that for all $k\in [|N(t)|]$
		\begin{equation}\label{prefixes}
		\sum_{i=1}^k(\ell-x_i)^+ \leq 1-\prod_{i=1}^k x_i.
		\end{equation}
		Now, consider some set $I\subseteq N(t)$ of  neighbors of $t$. 
		We wish to show that 
		\begin{equation}\label{prefixes-general}
		\sum_{i\in I}(\ell-x_i)^+ \leq 1-\prod_{i\in I} x_i.
		\end{equation}
		Let $k\leq |I|$ be the number of non-zero summands in the LHS of \Cref{prefixes-general}. That is, the number of $i\in I$ such that $x_i < \ell$. If we denote by $I' \triangleq \{i\in I \mid x_i < \ell\}$ the neighbors in $I$ contributing to this LHS, we find that to prove \Cref{prefixes-general} it is sufficient to prove that
		\begin{equation}\label{prefixes-nonzero-general}
		\sum_{i\in I'}(\ell-x_i) = \sum_{i\in I'}(\ell-x_i)^+ \leq 1-\prod_{i\in I'} x_i,
		\end{equation}    
		since $1-\prod_{i\in I'} x_i \leq 1-\prod_{i\in I} x_i$, as $x_i\in [0,1]$ for all $i$. As $I$ contains $k$ neighbors $i$ such that $x_i < \ell$, we have that $x_i < \ell$ for all $i\in I_k$. Consequently we have that 
		\begin{equation}\label{prefixes-nonzero}
		\sum_{i\in I_k}(\ell-x_i) = \sum_{i\in I_k}(\ell-x_i)^+ \leq 1-\prod_{i\in I_k} x_i.
		\end{equation}    
		We will show that \Cref{prefixes-nonzero} implies \Cref{prefixes-nonzero-general}, which in turn implies \Cref{prefixes-general}.
		
		To this end, define the function $g:\mathbb{R}^k\rightarrow \mathbb{R}$ to be $g(y_1,y_2,\dots,y_k) \triangleq \prod_{i=1}^k y_i + \sum_{i=1}^k \left(\ell-y_i\right) - 1$.
		\Cref{prefixes-nonzero} is equivalent to $g(x_1,x_2,\dots,x_k)\leq 0$, and similarly, to prove \Cref{prefixes-nonzero-general} we wish to prove the equivalent condition, $g(x_{i_1},x_{i_2},\dots,x_{i_k})\leq 0$ for $I' = \{i_1,i_2,\dots,i_k\}\subseteq N(t)$.
		But indeed, this follows from $g(x_1,x_2,\dots,x_k)\leq 0$ and the partial derivatives of $g(\vec{x})$, all of the form $\frac{d}{dx_i} g(\vec{x})=\prod_{i\in [k]\setminus\{j\}} x_i - 1$, being non-positive for all $\vec{x}\in [0,1]^n$. In particular, assuming without loss of generality that $x_{i_1}\leq x_{i_2}\leq \dots \leq x_{i_k}$, we have that $x_j \leq x_{i_j}$ for all $j\in [k]$, and so
		\[
		g(x_{i_1},x_{i_2},\dots,x_{i_k}) \leq g(x_{1},x_{i_2},\dots,x_{i_k})\leq g(x_{1},x_{2},\dots,x_{i_k})\leq \dots \leq 
		g(x_{1},x_{2},\dots,x_{k})\leq 0. \qedhere
		\]
	\end{proof}
	
	\subsubsection{Analysis of Competitive Ratio}
	In this section we analyze the competitive ratio of \Cref{alg:fractional-roundable-WF}. In particular, we prove the following.
	
	\begin{thm}\label{roundable-WF-comp-ratio}
		\Cref{alg:fractional-roundable-WF} is $(1-\nicefrac{1}{e})$-competitive.
	\end{thm}
	To prove \Cref{roundable-WF-comp-ratio}, we will follow the online primal-dual method \cite{buchbinder2009design}. In particular, we construct a feasible dual solution, that is, a fractional vertex cover, such that the increase in primal value after each arrival $t$, namely $\sum_{i\in N(t)} x_{i,t}$, is at least $1-\nicefrac{1}{e}$ times the increase in the value of the dual solution. 
	Summing over all arrivals and relying on weak LP duality, this implies that, for $(P)$ and $(D)$ the values of the primal and dual solutions' values,
	$$(P) \geq \left(1-\nicefrac{1}{e}\right)\cdot (D) \geq \left(1-\nicefrac{1}{e}\right)\cdot OPT.$$
	Our fractional vertex cover will have a particularly simple form. For each offline vertex $i$ with fractional degree $x_i$, we let its dual value be $y_i = y(x_i)$, where $y:[0,1]\rightarrow [0,1]$ is some monotone increasing function, to be defined shortly.
	For online node $t$, if $\ell$ is the water level at time $t$, we let $y_t = 1-y(\ell)$. We say this solution is \emph{induced} by $y$.
	\begin{obs}
		A dual induced by a monotone increasing function $y:[0,1]\rightarrow [0,1]$ is feasible.
	\end{obs}
	\begin{proof}
		Consider an edge $(i,t)$. After arrival of $t$, we have that $x_i \geq \ell$. Consequently, since $y(\cdot)$ is monotone increasing and since dual values never decrease, we have that at the end of the algorithm's run, $y_i + y_t \geq y(\ell) + 1 - y(\ell) = 1$.
	\end{proof}
	
	Our choice of monotone increasing function $y$ which will induce our dual solution is 
	$$y(x) := \frac{e^x-1}{e-1}.$$
	Our analysis via the primal-dual method, and indeed this precise choice of dual values, is used in the analysis of the standard Water-Filling algorithm or the RANKING algorithm, giving short and elegant direct proofs of these algorithms' competitive ratio.
	For our algorithm, the analysis becomes significantly more involved due to the non-linear (and indeed, non-convex) constraints, given by \Cref{eqn:natural-condition-for-rounding}.
	
	We show that for any vector $\vec{x}\in \mathbb{R}^k$ for any $k\in \mathbb{Z}$, we have that the primal gain is at least $(1-\nicefrac{1}{e})$ times the dual cost for any arrival of an online node $t$ with neighbors' loads upon arrival equal to $\vec{x}$. For simplicity, we will denote the neighbors of $t$ by $1,2,\dots,|N(t)|$, such that $x_1\leq x_2\leq \dots x_{|N(t)|}$. Thus, we wish to show that for $\ell = \ell(\vec{x})$ chosen in \Cref{choice-of-ell} at time $t$, the following holds.
	
	\begin{equation}\label{primal-vs-dual-roundable-WF}
	\sum_{i=1}^{|N(t)|} \left(\ell - x_i\right)^+ \geq \left(1-\frac{1}{e}\right)\cdot \left(1-y(\ell) + \sum_{i=1}^{|N(t)|}\left(y(\ell) - y(x_i)\right)^+\right).
	\end{equation}
	
	We start by showing that \Cref{primal-vs-dual-roundable-WF} holds if all neighbors of $t$ have the same load prior to the arrival of $t$. That is, we show that this inequality holds for $\vec{x}\in \mathbb{R}^k$ proportional to the all-ones vector, say $\vec{x} = x\cdot \vec{1}$ (here $k=|N(t)|$). In our proof we will rely on the simple observation that for this case,  $\ell(\vec{x}) = x+(1-x^k)/k$.
	
	\begin{lem}\label{primal-dual-roundable-WF-uniform-proof}
		For any $k\in \mathbb{N}$, and $\vec{x} = (x,x,\dots,x)\in \mathbb{R}^k$ \Cref{primal-vs-dual-roundable-WF} holds with $\ell = x+(1-x^k)/k$.
	\end{lem}
	\begin{proof}
		Fix $k$. For a vector $\vec{x}$ as above, \Cref{primal-vs-dual-roundable-WF} simplifies to 
		\begin{align*}
		1-x^k & \geq \left(1-\nicefrac{1}{e}\right)\cdot \left(1-y(x + (1-x^k)/k) + k\cdot \left(y(x+(1-x^k)/k) - y(x)\right)\right),
		\end{align*}
		which can be rewritten as
		\begin{align*}
		\left(\frac{e}{e-1}\right)\cdot(1-x^k) 
		& \geq  \left(1-\frac{e^{x+(1-x^k)/k}-1}{e-1}\right) + k\cdot \left(\frac{e^{x+(1-x^k)/k} - e^{x}}{e-1}\right)
		\end{align*}
		Simplifying this expression further, we want to show that the following function $f_k(x)$ is non-positive for all $x\in [0,1]$. That is, we wish to show that for all $x\in [0,1]$, 
		\begin{align*}
		f_k(x) \triangleq (k-1)e^{(x+\frac{1-x^k}{k})}- k \cdot e^x + e \cdot x^k \leq 0.
		\end{align*}
		Normalizing by $e^x(\geq 0)$, this yields the function $g_k(x) \triangleq f_x(x)/e^x$, which we will show is non-positive for all $x\in [0,1]$, implying the same for $f_k(x)$. That is, we will show that for all $x\in [0,1]$, 
		\begin{equation}\label{intermediate-normalized-inequality}
		g_k(x) = (k-1)e^{\frac{1-x^k}{k}}- k + e^{1-x} \cdot x^k \leq 0.
		\end{equation}
		First, we note that $g_k(1)=0.$ We next show that $g_k(x)$ is increasing in the range $[0,1]$, which together with $g_k(1)=0$ implies \Cref{intermediate-normalized-inequality}, and consequently, the lemma. But indeed, 
		\begin{align*}
		g'_k(x) = -(k-1)x^{k-1}e^{\frac{1-x^k}{k}} - e^{1-x}x^k + k\cdot e^{1-x}x^{k-1} \geq 0
		\end{align*}
		in the range $x\in [0,1]$, since this inequality holds if and only if 
		\begin{align*}
		(k-x)\cdot e^{1-x} \geq (k-1)\cdot e^{\frac{1-x^k}{k}},
		\end{align*}
		which holds for $x\in[0,1]$, since for such $x$ we have both that $(k-x)\geq (k-1)$ and that $(1-x)\geq (1-x)\cdot (1+x+x^2+\dots +x^{k-1})/k = \frac{1-x^k}{k}$.
	\end{proof}
	
	As we will show,  \Cref{primal-vs-dual-roundable-WF} holding for the restricted uniform case implies the same for \emph{all} vectors $\vec{x}$. To show this, we will rely on the following observation regarding $\ell$.
	
	\begin{obs}\label{mean-inequality}
		Suppose that $x_1,x_2,\dots,x_k \leq \ell$ satisfy $\sum_{i=1}^k(\ell - x_i) = 1-\prod_{i=1}^k x_i$. Then the solution $x$ to $k\ell - kx = 1-x^k$ is at least the arithmetic mean of $\vec{x}$. That is,
		$$x\geq \left(\sum_{i=1}^k x_i\right)/k.$$
	\end{obs}
	\begin{proof}
		Let $x':=(\sum_{i=1}^k x_i)/k$. By the AM-GM inequality and the definition of $\ell$, we have that
		\begin{align*} 
		1-(x')^k & \leq  
		1-\prod_{i=1}^k x_i = k\cdot \ell - \sum_{i=1}^k x_i = k\cdot \ell - k\cdot x'.
		\end{align*}
		That is, $\ell \geq x' + (1-(x')^k)/k = \ell(x'\cdot \vec{1}_k)$.
		As $\frac{d}{dz} \ell(z\cdot \vec{1}_k) = \frac{d}{dz}\left(z + (1-z^k)/k\right) = 1 - z^{k-1} \geq 0$ for all $z\leq 1$, the univariate function $f(z) = \ell(z\cdot\vec{1}_k)$ is monotone increasing in $z$, and so we have that $\ell(x\cdot \vec{1}_k)\geq \ell(x' \cdot \vec{1}_k)$ implies the claimed inequality, namely that $x \geq x'$. 
	\end{proof}

	The next lemma will prove instrumental in proving our algorithm's competitiveness.
	\begin{lem}\label{any-prefix-suffices}
		If $x_1\leq x_2\leq \dots\leq x_{|N(t)|}$ and for some $k\leq |N(t)|$ it holds that $\ell\geq x_k$ and 
		\begin{equation*}
		\sum_{i=1}^{k} \left(\ell - x_i\right)^+ \geq \left(1-\frac{1}{e}\right)\cdot \left(1-y(\ell) + \sum_{i=1}^{k}\left(y(\ell) - y(x_i)\right)^+\right),
		\end{equation*}    
		then \Cref{primal-vs-dual-roundable-WF} holds. That is,
		$$\sum_{i=1}^{|N(t)|} \left(\ell - x_i\right)^+ \geq \left(1-\frac{1}{e}\right)\cdot \left(1-y(\ell) + \sum_{i=1}^{|N(t)|}\left(y(\ell) - y(x_i)\right)^+\right).$$
	\end{lem}
	\begin{proof}
		Denote by $k'\leq |N(t)|$ the largest index such that $x_{k'}\leq \ell$. 
		Then, we have that
		\begin{align*}
		\sum_{i=1}^{|N(t)|} \left(\ell - x_i\right)^+ & = 
		\sum_{i=1}^{k'} \left(\ell - x_i\right) \\
		& = \sum_{i=1}^{k} \left(\ell - x_i\right) + \sum_{i=k+1}^{k'} \left(\ell - x_i\right) \\
		& \geq \left(1-\frac{1}{e}\right)\cdot \left(1-y(\ell) + \sum_{i=1}^k (y(\ell) - y(x_i))\right) + \sum_{i=k+1}^{k'} \left(\ell - x_i\right) \\
		& \geq \left(1-\frac{1}{e}\right)\cdot \left(1-y(\ell) + \sum_{i=1}^k (y(\ell) - y(x_i))\right) + \left(1-\frac{1}{e}\right)\cdot \left(\sum_{i=k+1}^{k'} (y(\ell) - y(x_i))\right) \\
		& = \left(1-\frac{1}{e}\right)\cdot \left(1-y(\ell) + \sum_{i=1}^{k'}\left(y(\ell) - y(x_i)\right)\right) \\
		& = \left(1-\frac{1}{e}\right)\cdot \left(1-y(\ell) + \sum_{i=1}^{|N(t)|}\left(y(\ell) - y(x_i)\right)^+\right),
		\end{align*}
		where the last inequality follows from $\ell-x_i \geq \left(1-\frac{1}{e}\right)\cdot (y(\ell) - y(x_i)) = e^{\ell-1} - e^{x_i-1}$, which in turn follows from $e^{x-1}$ growing slower than $x$ in the domain $[0,1]$, where $\frac{d}{dx} e^{x-1} = e^{x-1}\leq 1 = \frac{d}{dx}x$.
	\end{proof}

	We now prove that \Cref{primal-vs-dual-roundable-WF} holds for all vectors $\vec{x}$.
	
	\begin{lem}\label{primal-dual-roundable-WF-general-proof}
		\Cref{primal-vs-dual-roundable-WF} holds for all vectors $\vec{x} \in \mathbb{R}^n$ with $\ell(\vec{x})$ as defined in \Cref{choice-of-ell}.
	\end{lem}
	\begin{proof}
		Let $k\leq n$ be the lowest index such $\sum_{i=1}^k (\ell - x_i) = 1-\prod_{i=1}^k x_i$. (Note that such an index must exist, by our choice of $\ell$.) 
		Then, we have in particular that $\ell \geq x_i$ for all $i\in [k]$, as the converse would imply $\sum_{i=1}^k (\ell- x_i) < \sum_{i=1}^k (\ell- x_i)^+ \leq 1-\prod_{i=1}^k x_i$.
		Let $x\in \mathbb{R}$ be the solution to $k\cdot (\ell - x) = 1-x^k$. 
		We rely on the definition of $x$ and $\ell$ to prove that \Cref{primal-vs-dual-roundable-WF} holding for $x\cdot \vec{1}_k$ implies the same inequality for $\vec{x}' = (x_1,x_2,\dots,x_k)$, which by \Cref{any-prefix-suffices} implies that \Cref{primal-vs-dual-roundable-WF} holds for $\vec{x}$. (Note that since $\sum_{i=1}^k (\ell - x_i) = 1-\prod_{i=1}^k x_i$, we have that $\ell(\vec{x}')=\ell(\vec{x})=\ell$.)
		
		First, we note that  \Cref{primal-vs-dual-roundable-WF} for $x\cdot \vec{1}_k$ is equivalent to 
		\begin{equation}\label{pd-roundable-WF-equiv-uniform}
		k\cdot \ell - k\cdot x + \left(1-\frac{1}{e}\right)\cdot k\cdot y(x) \geq \left(1-\frac{1}{e}\right)\cdot \left(1 + (k-1)\cdot y(\ell)\right).
		\end{equation}
		Similarly,  \Cref{primal-vs-dual-roundable-WF} for $\vec{x}'$ is equivalent to  
		\begin{equation}\label{pd-roundable-WF-equiv-general}
		k\cdot \ell - \sum_{i=1}^{k} x_i + \left(1-\frac{1}{e}\right)\cdot \sum_{i=1}^{k} y(x_i) \geq \left(1-\frac{1}{e}\right)\cdot \left(1 + (k-1)\cdot y(\ell)\right).
		\end{equation}
		
		To prove \Cref{pd-roundable-WF-equiv-general} we will prove that its LHS is greater than the LHS of \Cref{pd-roundable-WF-equiv-uniform}. 
		Equivalently, we will show that
		\begin{equation}\label{LHS-minus-LHS}
		\left(1-\frac{1}{e}\right)\cdot \sum_{i=1}^k y(x_i) - \sum_{i=1}^k x_i \geq
		\left(1-\frac{1}{e}\right)\cdot k \cdot y(x) - k\cdot x.
		\end{equation}
		
		Let $g(z) \triangleq \left(1-\frac{1}{e}\right)y(z)  - z = e^{z-1} - \frac{1}{e} - z$, and $X$ be a uniformly-random number in $\{x_1,x_2,\dots,x_k\}$. Then dividing both sides of \Cref{LHS-minus-LHS} by $k$ we find that this equation is equivalent to 
		\begin{align*}
		\mathbb{E}[g(X)] \geq g(x).
		\end{align*}
		To prove the above, we will prove the two following inequalities.
		\begin{align*}
		\mathbb{E}[g(X)] \geq g(\mathbb{E}[X]) \geq g(x).
		\end{align*}
		The first inequality follows from Jensen's Inequality and convexity of $g(z)$ in the domain $[0,1]$, where $g''(z) = e^{z-1}\geq 0$. The second inequality follows from $g(z)$ being non-increasing in the domain $[0,1]$, where $g'(z) = e^{z-1} - 1 \leq 0$, and \Cref{mean-inequality} implying
		$x \geq (\sum_{i=1}^k x_i)/k = \mathbb{E}[X]$.
	\end{proof}
	
	\Cref{roundable-WF-comp-ratio} follows from \Cref{primal-dual-roundable-WF-general-proof} and the preceding discussion.
	
	\subsection{Challenges: Stronger Constraints Needed}\label{sec:generalization-challenges}

	Similarly to our discussion for two-choice algorithms, condition \eqref{eqn:natural-condition-for-rounding} can be shown to be necessary for lossless online rounding in some scenarios. 
	Perhaps surprisingly, we show that unlike for two-choice algorithms, this natural generalization of Condition \eqref{rounding-weaker-condition} to multiple-choice algorithms is \emph{not sufficient} for lossless online rounding.
	
		\begin{restatable}{lem}{generalconditionneeded}
			There exists a three-choice fractional matching algorithm $\calA_f$ whose output $\vec{x}$ satisfies Condition \eqref{eqn:natural-condition-for-rounding}, such that for any randomized online matching algorithm $\calA$, there exists a graph on which the fractional matching of $\calA_f$ has value strictly greater than the expected matching size of $\calA$. That is, $\calA_f$ is not losslessly roundable.
		\end{restatable}

	\begin{proof}
		Our proof goes via Yao's Lemma \cite{yao1977lemma}. We consider the following distributions over graphs of maximum degree three, on which any fractional or randomized algorithm are trivially three-choice algorithms.
		We label the offline nodes $i_{1,1},i_{1,2},i_{2,1},i_{2,2},i_{3,1},i_{3,2}$.
		Each online node $t\in [3]$ neighbors the two offline nodes $i_{t,1},i_{t,2}$.
		In addition, we have two online nodes, $4$ and $5$, that neighbor a random node in each of the pairs $\{i_{t,1},i_{t,2}\}_{t\in [3]}$.
		The three-choice fractional matching we consider is the following:
		\[
		x_{i,t} = \begin{cases}
		0 & (i,t)\not\in E \\
		\frac{1}{2} & (i,t)\in E, i=i_{t,j}, t\in [3] \\
		\frac{7}{24} = \frac{1-\left(\frac{1}{2}\right)^3}{3}  & (i,t)\in E, t=4 \\
		\frac{6965}{41472} = \frac{1-\left(\frac{19}{24}\right)^3}{3} & (i,t)\in E, t=5.
		\end{cases}
		\]
		We note that $\vec{x}$ is a three-choice fractional matching satisfying Condition \eqref{eqn:natural-condition-for-rounding}. We note moreover that $x_{i,t} > \frac{1}{6}$ for $(i,t)\in E$ and $t=5$.
		Consequently, this fractional matching has value at least $\sum_{i,t} x_{i,t} > 3 + \frac{7}{8} + \frac{1}{2}$.
		We now proceed to show that every randomized algorithm $\calA$ outputs a matching $\calM$ of expected size $\E[|\calM|] = 3 + \frac{7}{8} + \frac{1}{2}$. That is, $\E[|\calM|] < \sum_{i,t} x_{i,t}$.
		
		Consider a deterministic algorithm $\calA'$ run on an input drawn from the above distribution. By simple exchange arguments, due to the one-sided vertex arrivals, we may safely assume that $\calA'$ is greedy, and matches whenever presented with an online node with at least one free neighbor \cite{karp1990optimal}.
		Therefore, precisely one node in each pair $\{i_{t,1},i_{t,2}\}$ is matched to online node $t$, and therefore the number of free nodes in the neighborhood of $4$ and $5$ before time $4$ is distributed $Y\sim \textrm{Bin}(3,1/2)$. Consequently, the expected number of online nodes matched among $4$ and $5$ is precisely $\E[\max\{2,Y\}] = 1\cdot \Pr[Y=1] + 2\cdot \Pr[Y\geq 2] = \frac{3}{8} + 1 = \frac{7}{8}+\frac{1}{2}$.
		We conclude that the matching $\calM$ output by the deterministic algorithm $\calA'$ on the above distribution has expected size
		strictly less than the value of the fractional matching output by $\calA_f$, namely
		$$\E[|\calM|] = 3+\frac{7}{8}+\frac{1}{2} < \sum_{i,t} x_{i,t}.$$
		Therefore, by Yao's Lemma, for each randomized algorithm $\calA$, one of the graphs in the support of the above distribution results in $\calA$ outputting a matching $\calM$ whose expected size is strictly smaller than the fractional matching satisfying Condition \eqref{eqn:natural-condition-for-rounding} output by the three-choice algorithm $\calA_f$.
	\end{proof}
	
	\begin{cor}
		Condition \eqref{eqn:natural-condition-for-rounding} is not sufficient to round multiple-choice algorithms losslessly.
	\end{cor}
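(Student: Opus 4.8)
The plan is to read this corollary off directly from the lemma just proved, which already does all the work. That lemma exhibits an explicit fractional matching algorithm $\calA_f$ that lives on graphs of maximum degree three---so that $|P_t|\le 3$ for every online node $t$, with $|P_t|=3$ for the online nodes $4$ and $5$, making $\calA_f$ a genuine \emph{multiple-choice} algorithm (some $t$ has $|P_t|>2$)---whose output $\vec{x}$ satisfies Condition~\eqref{eqn:natural-condition-for-rounding}, and which is nonetheless not losslessly roundable online. The corollary is precisely the qualitative restatement of this: a property that is held by a non-roundable multiple-choice algorithm cannot be a sufficient condition for lossless online rounding of multiple-choice algorithms. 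So after invoking the lemma there is essentially nothing left to argue.

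The only things I would spell out are bookkeeping. First, I would unfold ``losslessly roundable'': a lossless online rounding of $\vec{x}$ is a randomized online matching algorithm with $\Pr[(i,t)\in\calM]=x_{i,t}$ for all $(i,t)\in E$, and by linearity of expectation this forces $\E[|\calM|]=\sum_{i,t}x_{i,t}$ on \emph{every} input; the lemma produces, for every randomized $\calA$, an input on which this equality fails (indeed there $\E[|\calM|]=3+\nicefrac{7}{8}+\nicefrac{1}{2}<\sum_{i,t}x_{i,t}$), so no such rounding exists. Second, I would note that $\calA_f$ on these bounded-degree instances is a legitimate three-choice algorithm, and that Condition~\eqref{eqn:natural-condition-for-rounding} genuinely holds for $\vec{x}$---with equality on the relevant sets $I$ by design of the fractions $\nicefrac{7}{24}=\tfrac{1-(1/2)^3}{3}$ and $\nicefrac{6965}{41472}=\tfrac{1-(19/24)^3}{3}$. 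Both points are already verified inside the proof of the lemma, so at the level of the corollary there is nothing further to check.

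There is no real obstacle at the corollary level; all of the difficulty sits in the lemma. If I had to isolate its hard part, it is the design of the Yao distribution---three disjoint two-vertex gadgets, each probed by the two extra online nodes $4$ and $5$ that hit a uniformly random endpoint of every gadget---together with the observation that, after a greedy exchange argument reducing to deterministic greedy algorithms, the number $Y$ of free gadget-endpoints seen by online nodes $4,5$ is distributed as $\mathrm{Bin}(3,\nicefrac{1}{2})$, whence a short computation gives the expected number of these two online nodes that get matched as $1\cdot\Pr[Y=1]+2\cdot\Pr[Y\ge 2]=\nicefrac{3}{8}+1=\nicefrac{7}{8}+\nicefrac{1}{2}$. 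Once that is in place, the corollary follows with no additional argument.
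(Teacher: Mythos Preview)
Your proposal is correct and matches the paper's approach exactly: the paper states this corollary immediately after the lemma with no separate proof, treating it as an immediate consequence, and your write-up simply makes explicit the trivial deduction (unfolding ``losslessly roundable'' via linearity of expectation and noting that $\calA_f$ is genuinely multiple-choice). There is nothing to add.
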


	\section{A $2$-level Fractional Vertex-Weighted Algorithm}\label{sec:weightedfractional}

In this section we design a two-choice $2$-level fractional algorithm for the more general \emph{vertex-weighted} problem, where offline nodes have a weight associated with them, and we wish to output a matching of maximum weight.
We prove the following:
\begin{restatable}{thm}{vwalgo}\label{thm:vwalgo}
	There exists a fractional $2$-level $\nicefrac{11}{21}\approx 0.524$-competitive vertex-weighted online matching algorithm.
\end{restatable}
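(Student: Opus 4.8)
The plan is to mimic the structure of the unweighted $2$-level algorithm (Section \ref{sec:algtwo}) while accommodating vertex weights, and to verify the competitive ratio again by dual fitting against the LP of \Cref{fig:matching-lp}. First I would fix the two levels, which for the vertex-weighted case I expect to be $z_1$ and $z_2$ chosen so that the algorithm remains $2$-level, maximal and \algotype, i.e.\ satisfying \eqref{rounding-weaker-condition} with equality: this forces $z_2 = z_1 + (1-z_1^2)/2$, exactly as in the unweighted case, so $z_1 = \tfrac12$ and $z_2 = \tfrac78$ still. The algorithm, on arrival of online node $t$ with neighbors sorted by \emph{some} priority, must now decide \emph{which} two neighbors to raise and by how much; unlike the unweighted water-level rule, the choice should be weight-sensitive. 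The natural design is: among neighbors of $t$, consider raising a pair and, as in the unweighted algorithm, either (a) deterministically push one neighbor to $1$, or (b) if the two lowest-level neighbors are at a common level $z_i < z_k$, raise both to $z_{i+1}$ (a random step), or (c) perform a shift step; the decision between these should be made to maximize the marginal weighted dual progress. I would make this precise by a greedy/threshold rule on the quantity $w_i\cdot(\text{gain in }x_i)$ compared against $w_i\cdot g(x_i)$-type potentials, analogous to \cite{aggarwal2011online}.

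The core of the proof is the dual-fitting analysis. I would assign to each offline node $i$ a dual $y_i = w_i\cdot g(x_i^{(t)})$ for a fixed monotone increasing convex bijection $g:[0,1]\to[0,1]$, and to each online node $t$ with sorted neighbor fractional degrees $x^{(t)}_1\le x^{(t)}_2\le\dots$ the dual $y_t = \max_i\{w_i(1-g(x^{(t)}_i))\}$ over the eligible neighbors (or, more carefully, $y_t$ equal to the complementary value at the higher of the two raised nodes, scaled by its weight), which keeps the dual feasible since for every edge $(i,t)$ the monotonicity of $g$ and the fact that we only ever raise fractional degrees gives $y_i + y_t \ge w_i(g(x_i) + 1 - g(x_i)) = w_i$. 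Then, exactly as in the unweighted warm-up, I would bound $\Delta D / \Delta P \le 1/\alpha_g$ case by case over the three step types (deterministic, random, shift), where each inequality reduces—after dividing through by the relevant weight—to a one-variable inequality in the current level, analogous to Cases 1--4 of Section \ref{sec:algtwo}. Optimizing the dual coefficients $g(z_1), g(z_2)$ (the analogues of $17/38$ and $67/76$) subject to these constraints is a small linear/convex optimization whose optimum I would claim equals $\nicefrac{11}{21}$; weak duality then gives the competitive ratio. Finally I would note (as in \Cref{obs-klevel}) that the algorithm is $2$-level, maximal, and $2$-bit precise, so \Cref{thm:rounding-low-rand-intro} applies.

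The main obstacle I anticipate is the weight-sensitive \emph{selection} rule: in the unweighted case the water-level rule and the ``raise the two lowest'' rule coincide and are obviously optimal, but with weights one must argue that the greedy choice of which pair to raise (and whether to do a deterministic vs.\ random vs.\ shift step) never leaves dual value on the table, i.e.\ that the worst-case $\Delta D/\Delta P$ is still governed by the three clean cases above and not by some mixed configuration where, say, a high-weight node sits at level $z_1$ next to a low-weight node at level $z_2$. Handling this likely requires either an exchange argument showing the adversary's worst case has all relevant neighbors at equal weight, or a more careful potential that already ``charges'' the weight discrepancy; this is where the $\nicefrac{11}{21}$ (rather than the unweighted $\approx 0.527$) loss comes from. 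A secondary, more routine obstacle is checking the shift step preserves maximality and the precision bound of \Cref{def:precise}, but this should follow exactly as in \Cref{obs-klevel} since the level values $z_1=\tfrac12, z_2=\tfrac78$ are unchanged.
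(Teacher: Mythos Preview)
Your skeleton is right (same two levels $z_1=\tfrac12$, $z_2=\tfrac78$, dual fitting with values at the levels optimized to hit $\nicefrac{11}{21}$), but there is a genuine gap in how you treat the weights. You write that each case ``reduces---after dividing through by the relevant weight---to a one-variable inequality in the current level.'' It does not: there are \emph{two} weights $w_1,w_2$, and their ratio $w_2/w_1$ is a free adversarial parameter that cannot be normalized away. In the paper's proof this ratio appears explicitly in every case, and---crucially---the \emph{algorithm itself} branches on weight thresholds. For each ordered pair of levels $(x_1,x_2)\in\{0,\tfrac12,\tfrac78\}^2$ the algorithm compares $w_2$ to a specific constant (e.g.\ $\tfrac{1}{1-y_1}$, $\tfrac{1-y_1}{1-y_2}$, $\tfrac32$, $4$, $5.5$) and, depending on the outcome, either raises both nodes one level, or pushes one of them to $1$, or performs a shift step. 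The dual-fitting analysis then has seven level-configurations, each split into two weight regimes, and the tight ratio $\Delta D/\Delta P = \tfrac{21}{11}$ is attained at the \emph{boundary} weight in several of them.

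Your anticipated fix---an exchange argument forcing the adversary to equal weights---does not work and is not what the paper does: the worst case in, e.g., the $(x_1,x_2)=(\tfrac12,\tfrac78)$ configuration occurs at $w_2/w_1=4$, not at $1$. So the missing idea is precisely that the selection rule must be an explicit weight-threshold rule (not a weight-oblivious water-level), and the case analysis must carry the weight ratio as a second variable throughout. Once you build that in, the optimization over $(g(z_1),g(z_2))$ becomes a finite system with the $\approx 14$ sub-cases, yielding $y_1=\tfrac{5}{11}$, $y_2=\tfrac{79}{88}$ and ratio $\nicefrac{11}{21}$; the paper records this as a (tedious) claim verified case by case.
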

\begin{proof}
	The algorithm draws ideas from the 2-level algorithm for the unweighted case in Section \ref{sec:algtwo}, but is more involved, due to the offline weights adding another dimension of asymmetry.
	As in the unweighted case, the algorithm has two possible levels for the offline nodes: $z_1=\frac{1}{2}, z_2=\frac{7}{8}$. Let $w_i>0$ be the weight of node $i$.  At time $t$, let $w_1(1-y(x^{(t)}_{1}))\geq w_2(1-y(x^{(t)}_{2})) \geq \ldots, \geq w_k(1-y(x^{(t)}_{k}))$ be the neighbors of $t$ sorted by their dual slack. We again assume wlog that there are at least two neighbors. Otherwise, we add dummy neighbors with $x^{(t)}_{i}=1$ that will not change the behavior of the algorithm or the analysis. Let $\{1,2\}$ be the two offline nodes with the maximal slack (not necessarily sorted). By normalizing, we assume wlog of generality that $w_1=1$. We use the following (optimized) numbers: $y_1 = \frac{5}{11}, y_2=\frac{79}{88}$.
	The algorithm is defined by the following cases:
	\begin{itemize}
		\item $x_1=0,x_2=0$: If $w_2\leq \frac{1}{1-y_1}$ then set  $x_1=x_2=\frac{1}{2}$, otherwise set $x_1\gets 1$.
		\item $x_1=\frac{1}{2},x_2=\frac{1}{2}$: If $w_2\leq \frac{1-y_1}{1-y_2}$ then set  $x_1=x_2=\frac{7}{8}$, otherwise set $x_1\gets 1$.
		\item $x_1=\frac{7}{8},x_2=\frac{7}{8}$: If $w_2\leq 1$ then set  $x_1\gets 1$, otherwise set $x_2\gets 1$.
		\item $x_1=0,x_2=\frac{1}{2}$: If $w_2\leq \frac{3}{2}$ then set  $x_1\gets 1$, otherwise set $x_1\gets \frac{1}{2}, x_2\gets 1$.
		\item $x_1=0,x_2=\frac{7}{8}$: If $w_2\leq 5.5$ then set  $x_1\gets 1$, otherwise set $x_1\gets \frac{7}{8}, x_2\gets 1$.
		\item $x_1=\frac{1}{2},x_2=\frac{7}{8}$: If $w_2\leq 4$ then set  $x_1\gets 1$, otherwise set $x_2\gets 1$.
		\item $x_1<1,x_2=1$: set $x_1\gets 1$.
	\end{itemize}
	
	As in Observation \ref{obs-klevel} it is not hard to verify the following.
	\begin{obs}\label{vtx-weighted-maximal}
		The above algorithm is a $2$-level maximal \algotype algorithm that is $2$-bit precise.
	\end{obs}
	
	\paragraph{Analysis.}
	The analysis proceeds via a dual fitting argument. Let $\{1,2\}$ be the two vertices with the highest $w_i(1-x^{(t)}_{i})$. We note that $\{1,2\}$ are not numbered according to their slack, but rather according to their fractional degree, as in the algorithm's description. We assume wlog that $w_1=1$.
	We use dual values $y_1 =y(\frac{1}{2})=\frac{5}{11}, y_2 =y(\frac{7}{8})=\frac{79}{88}$. We note that if we give the online node a value of $\max\{\min\{1-y(x^{t}_1), w_2(1-y(x^{t}_2)\},1-y(x^{t+1}_1), w_2(1-y(x^{t+1}_2)\}$, it will satisfy all dual constraints for edges $(i,t)$ at time $t$ (and hence at all future times).
	We therefore have that the obtained dual's cost upper bounds the optimal matching's weight.
	To that end, we prove the following claim.
	\begin{restatable}{claim}{vmclaim}\label{claim:vwalgo}
		The changes to the weighted algorithm's dual and primal values at each time $t$ satisfy
		$$(\Delta P)_t \geq \frac{11}{21}\cdot (\Delta D)_t.$$
	\end{restatable}
	
	Before providing the (rather tedious) proof of the above claim, we note that it implies our claimed competitive ratio.
	Summing up over all time steps,  we have that by \Cref{claim:vwalgo} and weak duality, the primal gain (i.e., the fractional matching's value) is at least
	\begin{align*}
	P & = \sum_t (\Delta P)_t \geq \frac{11}{21} \cdot \sum_t (\Delta D)_t = \frac{11}{21} \cdot D \geq \frac{11}{21}\cdot OPT.\qedhere
	\end{align*}
\end{proof}

\begin{proof}[Proof of \Cref{claim:vwalgo}]
	Dropping the subscript $t$, since it will be clear from context,
	what we wish to prove is that for any online time $t$ and each of the cases in the algorithm's definition, we have that $\frac{\Delta D}{\Delta P}\leq 1+\frac{10}{11}$.
	We next analyze all cases, showing that in all cases the change in the primal value divided by the dual cost is at most $1+\frac{10}{11}$.
	
	\noindent{\bf Case 1  ($x_1=0,x_2=0$):} In this case we can assume wlog that $w_1=1$ and $w_2\geq 1$.
	If $w_2\leq \frac{1}{1-y_1} = \frac{38}{19}$ then, $x_1=x_2=\frac{1}{2}$. The value of the online node can be at most $\max\{1, w_2(1-y_1)\}\leq 1$. Thus,
	\[\frac{\Delta D}{\Delta P} = \frac{1+y_1+w_2 y_1}{\frac{1}{2}(1+w_2)}\leq \frac{1+y_1+ y_1}{1} = 1+\frac{10}{11}.\]
	If $w_2\geq \frac{1}{1-y_1}$ we set $x_2=1$ and we may set the value of the online node to $w_2$. In this case,
	\[\frac{\Delta D}{\Delta P} = \frac{1+w_2}{w_2}\leq \frac{1+\frac{1}{1-y_1}}{\frac{1}{1-y_1}} = 2-y_1= 1+ \frac{6}{11}.\]
	
	\noindent{\bf Case 2  ($x_1=\frac{1}{2},x_2=\frac{1}{2}$):}
	In this case we can assume wlog that $w_1=1$ and $w_2\geq 1$.
	If $w_2\leq \frac{1-y_1}{1-y_2} = \frac{16}{3}\approx 5.3$ then, $x_1=x_2=\frac{7}{8}$. The value of the online node can be set to at most $\max\{1-y_1, w_2(1-y_1)\}\leq 1-y_1$. Thus,
	\[\frac{\Delta D}{\Delta P} = \frac{1-y_1 +y_2-y_1+w_2(y_2-y_1)}{\frac{3}{8}(1+w_2)}\leq \frac{1-y_1 +y_2-y_1+1(y_2-y_1)}{\frac{3}{8}(1+1)} = 1+\frac{10}{11}.\]
	If $w_2\geq \frac{1-y_1}{1-y_2}$ we set $x_2=1$ and the value of the online node can be set to $1-y_1$. In this case,
	\[\frac{\Delta D}{\Delta P} = \frac{1-y_1+w_2(1-y_1)}{\frac{1}{2}w_2}\leq \frac{1-y_1+\frac{(1-y_1)^2}{1-y_2}}{\frac{1}{2}\frac{1-y_1}{1-y_2}} = \frac{1-y_2 + 1- y_1}{0.5}= 1+ \frac{13}{44}.\]
	
	\noindent{\bf Case 3  ($x_1=\frac{7}{8},x_2=\frac{7}{8}$):}
	In this case we can assume wlog that $w_1=1$ and $w_2\geq 1$.
	We set $x_2=1$. Thus,
	\[\frac{\Delta D}{\Delta P} = \frac{1-y_2+ w_2(1-y_2)}{\frac{1}{8}w_2}\leq  \frac{2-2y_2}{\frac{1}{8}} = 1+\frac{7}{11}.\]
	
	\noindent{\bf Case 4  ($x_1=0,x_2=\frac{1}{2}$):} Assume that $w_1=1$ If $w_2\leq \frac{3}{2}$ then we set $x_1=1$. In this case $w_2(1-y_1)\leq 1$ and so the online node can get value of $w_2(1-y_1)$. Thus,
	\[\frac{\Delta D}{\Delta P} = \frac{w_2(1-y_1) +1}{1}\leq \frac{\frac{3}{2}(1-y_1) +1}{1} = 1+\frac{9}{11} .\]
	If $w_2\geq \frac{3}{2}$ then we set $x_1=\frac{1}{2}, x_2=1$. In this case we may set the online node to 1.
	Thus,
	\[\frac{\Delta D}{\Delta P} = \frac{1+ y_1 + w_2(1-y_1)}{\frac{1}{2}(1+w_2)}\leq\frac{1+ y_1 +  \frac{3}{2}(1-y_1)}{\frac{1}{2}(1+ \frac{3}{2})} =1+ \frac{9}{11}.\]

	\noindent{\bf Case 5  ($x_1=0,x_2=\frac{7}{8}$):}
	Assume that $w_1=1$ If $w_2\leq 5.5$ then we set $x_1=1$. In this case  $w_2(1-y_2)\leq 1$ and we may set the online node to $w_2(1-y_2)$. Thus,
	\[\frac{\Delta D}{\Delta P} = \frac{w_2(1-y_2) +1}{1}\leq \frac{5.5(1-y_2) +1}{1} = 1+\frac{9}{16} .\]
	If $w_2\geq 5.5$ then we set $x_1=\frac{7}{8}, x_2=1$. In this case we may set the online node to 1.
	Thus,
	\[\frac{\Delta D}{\Delta P} = \frac{1+ y_2 + w_2(1-y_2)}{\frac{7}{8} + \frac{1}{8}w_2}\leq\frac{1+ y_2 + 5.5(1-y_2)}{\frac{7}{8} + \frac{5.5}{8}} \approx 1.57 .\]
	
	\noindent{\bf Case 6  ($x_1=\frac{1}{2},x_2=\frac{7}{8}$):}
	Assume that $w_1=1$. If $w_2\leq 4$ then we set $x_1=1$. In this case $w_2(1-y_2)\leq 1$ and we may set the online node to $w_2(1-y_2)$. Thus,
	\[\frac{\Delta D}{\Delta P} = \frac{w_2(1-y_2) +1-y_1}{\frac{1}{2}}\leq \frac{4(1-y_2) +1-y_1}{\frac{1}{2}} = 1+\frac{10}{11}.\]
	If $w_2\geq 4$ then we set $x_2=1$. In this case we may set the online node to $1-y_1$.
	Thus,
	\[\frac{\Delta D}{\Delta P} = \frac{1-y_1 + w_2(1-y_2)}{\frac{1}{8}w_2}\leq  \frac{1-y_1 + 4(1-y_2)}{\frac{4}{8}} =1+ \frac{10}{11} .\]
	
	\noindent{\bf Case 7  ($x_1<1,x_2=1$):} In this case we may set the online node to $0$. Then,
	\begin{align*}\frac{\Delta D}{\Delta P} & = \frac{1-y(x_1)}{1-x_1}\leq  \frac{12}{11}.\qedhere \end{align*}
\end{proof} 

	\section{Construction of Small-Bias Probability Spaces}\label{sec:small-bias}
\newcommand{\eqdef}{\stackrel{\Delta}{=}}
In this section we prove \Cref{eps-k-constructions} for the sake of completeness, as it pertains to the construction of $(\delta,k)$-dependent distributions with $\delta>0$ that we use.

\smallbiaslem*

We describe a construction suggested by \citet{naor1993small}. Consider a distribution $\mathcal{\mathcal{D}}$ over $\{0,1\}^n$ random variables.
\begin{Def}
  The \emph{bias} of a subset $S \subseteq \{ 1 , \ldots , n \}$ for
  a distribution $\mathcal{D}$ is
  $${\rm bias}_\mathcal{D}(S):=\left|\Pr_\mathcal{D}\left[\sum_{i \in S} x_i \equiv 0 \pmod 2
  \right]-
  \Pr_\mathcal{D}\left[\sum _{i \in S} x_i \equiv 1 \pmod 2 \right]\right|.$$
\end{Def}
We say that $\mathcal{D}$ is \emph{$k$-wise $\eps$-biased} if for every $S \subseteq \{ 1 , \ldots , n \}$ of size at most $|S|\leq k$, we have ${\rm bias}_\mathcal{D}(S) \leq \eps$.
It is shown by \citet{naor1993small} that random variables that are $k$-wise $\eps$-biased are also $(\delta,k)$-dependent  for
$\delta = 2^{{k \over 2}}\cdot \eps$.

A construction by \cite{ABI86} for uniform $k$-wise random variables goes as follows.
Let $v_1,\ldots,v_n \in \{0,1\}^h$ be vectors that are
linearly $k$-wise independent over
GF[2], with $h = {k \over 2} \log n$. Such vectors are known to exist, and moreover can be constructed in polynomial time (e.g., rows of a parity check matrix of a BCH code).
Choose $r \in \{0,1\}^h$ uniformly at random  and
define $x_i := \langle v_i,r\rangle $, for $i=1,\ldots,n$.  It is easy to see that the resulting random
variables $x_1,x_2,\dots ,x_n$ over $\{0,1\}^n$ are $k$-wise independent.

To construct $k$-wise $\eps$-biased random variables, \citet{naor1993small} use the same construction as above,
except that $r$ is now sampled from an $\eps$-biased distribution over $h$ random variables (instead of a uniformly i.i.d. source).

\begin{claim}
  The $n$ random variables constructed by sampling $r$ from an
  $\eps$-biased source are $k$-wise $\eps$-biased random variables.
\end{claim}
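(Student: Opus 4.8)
The plan is to reduce any parity test on the constructed variables $x_1,\dots,x_n$ to a parity test on the underlying $\eps$-biased source $r$. Fix a nonempty set $S\subseteq[n]$ with $|S|\le k$ (the empty set is excluded from the definition of small bias, as is standard). Since each $x_i=\langle v_i,r\rangle$ is $\mathrm{GF}[2]$-linear in $r$, linearity of the inner product in its first argument over $\mathrm{GF}[2]$ gives
\[
\sum_{i\in S}x_i \;\equiv\; \sum_{i\in S}\langle v_i,r\rangle \;\equiv\; \Big\langle \sum_{i\in S}v_i,\; r\Big\rangle \pmod 2 .
\]
Writing $w_S:=\sum_{i\in S}v_i\in\{0,1\}^h$ and letting $T:=\{\,j\in[h]\;:\;(w_S)_j=1\,\}$ be its support, this says that, as random variables, $\sum_{i\in S}x_i$ and $\sum_{j\in T}r_j$ are equal modulo $2$. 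Hence, if $\mathcal D$ denotes the distribution of $(x_1,\dots,x_n)$, we have $\mathrm{bias}_{\mathcal D}(S)=\mathrm{bias}_{r}(T)$.

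The second step is to show $T\neq\emptyset$, i.e.\ $w_S\neq 0$ as a vector in $\{0,1\}^h$. This is precisely where linear $k$-wise independence of $v_1,\dots,v_n$ over $\mathrm{GF}[2]$ enters: since $1\le|S|\le k$, the vectors $\{v_i\}_{i\in S}$ are linearly independent, so their combination $w_S$ (with all coefficients equal to $1$, hence a nontrivial combination) cannot be the zero vector. Therefore $T$ is a nonempty subset of $[h]$, and since $r$ is drawn from an $\eps$-biased distribution over its $h$ coordinates, every nonempty parity test on $r$ — in particular the one indexed by $T$ — has bias at most $\eps$. Combining with the first step, $\mathrm{bias}_{\mathcal D}(S)=\mathrm{bias}_{r}(T)\le\eps$. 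As $S$ was an arbitrary nonempty set of size at most $k$, the variables $x_1,\dots,x_n$ are $k$-wise $\eps$-biased, which is the claim.

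I do not expect a genuine obstacle here; the only points requiring care are the $\mathrm{GF}[2]$ bookkeeping in the first step and invoking \emph{linear} (not merely statistical) $k$-wise independence of the $v_i$ in the second. To then conclude \Cref{eps-k-constructions}, I would invoke the already-stated fact of \citet{naor1993small} that $k$-wise $\eps$-biased variables are $(2^{k/2}\eps,k)$-dependent, set $\eps:=\delta\cdot 2^{-k/2}$, and use the standard bound that an $\eps$-biased distribution on $m$ bits is samplable with $O(\log m+\log(1/\eps))$ random bits; with $m=h=\tfrac{k}{2}\log n$ this yields $O(\log\log n + k + \log(1/\delta))$ random bits overall, and, since each $x_i=\langle v_i,r\rangle$ is an inner product of two $h$-bit vectors computable once the polynomial-time-constructible $v_i$ (e.g.\ rows of a BCH parity-check matrix) are fixed, the claimed $O(k\log n)$ per-variable sampling time.
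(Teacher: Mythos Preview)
Your proof is correct and follows essentially the same approach as the paper: both reduce the parity $\sum_{i\in S}x_i$ to $\langle \sum_{i\in S}v_i,\,r\rangle$ by linearity, use linear $k$-wise independence of the $v_i$ to conclude that $w_S:=\sum_{i\in S}v_i\neq 0$, and then invoke the $\eps$-biased property of $r$ on the support of $w_S$. Your subsequent sketch for deducing \Cref{eps-k-constructions} (setting $\eps=\delta\cdot 2^{-k/2}$ and accounting for the seed length of an $\eps$-biased source on $h=\tfrac{k}{2}\log n$ bits) also matches the paper's derivation.
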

\begin{proof} For every subset $S$ of cardinality at most $|S| \leq k$, we have
\[\sum_{i \in S}x_i = \sum_{i \in S} \langle v_i,r \rangle  =
r \cdot \sum_{i \in S} v_i = r \cdot M_S,\]
where $M_S \eqdef \sum_{i \in S} v_i$.  Since the vectors are $k$-wise
independent, $M_S \neq \vec{0}$. Denoting by $I\subseteq [h]$ the set of indices $i$ for which $(M_S)_i = 1$, we have

\begin{align*}
{\rm bias}(S) & = \left|\Pr\left[\sum_{i \in S}x_i\equiv 0 \pmod 2\right] - \Pr\left[\sum_{i \in S}x_i\equiv 1 \pmod 2\right]\right| \\
          & = \left|\Pr\left[r \cdot M_S \equiv 0 \pmod 2\right] - \Pr\left[r \cdot M_S \equiv 1 \pmod 2\right]\right| \\
          & = \left|\Pr\left[\sum_{i\in I} r_i \equiv 0 \pmod 2\right] - \Pr\left[\sum_{i\in I} r_i \equiv 1 \pmod 2\right]\right| \\
          & \leq \eps.\qedhere
\end{align*}
\end{proof}
Thus, the number of $\eps$-biased random variables required for the construction is $h=k \log n$. The cardinality of a sample space of an $\eps$-biased distribution over $h$ variables constructed by \cite{naor1993small} is linear in the number of random variables. Specifically, it is of size $O(h/\eps^3)$.
Therefore, we need $\log (O(h/\eps^3))$ random bits to sample uniformly at random from this distribution, from which we obtain $r$.
Finally, in order to compute any $x_i = \langle v_i, r\rangle$, we only spend $O(h) = O(k\cdot \log n)$ time. As mentioned before, the vectors $v_i$ can be computed in polynomial time \cite{naor1993small}.
This concludes the proof of
Lemma \ref{eps-k-constructions}, since to obtain $(\delta,k)$-dependence, we need $\delta\leq 2^{{k \over 2}}\cdot \eps$, and so picking $\eps = \delta\cdot 2^{-{k \over 2}}$ yields a $(\delta,k)$-dependent distribution, using $\log (O(h/\eps^3)) = \log\log n + O(k+\log(1/\delta)) + O(1)$ random bits.

	\bibliographystyle{acmsmall}
	\bibliography{abb,ultimate}
	\end{document}